\newcommand{\mycomment}[1]{}
\newcommand{\R}{\ensuremath{\Bbb{R}}}
\newcommand{\Z}{\ensuremath{\Bbb{Z}}}
\newcommand{\N}{\ensuremath{\Bbb{N}}}
\newcommand{\E}{\ensuremath{\Bbb{E}}}
\newcommand{\Var}{\ensuremath{\mathrm{Var}}}
\newcommand{\Cor}{\ensuremath{\mathrm{Cor}}}
\newcommand{\Cov}{\ensuremath{\mathrm{Cov}}}
\newcommand{\leb}{\ensuremath{\mathrm{Leb}}}
\newcommand{\Log}{\ensuremath{\mathrm{Log}}}
\newcommand{\drift}{\ensuremath{\zeta}}
\newcommand{\lev}{\ensuremath{\ell}} 
\newcommand{\indicator}{\ensuremath{\mathbb{I}}}
\def\Levy{L\'{e}vy }
\def\levy{L\'{e}vy}
\newcommand{\ind}{\ensuremath{\Bbb{I}}}
\newtheorem{theorem}{Theorem}
\newtheorem{assumption}{Assumption}
\newtheorem{definition}[theorem]{Definition}
\newtheorem{example}{Example}
\newtheorem{lemma}{Lemma}
\newtheorem{proposition}{Proposition}
\newtheorem{remark}{Remark}
\begin{document}
\title{
Periodic trawl processes: Simulation, statistical inference and applications in energy markets
}

 \author{ \textsc{Almut E.~D.~Veraart}\\
 \textit{Department of Mathematics, Imperial College London}\\
 \textit{ 180 Queen's Gate, 
 London, SW7 2AZ, 
 UK}  \\
 \texttt{a.veraart@imperial.ac.uk}
 }
\maketitle

\begin{abstract}
This article introduces the class of periodic trawl processes, which are continuous-time, infinitely divisible, stationary stochastic processes, that allow for periodicity and flexible forms of their serial correlation, including both short- and long-memory settings.  We derive some of the key probabilistic properties of periodic trawl processes and present relevant examples. Moreover, we show how such processes can be simulated and establish the asymptotic theory for their sample mean and sample autocovariances. Consequently, we prove the asymptotic normality of a (generalised) method-of-moments estimator for the model parameters. 
We illustrate the new model and estimation methodology in an application to electricity prices. 
\end{abstract}
\noindent
{\it MSC2020 subject classifications:} Primary 62M15, %Inference from stochastic processes and spectral analysis
62F99, %parameteric inference
60G10; % stationary stochastic processes
 secondary 
 62P99, % Applications of statistics
 60F05, %Central limit and other weak theorems
  60G57, %Random measures
  60E07. %Infinitely divisible distributions; stable distributions
\\
{\it Keywords:} Periodic trawl process, L\'{e}vy basis,  infinite divisibility, simulation, asymptotic theory, weak dependence, generalised method of moments.\\
\maketitle{}

\section{Introduction}
Time series with periodic behaviours are ubiquitous and are present in many applications such as climate science, meteorology, oceanography, hydrology, economics, and communications networks. Our article is particularly motivated by applications in energy, carbon, and temperature markets. 
Here, we need to consider different types of seasonalities and periodicities when, for example, modelling (renweable) energy production, demand and prices that are influenced by the seasonal weather patterns, as well as different daily and weekly fluctuations and economic cycles.

Traditionally, stochastic models for electricity prices focussed on mean-reverting processes since spot prices can be regarded as equilibrium prices balancing supply and demand.  \cite{BNBV2013} argued that, in order to achieve a weaker version of mean reversion, one could model prices directly by a stationary process. We follow the same idea in this article and are interested in developing a flexible class of stationary stochastic processes which can allow for periodic behaviour rather than employing widely-used seasonal-trend  decomposition models. 

There is a broad literature on cyclostationary processes (and their extensions), which are typically constructed by combining stationary processes with periodic functions.
The textbook by 
 \cite{HurdMiamee2007} contains detailed historical background and 
 \cite{GARDNER2006} provides an excellent review; the name \emph{cyclostationary} processes  first appears in  
\cite{Bennett1958}, whereas other authors referred to them as \emph{periodically correlated} processes, see e.g.~\cite{G1961}, 
\cite{Hurd1969}, and \cite{GentonHall2007}, 
\cite{DasGenton2021} for more recent work.
There has been much interest in developing suitable statistical methodologies for periodic and cyclostationary processes and their extensions. For instance,
\cite{QuinnThompson1991} focuses on  estimating the frequency of a periodic function, whereas	
\cite{HallReimannRice2000} presents a nonparametric estimator of a periodic function.

Moreover, from a stochastic modelling perspective, it is relevant to construct models which can combine periodicity with either short- or long-memory settings, with the latter being typically more challenging. However, there is a long 
tradition in time series analysis tackling long-memory settings and incorporating periodicities: Fractional differencing, introduced by \cite{Hosking1981} and further developed by \cite{Andel1986} lead to long memory time series models and fractionally integrated ARMA (FARIMA) processes, in particular. The processes were further generalised by 
\cite{GZW1989}, who introduced Gegenbauer and Gegenbauer ARMA (GARMA) processes, see also \cite{WCG1998} for a {$k$}-factor GARMA long-memory model and 
\cite{ELORM2015} for extensions to Gegenbauer random fields.
The concept of cyclical long-memory processes and the appropriate statistical methodology has been further explored in  
\cite{AR2000}, \cite{GHR2001}, \cite{FG2001} and \cite{A2002}.
For example, \cite{HS2004} study the estimation of the location and exponent of the spectral singularity of a long-memory process. They maximise the periodogram to estimate the location of the singularity, in the spirit of \cite{Y2007}, who proved consistency in the Gaussian case and proved consistency under weaker (non-Gaussian) assumptions \cite{HS2004}.
They used a GPH estimator, see \cite{GPH1983}, for the long-memory parameter. Also, \cite{H2005} discusses the semiparametric estimation of stationary processes whose spectra have an unknown pole.
\cite{Maddanu2022} presents a harmonically weighted filter for cyclical long-memory processes.
In practice, one often estimates the parameters of a cyclical/seasonal long-memory process in two steps: First, one estimates the location of the singularity in the spectral density function, second, with the location parameter now fixed, one estimates the long memory parameter, see \cite{FG2001} and  \cite{ELORM2015}.
More recently, a joint estimation of the periodicity and the long-memory parameter have been developed in 
\cite{AAFO2020,AFNO2022}.

While there is already extensive literature on discrete-time periodic processes with a short  or long memory, much less is known in the continuous-time setting. To fill this gap, this article introduces the class of \emph{periodic trawl processes}, which are continuous-time, infinitely divisible, stationary stochastic processes, which allow for periodicity and flexible forms of their serial correlation, including both short- and long-memory settings. 

Periodic trawl processes extend the class of trawl processes first introduced under this name in \cite{BN2011} but were earlier mentioned (under a different name) in  
\cite{WolpertTaqqu2005} and \cite{WolpertBrown2012}. 
Motivated by applications in high-frequency financial data,  
\cite{BNLSV2014} and \cite{VERAART2019} introduced integer-valued trawl (IVT) processes in a univariate and multivariate framework, respectively. 
The statistical methodology for such processes was further developed in 
\cite{BLSV2023,SheYan,Shephard2016}.
Moreover, it has been shown that trawl processes can be applied in 
 hierarchical models for extremes, see  \cite{Noven,CourgeauVeraart2022,Bacro}.
There is also a growing interest in discrete-time versions of trawl processes, which were first introduced by \cite{DoukJakLopSurg18}, see also \cite{DRR2020},  and limit theorems for discrete- and continuous-time trawl processes, see for instance  \cite{GRAHOVAC2018235,Paulau,TalTre2019,PPSV2021}.
   
\subsection{Outline and main contributions of the article}
The outline for the remainder of the article is as follows.
Section \ref{sec:background} briefly reviews mixed moving average processes (MMAPs) and then defines (periodic) trawls as special cases of MMAPs. 
It  presents some of the key probabilistic properties of periodic trawl processes and relevant examples. 
Next, Section \ref{sec:simulationalg} describes how (periodic) trawl processes can be simulated and the corresponding {\tt R}  code is 
available via the {\tt R} package {\tt ambit} on CRAN, see \cite{ambit}.
Next, we tackle the question of how the parameters of periodic trawl processes can be estimated. 
In order to develop a suitable inference methodology for periodic trawl processes, we first develop a general asymptotic theory for MMAPs in Section \ref{sec:asymptotictheory}.  
Here we present central limit theorems for the sample mean, sample autocovariance and sample autocorrelations of MMAPs. Next,
Section \ref{sec:inference} tailors the asymptotic theory to the case of (periodic) trawl processes and develops suitable moment-based estimators. Our main focus here is on estimating the parameters of the kernel function in the periodic trawl process, which determine the decay of the autocorrelation function as well as the periodicity. This theory will be developed under the assumption that the periodicity of the process is known.
Next, the more general setting is presented in Section \ref{sec:inferenceGMM}, where the asymptotic theory is developed for a generalised method-of-moment approach.
The new estimation methodology is illustrated in an empirical study of electricity prices in Section \ref{sec:empirics}; 
the code used in the empirical study is
 available on GitHub
and archived in Zenodo, see \cite{PeriodicTrawl-Energy}.
 Section \ref{sec:conclusion} concludes.
 The proofs of the theoretical results, additional examples and detailed discussions of the technical assumptions are relegated to the Appendix, see  Section \ref{ap:proofs}.

\section{Mixed moving average  and periodic trawl processes}\label{sec:background}
Let us start off by reviewing the definition of the broader class of mixed moving average processes (MMAPs), see for instance \cite{SRMC1993, F2005, FuchsStelzer2013}; we will  then introduce periodic trawl processes as a special case of MMAPs. 

\subsection{Background}
We briefly review the concept of 
\Levy bases. 
To this end, consider a Borel set 
 $S\subseteq \R$  and denote by $\mathcal{S}=\mathcal{B}(S)$  the Borel $\sigma$-algebra on $S$ and  by $\mathcal{B}_b(S)$  the bounded Borel sets of $\mathcal{S}$.

\begin{definition}\label{def:LevyBasis}
	Let $ L=\{ L(A)\,:\,A\in\mathcal{B}_b(S)\}$ denote a collection of $\R$-valued  random variables. Then, $L$ is called an $\R$-valued \emph{L\'{e}vy basis}  on $S$ if  the following three conditions hold: 
1) Random measure: For any
		sequence $A_{1},A_{2},\dots$ of disjoint elements of 
		$\mathcal{B}_b(S)$ 
		satisfying $\cup _{j=1}^{\infty }A_j\in \mathcal{B}_b(S)$
		it holds that   $L\left( \cup _{j=1}^{\infty }A_j\right) =\sum_{j=1}^{\infty }
		 L(A_{j}) $ a.s..
	2) Independent scatteredness: For any
		sequence $A_{1},A_{2},\dots$ of disjoint elements of 
		$\mathcal{B}_b(S)$, the random variables  
		$L(A_{1}), L(A_{2}),\dots$ are independent.
		3) Infinite divisibility: For any
		$A\in \mathcal{B}_b(S)$, the law of $L(A)$ is infinitely divisible (ID).
	\end{definition}

Recall that a \Levy basis admits 
a \levy-Khintchine representation, see  \citet[Proposition 2.1 (a)]{RajputRosinski1989}. 
\begin{proposition}\label{prop:Cha5_LevKhi}
	Let $L$ denote a \Levy basis, and let $\theta \in \R$ and $A \in \mathcal{B}_b(S)$, then the cumulant function of $L$ is given by $C(\theta; L(A))=\Log\left(\mathbb{E}(\exp(i \theta L(A))\right)$, where
	\begin{align}\begin{split}\label{eq:LevKhi}
			C(\theta; L(A))%&=\Log\left(\mathbb{E}(\exp(i \theta L(A))\right)\\
			& = i \theta \drift^*(A) - \frac{1}{2}\theta^2 a^*(A) + \int_{\mathbb{R}}\left(e^{i\theta \xi}-1-i\theta \xi \mathbb{I}_{[-1,1]}(\xi)\right) n(d\xi, A), 
	\end{split}\end{align}
	where   
	$\drift^*$ is a signed measure on $\mathcal{B}_b(S)$, $a^*$ is a measure on $\mathcal{B}_b(S)$, and 
	$n(\cdot,\cdot)$ is the generalised L\'{e}vy measure, i.e.~$n(d\xi, A)$ is a L\'{e}vy measure on $\mathbb{R}$ for fixed $A \in \mathcal{B}_b(S)$ and a measure on $\mathcal{B}_b(S)$ for fixed $dx$.
\end{proposition}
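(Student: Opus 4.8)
The plan is to deduce the representation from the classical \levy--Khintchine theorem for a single infinitely divisible law, combined with the random-measure and independent-scatteredness axioms; this is the route taken in \citet{RajputRosinski1989}, and I sketch it here. Fix $A\in\mathcal{B}_b(S)$. Since $L(A)$ is infinitely divisible by Definition~\ref{def:LevyBasis}(3), the classical \levy--Khintchine theorem supplies a \emph{unique} triplet $(\drift^*(A),a^*(A),\nu_A)$, where $\drift^*(A)\in\R$, $a^*(A)\ge 0$ and $\nu_A$ is a \levy measure on $\R$, such that $C(\theta;L(A))$ is given by \eqref{eq:LevKhi} with $n(\cdot,A):=\nu_A(\cdot)$. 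This defines the three set functions $\drift^*$, $a^*$ and $A\mapsto n(\cdot,A)$; what remains is to check their measure-theoretic properties in the $A$-variable.

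Next I would establish finite additivity. If $A,B\in\mathcal{B}_b(S)$ are disjoint, axiom~(1) gives $L(A\cup B)=L(A)+L(B)$ a.s.\ and axiom~(2) gives independence of $L(A)$ and $L(B)$, so the cumulant functions add: $C(\theta;L(A\cup B))=C(\theta;L(A))+C(\theta;L(B))$ for every $\theta\in\R$. By uniqueness of the \levy--Khintchine triplet, each component is additive, i.e.\ $\drift^*(A\cup B)=\drift^*(A)+\drift^*(B)$, $a^*(A\cup B)=a^*(A)+a^*(B)$, and $\nu_{A\cup B}=\nu_A+\nu_B$ as measures on $\R$; in particular $a^*\ge 0$ is monotone and $\nu_A\le\nu_{A'}$ whenever $A\subseteq A'$.

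The real work is upgrading finite to countable additivity. Let $A_1,A_2,\dots\in\mathcal{B}_b(S)$ be disjoint with $A:=\cup_j A_j\in\mathcal{B}_b(S)$. By axiom~(1), $L(A)=\sum_j L(A_j)$ a.s., and since the summands are independent the partial sums $S_n:=\sum_{j\le n}L(A_j)$ converge in distribution to $L(A)$; by the finite additivity just shown, $S_n$ is infinitely divisible with triplet $\bigl(\sum_{j\le n}\drift^*(A_j),\ \sum_{j\le n}a^*(A_j),\ \sum_{j\le n}\nu_{A_j}\bigr)$. Now I invoke the continuity theorem for infinitely divisible laws (the convergence-of-triplets criterion): weak convergence of $S_n$ forces convergence of the drift terms, of the Gaussian variances, and of the quantities $\sum_{j\le n}\int_{\R}(\xi^2\wedge1)\,\nu_{A_j}(d\xi)$, and it forces the measure-valued partial sums $\sum_{j\le n}\nu_{A_j}$ to converge vaguely on $\R\setminus\{0\}$. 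Since $\sum_{j\le n}a^*(A_j)$ and $\sum_{j\le n}\nu_{A_j}(\cdot)$ are nondecreasing in $n$, their limits exist as measures and, by the continuity theorem, must equal $a^*(A)$ and $\nu_A$ respectively; hence $\sum_j a^*(A_j)=a^*(A)<\infty$ and $\sum_j\nu_{A_j}=\nu_A$, with $\nu_A$ a genuine \levy measure. This shows $a^*$ is a measure on $\mathcal{B}_b(S)$ and, for each Borel $B\subseteq\R$ with $0\notin\overline{B}$, that $A\mapsto n(B,A)$ is $\sigma$-additive; together with $n(\cdot,A)$ being a \levy measure for fixed $A$ by construction, this is the required generalised \levy measure property, the joint measurability following from a monotone-class argument on the rectangles $B\times A$. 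For the drift, one uses that every subfamily and every reordering of $\{A_j\}$ is again an admissible disjoint family with union in $\mathcal{B}_b(S)$, so $\sum_j\drift^*(A_j)$ converges unconditionally, hence absolutely, to $\drift^*(A)$; therefore $\drift^*$ is a signed measure.

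The main obstacle is this third step: passing to the limit requires the quantitative form of the continuity theorem for infinitely divisible laws so that all three components of the triplet are controlled simultaneously under weak convergence, and, since monotonicity is unavailable for the signed drift, a separate unconditional-implies-absolute convergence argument is needed there.
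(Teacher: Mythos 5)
The paper offers no proof of this proposition; it is quoted directly from \citet[Proposition 2.1 (a)]{RajputRosinski1989}, and your sketch reconstructs essentially the standard argument given there (classical L\'evy--Khintchine for each fixed $A$, additivity of the triplet from independence plus uniqueness, countable additivity via the convergence-of-triplets criterion together with monotonicity of the partial sums, and unconditional-implies-absolute convergence for the signed drift). One caveat of phrasing only: weak convergence of infinitely divisible laws does not by itself force convergence of the Gaussian variances separately (only of $a_n+\int_{|\xi|\le\varepsilon}\xi^2\,\nu_n(d\xi)$ in the appropriate double-limit sense), but your appeal to the monotonicity of $\sum_{j\le n}a^*(A_j)$ and $\sum_{j\le n}\nu_{A_j}$ correctly closes that gap, so the argument stands.
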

 Here $\Log$  denotes the \emph{distinguished logarithm}, see \citet[p.~33]{Sato1999}.

 In the following, we shall restrict our attention to homogeneous \Levy bases. 
\begin{definition}[Homogeneous \Levy basis]
	A \Levy basis $L$ with L\'{e}vy-Khintchine representation \eqref{eq:LevKhi} is called homogeneous if we have the following representations for ${\bf z}\in \R^2$:
\begin{align}\label{eq:CQ}
	\drift^*(d{\bf z})=\drift d{\bf z}, &&
a^*(d{\bf z})= a d{\bf z}, &&
	n(d\xi, d{\bf z}) = \lev(d\xi) d{\bf z},
\end{align}
for constants $\drift \in \R, a\geq 0$ and a \Levy measure $\lev$. We call $(\drift, a, \lev)$ the characteristic triplet associated with $L$.
\end{definition}
In the case of a homogeneous \Levy basis with characteristic triplet $(\drift, a, \lev)$
the \levy-Khintchine representation %
 simplifies to  
	\begin{align}\begin{split}\label{eq:LevKhiHom}
			C(\theta; L(A))
			& = \leb(A)\left[i \theta \drift  - \frac{1}{2}\theta^2 a  + \int_{\mathbb{R}}\left(e^{i\theta \xi}-1-i\theta \xi \mathbb{I}_{[-1,1]}(\xi)\right) \lev(d\xi)\right],
	\end{split}\end{align}
for $\theta \in \R$ and $A \in \mathcal{B}_b(S)$.
In the following, we shall denote by $L'$ the \emph{\Levy seed} associated with the \Levy basis with characteristic triplet $(\drift, a, \lev)$, i.e.~a random variable with cumulant function given by 
	\begin{align}
		\label{eq:LevSeed}
		C(\theta; L')=\Log\left(\mathbb{E}(\exp(i \theta L'))\right)
		 = i \theta \drift  - \frac{1}{2}\theta^2 a  + \int_{\mathbb{R}}\left(e^{i\theta \xi}-1-i\theta \xi \mathbb{I}_{[-1,1]}(\xi)\right) \lev(d\xi),
\end{align}
for $\theta \in \R$.
We recall that a homogeneous \Levy basis admits a L\'{e}vy-It\^o representation, cf.~\citet[Proposition 4.5]{Ped03}:
\begin{proposition}\label{pro:LevIto}
	For a homogeneous \Levy basis $L$ with a characteristic triplet $(\drift, a, \lev)$  there exists a modification $L^*$ with the same characteristic triplet that has the following L\'{e}vy-It\^{o} decomposition. Let $A\in \mathcal{B}_b(S)$, then 
	\begin{align}\label{eq:Cha5_LevyIto}
		L^*(A) 
		&= \drift \leb(A)+ W(A) +
		\int_{\{|y|\leq 1\}} y(N- n)(dy,A)
		+ \int_{\{|y|> 1\}} y N(dy,A), 
	\end{align}
		for a Gaussian basis $W$, with characteristic triplet $(0, a, 0)$, i.e.~$W(A)\sim N(0,a\leb(A))$, and a Poisson basis $N$ (independent of $W$) with compensator
	$n(dy;A)=\mathbb{E}(N(dy;A))$, where $n(dx,d{\bf z})= \lev(dx) d{\bf z}$.
\end{proposition}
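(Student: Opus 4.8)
The plan is to build, on a common probability space, the three mutually independent ingredients on the right-hand side of \eqref{eq:Cha5_LevyIto}, to form their sum $L^*$, to verify directly that $L^*$ is an $\R$-valued \Levy basis in the sense of Definition~\ref{def:LevyBasis}, and to check that its cumulant function coincides with \eqref{eq:LevKhiHom}; since a \Levy basis is determined in law by the cumulant functions of the $L(A)$ together with the independent scatteredness axiom, this identifies $L^*$ as a version of $L$ with the same characteristic triplet. The only facts used are the homogeneous \levy-Khintchine representation \eqref{eq:LevKhiHom} and the property that $\lev$ is a \Levy measure, so that $\int_{\R}\min(1,\xi^2)\,\lev(d\xi)<\infty$.

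First I would construct the Poisson part. Let $N$ be a Poisson random measure (``Poisson basis'') on $(\R\setminus\{0\})\times S$ with intensity $n(d\xi,d{\bf z})=\lev(d\xi)\,\leb(d{\bf z})$; this intensity is $\sigma$-finite, so $N$ exists. For $A\in\mathcal{B}_b(S)$ the large-jump term $\int_{\{|y|>1\}}yN(dy,A)$ is an a.s.\ finite sum, since $\lev(\{|y|>1\})<\infty$ and $\leb(A)<\infty$ force $N(\{|y|>1\}\times A)<\infty$ a.s. For the small-jump term I would use the standard $L^2$ construction: with $L_1^{(\varepsilon)}(A)=\int_{\{\varepsilon<|y|\le 1\}}y\,(N-n)(dy,A)$, the Poisson isometry gives $\E\big((L_1^{(\varepsilon)}(A)-L_1^{(\varepsilon')}(A))^2\big)=\leb(A)\int_{\{\varepsilon'<|y|\le\varepsilon\}}y^2\,\lev(dy)\to 0$ as $\varepsilon,\varepsilon'\downarrow 0$, yielding an $L^2$-limit $\int_{\{|y|\le1\}}y\,(N-n)(dy,A)$. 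Writing $L_1(A)$ for the sum of the two terms, one checks, via the corresponding facts for $N$ and the $L^2$-limit, that $A\mapsto L_1(A)$ is countably additive a.s.\ and independently scattered, with $C(\theta;L_1(A))=\leb(A)\int_{\R}\big(e^{i\theta\xi}-1-i\theta\xi\,\mathbb{I}_{[-1,1]}(\xi)\big)\lev(d\xi)$.

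Next I would adjoin a Gaussian basis $W$, independent of $N$, with $W(A)\sim N(0,a\,\leb(A))$ and independently scattered (a Gaussian white noise with control measure $a\,\leb$), so that $C(\theta;W(A))=-\tfrac{1}{2}\theta^2 a\,\leb(A)$, and set $L^*(A)=\drift\,\leb(A)+W(A)+L_1(A)$. Countable additivity of $L^*$ follows from that of $\leb$, $W$ and $L_1$; independent scatteredness follows because $W$ and $L_1$ are each independently scattered and mutually independent, so for pairwise disjoint $A_1,\dots,A_n$ the coordinates of $(L^*(A_1),\dots,L^*(A_n))$ are independent; and $L^*(A)$ is infinitely divisible because, by independence of the three summands,
\begin{align*}
C(\theta;L^*(A))=\leb(A)\Big[i\theta\drift-\tfrac{1}{2}\theta^2 a+\int_{\R}\big(e^{i\theta\xi}-1-i\theta\xi\,\mathbb{I}_{[-1,1]}(\xi)\big)\,\lev(d\xi)\Big],
\end{align*}
which is precisely \eqref{eq:LevKhiHom}. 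Hence $L^*$ is a homogeneous \Levy basis with characteristic triplet $(\drift,a,\lev)$ and $L^*(A)\overset{d}{=}L(A)$ for every $A\in\mathcal{B}_b(S)$.

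It remains to pass from equality of one-dimensional marginals to equality of all finite-dimensional distributions, i.e.\ to conclude that $L^*$ is a modification (version) of $L$: for pairwise disjoint sets this is immediate, since both $L$ and $L^*$ have independent coordinates with matching marginals, and an arbitrary finite family of sets reduces to the disjoint case using the almost-sure additivity shared by $L$ and $L^*$. If one insists on a modification carried by the original probability space, one instead constructs $N$ and $W$ as functionals of $L$ by atomising $L$ over ever finer partitions of $S$; the cumulant bookkeeping is unchanged. I expect the main obstacle to be the rigorous $L^2$-construction of the compensated small-jump integral $\int_{\{|y|\le1\}}y\,(N-n)(dy,A)$ and, slightly more delicately, verifying that this limit retains countable additivity in $A$ and independent scatteredness; the remaining steps are routine cumulant algebra. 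A shorter alternative is to invoke the general spectral/\levy-It\^o representation of infinitely divisible random measures specialised to homogeneous control measures, as in \citet{RajputRosinski1989} and \citet{Ped03}.
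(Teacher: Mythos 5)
The paper does not actually prove this proposition: it is recalled from \citet[Proposition 4.5]{Ped03}, so the ``official'' argument is exactly the citation you offer as a fallback in your last sentence. Your constructive sketch is internally sound as far as it goes --- the existence of the Poisson basis with intensity $\lev(d\xi)\,\leb(d{\bf z})$, the a.s.\ finiteness of the large-jump sum, the $L^2$-construction of the compensated small-jump integral, and the cumulant bookkeeping are all standard and correct, and they do produce a homogeneous \Levy basis with characteristic triplet $(\drift,a,\lev)$ admitting the decomposition \eqref{eq:Cha5_LevyIto}.

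The genuine gap is that this is not what the proposition asserts. A \emph{modification} $L^*$ of $L$ must satisfy $\Prob(L^*(A)=L(A))=1$ for every $A\in\mathcal{B}_b(S)$, i.e.\ it must live on the same probability space and agree with $L$ almost surely set by set; your $L^*$ is built from fresh, independent randomness and only agrees with $L$ in law (equality of finite-dimensional distributions). The entire content of the \levy-It\^o theorem for \Levy bases is the part you relegate to the aside about ``atomising $L$ over ever finer partitions'': one must extract the jump measure $N$ \emph{from $L$ itself} (counting the atoms of $L$ of a given size in a given set), prove that this $N$ is a Poisson basis with the stated intensity, and prove that the residual $L(A)-\drift\leb(A)-\int_{\{|y|\le1\}}y(N-n)(dy,A)-\int_{\{|y|>1\}}yN(dy,A)$ is a Gaussian basis independent of $N$. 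None of that is carried out, and it is not ``unchanged cumulant bookkeeping'' --- it is the delicate step, which is precisely why the paper defers to \citet{Ped03} rather than reproving it. As written, your argument proves an existence statement (``there is \emph{some} \Levy basis with this triplet and this decomposition''), not the decomposition of the given $L$.
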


\subsection{Definition of a mixed moving average process}
\begin{definition}[Mixed moving average process]\label{def:MMA}
	Let $L$ denote a homogeneous \Levy basis with characteristic triplet $(\drift, a, \lev)$, $f: \R \times \R\to \R$ be a deterministic function. The stochastic process $Y=(Y_t)_{t\geq 0}$ with 
\begin{align}\label{eq:PT}
	Y_t 
	&=
	\int_{\R\times \R}f(x,t-s) L(dx,ds),
\end{align}
is called a mixed moving average  process (MMAP).
\end{definition}
An application of  \citet[Theorem 2.7]{RajputRosinski1989} leads to the following result.
\begin{proposition}\label{pro:exMMA}
The MMA process defined in \eqref{eq:PT} is well-defined, if and only if the following integrability conditions hold almost surely:
	\begin{multline}\label{eq:intcondMA}
			\int_{\R \times \R} | V_1(f(x,t-s))|dxds < \infty, \qquad
					a\int_{\R\times \R} |f(x,t-s)|^2 dx ds < \infty, \\				\int_{\R\times \R} V_2(f(x,t-s))dxds < \infty,
		\end{multline}
			where for $\varrho(\xi) := \xi\indicator_{[-1, 1]}(\xi)$, we define
			$V_1(u):= u \drift +\int_{\mathbb{R}}\left(\varrho(\xi u) -u\varrho(\xi)\right)\lev(d\xi)$ and
			$V_2(u) := \int_{\mathbb{R}} \min(1,|\xi u|^2)\lev(d\xi)$.
			In that case, the cumulant function of the MMA process is given by 
			\begin{align*}
	C(\theta; Y_t)	&=\Log(\mathbb{E}(\exp(i \theta Y_{t})))
	= \int_{\R \times \R}C( \theta f(x, t-s); L')dx ds\\
	&= \int_{\R \times \R}C( \theta f(x,u); L')dx du\\
		&=i \theta \drift \int_{\R \times \R}f(x,u)dxdu  - \frac{1}{2}\theta^2 a  \int_{\R \times \R}f^2(x,u)dxdu \\
	&+ \int_{\R \times \R}  \int_{\mathbb{R}}\left(e^{i\theta \xi f(x,u)}-1-i\theta \xi f(x,u) \mathbb{I}_{[-1,1]}(\xi f(x,u))\right) \lev(d\xi)dxdu,
\end{align*}
for $\theta \in \R$, 
		which implies that the process is stationary and  infinitely divisible. 
\end{proposition}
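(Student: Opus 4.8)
The plan is to obtain everything from the integrability criterion of \citet[Theorem 2.7]{RajputRosinski1989} for deterministic integrands against infinitely divisible, independently scattered random measures, applied to the homogeneous \Levy basis $L$ on the product space $\R\times\R$.

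First I would fix $t\ge 0$ and regard $g_t(x,s):=f(x,t-s)$ as a deterministic function on $\R\times\R$. Because $L$ is homogeneous with characteristic triplet $(\drift,a,\lev)$, the control measure entering the Rajput--Rosinski conditions is Lebesgue measure $dx\,ds$, the Gaussian variance density is the constant $a$, and the generalised \Levy measure is $\lev(d\xi)\,dx\,ds$. Substituting these into their criterion, the centering function appearing in the ``drift'' condition reduces, with $\varrho(\xi)=\xi\indicator_{[-1,1]}(\xi)$, to $u\mapsto u\drift+\int_\R\bigl(\varrho(\xi u)-u\varrho(\xi)\bigr)\lev(d\xi)=V_1(u)$, the Gaussian condition to $a\int|g_t(x,s)|^2\,dx\,ds<\infty$, and the jump condition to $\int\!\!\int\min(1,|\xi g_t(x,s)|^2)\,\lev(d\xi)\,dx\,ds=\int V_2(g_t(x,s))\,dx\,ds<\infty$. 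Hence \citet[Theorem 2.7]{RajputRosinski1989} shows that $g_t$ is $L$-integrable --- equivalently, that $Y_t$ in \eqref{eq:PT} is well defined --- precisely when the three conditions in \eqref{eq:intcondMA} hold; note these conditions are deterministic, so the qualifier ``almost surely'' is automatic.

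Under these conditions the same theorem gives the cumulant functional of the integral as $C(\theta;Y_t)=\int_{\R\times\R}C\bigl(\theta\,g_t(x,s);L'\bigr)\,dx\,ds$, where $L'$ is the \Levy seed with cumulant function \eqref{eq:LevSeed}. The change of variables $u=t-s$ maps $\R$ onto $\R$ and leaves $dx\,ds$ invariant, which removes the dependence on $t$; inserting the explicit form of $C(\cdot;L')$ from \eqref{eq:LevSeed} and using Fubini to interchange the $(x,u)$- and $\xi$-integrations yields the displayed formula. Infinite divisibility of $Y_t$ then follows because an $L$-integral of a deterministic function is, by the Rajput--Rosinski construction, a limit in probability of sums of independent ID random variables, and the ID class is closed under convolution and weak limits.

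For strict stationarity I would run the same argument for a finite set $t_1,\dots,t_n$ and an arbitrary linear combination $\sum_k\theta_k Y_{t_k}$: its cumulant function equals $\int_{\R\times\R}C\bigl(\sum_k\theta_k f(x,t_k-s);L'\bigr)\,dx\,ds$, and the substitution $s\mapsto s+h$ shows this is invariant under the simultaneous shift $t_k\mapsto t_k+h$, so all finite-dimensional laws of $(Y_t)$ are shift-invariant. The only genuinely delicate step is the first one --- matching the somewhat involved general centering function and parameters of \citet[Theorem 2.7]{RajputRosinski1989} to their homogeneous specialisation so that exactly $V_1$ and $V_2$ as defined here emerge; once that identification is in place, the remainder is routine bookkeeping.
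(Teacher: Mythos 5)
Your proposal is correct and follows exactly the route the paper takes: the paper proves this proposition simply by invoking \citet[Theorem 2.7]{RajputRosinski1989}, specialised to a homogeneous \Levy basis so that the control measure is Lebesgue measure and the general centering and jump functionals reduce to $V_1$ and $V_2$, with the cumulant formula, stationarity (via the shift-invariance of the finite-dimensional cumulant functionals) and infinite divisibility following as you describe. Your write-up just makes explicit the bookkeeping that the paper leaves implicit.
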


\subsection{Periodic trawl processes}
We will now show that trawl processes and the new class of periodic trawl processes fall into the MMAP framework. 
First, recall the definition of a periodic function. 
\begin{definition}
A function $p:[0, \infty)\to \R$ is called periodic with period $\tau>0$ if $p(x+\tau)=p(x)$ for all $x\geq 0$. 
\end{definition}
Now, we have all the ingredients for defining a periodic trawl process. 
\begin{definition}[Periodic trawl process]\label{def:PT}
	Let $L$ denote a homogeneous \Levy basis with characteristic triplet $(\drift, a, \lev)$, $p:[0, \infty)\to \R$ a periodic function  with period $\tau>0$ and $g:[0, \infty)\to \R$ a continuous, monotonically decreasing function. The stochastic process $Y=(Y_t)_{t\geq 0}$ with 
\begin{align}\label{eq:PT}
	Y_t 
	&=
	\int_{\R\times \R}f(x,t-s) L(dx,ds),
\end{align}
with  
$f(x,t-s)=p(t-s)\ind_{(0,g(t-s))}(x)\ind_{[0,\infty)}(t-s)$
is called a $\tau$-periodic trawl process. 
\end{definition}
\begin{remark}
	In the case where $p\equiv 1$, we obtain a monotonic trawl process without periodicity. 
	Note that we use a slightly different notation than that used in \cite{BNLSV2014}. Here $g$ is defined for non-negative arguments.
\end{remark}

\begin{remark}\label{rem:alt-trawl-main}
	\cite{BNLSV2014} proposed adding a periodic function as a multiplicative factor to $g$ rather than as a kernel function as in \eqref{eq:PT}. However, we find that the version proposed above appears to be more analytically tractable, as we shall discuss in more detail in the appendix; see Remark \ref{rem:alt-trawl}.
	Note also that if one is interested in integer-valued trawl processes, then they can be obtained by choosing $p$ to be integer-valued. 
\end{remark}

The following result is a special case of Proposition \ref{pro:exMMA}.
\begin{proposition}
	The $\tau$-periodic trawl process defined in Definition \ref{def:PT} is well defined, in the sense that the stochastic integral exists, if and only if the
	integrability conditions stated in \eqref{eq:intcondMA} hold almost surely. 
In that case, the cumulant function of the periodic trawl process is given by 
			\begin{align*}
	C(\theta; Y_t)	&=\Log(\mathbb{E}(\exp(i \theta Y_{t})))
	= \int_{-\infty}^t\int_{\R}C( \theta p(t-s)\ind_{(0,g(t-s))}(x); L')dx ds
 \\
&
 = \int_0^{\infty}\int_{\R}C( \theta p(u)\ind_{(0,g(u))}(x); L')dx du\\
 	&
 =i \theta \drift \int_{0}^{\infty}g(u) p(u)du  - \frac{1}{2}\theta^2 a  \int_{0}^{\infty}g(u) p^2(u)du \\
	&+ \int_{0}^{\infty}g(u) \int_{\mathbb{R}}\left(e^{i\theta \xi p(u)}-1-i\theta \xi p(u) \mathbb{I}_{[-1,1]}(\xi p(u))\right) \lev(d\xi)du,
\end{align*}
for $\theta \in \R$, 
		which implies that the process is stationary and  infinitely divisible. 
\end{proposition}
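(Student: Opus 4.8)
The plan is to obtain this result as a direct corollary of Proposition~\ref{pro:exMMA} by simply substituting the specific kernel function $f(x,u)=p(u)\ind_{(0,g(u))}(x)\ind_{[0,\infty)}(u)$ of the periodic trawl process into the general MMAP statement and then simplifying the resulting double integrals. Since the periodic trawl process is by construction an MMAP with this particular $f$ (as noted just before the proposition), well-definedness is equivalent to the integrability conditions \eqref{eq:intcondMA}, and the cumulant function is given by the formula in Proposition~\ref{pro:exMMA}; there is nothing new to prove about existence or stationarity/infinite divisibility beyond invoking that proposition.

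The work therefore reduces to computing the spatial integral over $x\in\R$ explicitly. First I would observe that for fixed $u\ge 0$ the integrand $C(\theta f(x,u);L')$ depends on $x$ only through the indicator: $f(x,u)=p(u)$ when $x\in(0,g(u))$ and $f(x,u)=0$ otherwise, with $C(0;L')=0$ from \eqref{eq:LevSeed}. Hence $\int_{\R}C(\theta p(u)\ind_{(0,g(u))}(x);L')\,dx = g(u)\,C(\theta p(u);L')$, using that $g$ is non-negative (it is monotonically decreasing on $[0,\infty)$ with range in $[0,\infty)$, so $\leb((0,g(u)))=g(u)$). Substituting this and then plugging in the \Levy--Khintchine form \eqref{eq:LevSeed} of $C(\cdot;L')$ with argument $\theta p(u)$ yields, term by term, the drift term $i\theta\drift\int_0^\infty g(u)p(u)\,du$, the Gaussian term $-\tfrac12\theta^2 a\int_0^\infty g(u)p^2(u)\,du$, and the jump term $\int_0^\infty g(u)\int_{\R}(e^{i\theta\xi p(u)}-1-i\theta\xi p(u)\ind_{[-1,1]}(\xi p(u)))\lev(d\xi)\,du$, which is exactly the claimed expression. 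The change of variables $u=t-s$ turning $\int_{-\infty}^t\int_\R$ into $\int_0^\infty\int_\R$ is the same one already used in Proposition~\ref{pro:exMMA}.

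There is no real obstacle here; the only point requiring a word of care is the measurability and integrability bookkeeping needed to justify splitting the cumulant function into its three components and interchanging the order of integration in the jump term — but this is precisely what the integrability conditions \eqref{eq:intcondMA} guarantee once one checks that, for the trawl kernel, they translate into $\int_0^\infty g(u)|V_1(p(u))|\,du<\infty$, $a\int_0^\infty g(u)p^2(u)\,du<\infty$, and $\int_0^\infty g(u)V_2(p(u))\,du<\infty$ (again using $\int_\R h(\ind_{(0,g(u))}(x))\,dx = g(u)h(1)$ for $h$ with $h(0)=0$). So the proof is essentially a verification that the general MMAP formulae collapse to the stated trawl-specific formulae, with the finiteness of the integrals inherited verbatim from Proposition~\ref{pro:exMMA}.
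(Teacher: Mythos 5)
Your proposal is correct and matches the paper's approach exactly: the paper gives no separate proof, stating only that the proposition is a special case of Proposition~\ref{pro:exMMA}, and your derivation — substituting the trawl kernel, using $C(0;L')=0$ to reduce the spatial integral to $\leb((0,g(u)))\,C(\theta p(u);L')=g(u)\,C(\theta p(u);L')$, and expanding via \eqref{eq:LevSeed} — is precisely the intended verification. The only detail worth noting is that the reduction requires $g\ge 0$ so that $(0,g(u))$ has Lebesgue measure $g(u)$, which you correctly flag.
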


\begin{remark}
	A sufficient condition for the integrability conditions to hold is that 
 $p$ is continuous and $\int_0^{\infty}g(u)du<\infty$. This is indeed the scenario we typically consider for applications. 
\end{remark}

\subsubsection{Second-order properties of periodic trawl processes}
Let us now study the second-order properties of periodic trawl processes. The proofs of our results are given in  Appendix \ref{ap:SP}. 

\begin{proposition}\label{prop:SecondOrder}
	Let $Y$ denote a $\tau$-periodic trawl process as defined in Definition \ref{def:PT}.
	For $t\geq 0$, we have  
	$\E(Y_t)= \E(L')\int_0^{\infty}p(u)g(u)du$,$	\Var(Y_t)
			= \Var(L')\int_{0}^{\infty}p^2(u)g(u)du$,
			$\Cov(Y_{0},Y_{t})
		= \Var(L')\int_{0}^{\infty}p(u)p(t+u)g(t+u)du$
		and
		\begin{align*}
		\Cor(Y_{0},Y_{t})
		= \frac{\int_{0}^{\infty}p(u)p(t+u)g(t+u)du}{\int_{0}^{\infty}p^2(u)g(u)du}.
	\end{align*} 
\end{proposition}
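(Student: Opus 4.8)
The plan is to use that a $\tau$-periodic trawl process is, by Definition \ref{def:PT}, a special MMAP, and to read off its first two moments from the cumulant function in Proposition \ref{pro:exMMA} together with the explicit form of the kernel $f(x,u)=p(u)\ind_{(0,g(u))}(x)\ind_{[0,\infty)}(u)$. Throughout I assume, as is tacit in the statement, that the \levy\ seed $L'$ has finite second moment; then the integrability conditions \eqref{eq:intcondMA} hold and all quantities below are finite. I will use repeatedly that $\int_{\R}\ind_{(0,g(u))}(x)\,dx=g(u)$ for $u\ge0$ (interpreting this as $g(u)\vee 0$ should $g$ ever become negative) and $=0$ for $u<0$, and that Lebesgue measure is translation invariant.

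\emph{Mean and variance.} Differentiating $C(\theta;Y_t)=\int_{\R\times\R}C(\theta f(x,u);L')\,dx\,du$ once, respectively twice, in $\theta$ at $\theta=0$ — the interchange of derivative and integral being licensed by \eqref{eq:intcondMA} — and using $-i\,\partial_\theta C(0;L')=\E(L')$ and $-\partial_\theta^2 C(0;L')=\Var(L')$, we obtain $\E(Y_t)=\E(L')\int_{\R\times\R}f(x,u)\,dx\,du$ and $\Var(Y_t)=\Var(L')\int_{\R\times\R}f(x,u)^2\,dx\,du$. Since $f(x,u)^2=p^2(u)\ind_{(0,g(u))}(x)\ind_{[0,\infty)}(u)$, carrying out the $x$-integral gives $\E(Y_t)=\E(L')\int_0^\infty p(u)g(u)\,du$ and $\Var(Y_t)=\Var(L')\int_0^\infty p^2(u)g(u)\,du$, both independent of $t$.

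\emph{Covariance.} For fixed $t\ge0$ the random variable $Y_0+Y_t=\int_{\R\times\R}\big(f(x,-s)+f(x,t-s)\big)L(dx,ds)$ is an integral of a deterministic function against the \levy\ basis, so by \citet[Theorem 2.7]{RajputRosinski1989} its cumulant function is $\int_{\R\times\R}C\big(f(x,-s)+f(x,t-s);L'\big)\,dx\,ds$, whence $\Var(Y_0+Y_t)=\Var(L')\int_{\R\times\R}\big(f(x,-s)+f(x,t-s)\big)^2\,dx\,ds$. Expanding the square and subtracting $\Var(Y_0)$ and $\Var(Y_t)$ yields $\Cov(Y_0,Y_t)=\Var(L')\int_{\R\times\R}f(x,-s)f(x,t-s)\,dx\,ds$. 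The integrand is nonzero only for $s\le0$ (so that both $-s\ge0$ and $t-s\ge0$), and there the product of indicators in $x$ equals $\ind_{(0,\min(g(-s),g(t-s)))}(x)$; since $t\ge0$ forces $t-s\ge-s$ and $g$ is monotonically decreasing, $\min(g(-s),g(t-s))=g(t-s)$. Hence $\Cov(Y_0,Y_t)=\Var(L')\int_{-\infty}^0 p(-s)p(t-s)g(t-s)\,ds$, and the substitution $u=-s$ gives $\Var(L')\int_0^\infty p(u)p(t+u)g(t+u)\,du$. The correlation then follows directly as the ratio $\Cov(Y_0,Y_t)/\Var(Y_0)$, in which the factor $\Var(L')$ cancels.

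The computations are routine once the two ingredients are in place: the Lévy-basis isometry (which rests on Rajput--Rosinski together with the finite-second-moment hypothesis) and the collapse $\min(g(-s),g(t-s))=g(t-s)$, which is the only place where the monotonicity of $g$ — the genuinely "trawl" part of the structure, beyond generic MMAP theory — is used. The main thing to handle carefully is therefore the justification of the isometry and of the differentiation under the integral sign, together with the convention on the sign of $g$ in the upper endpoint of the indicator interval.
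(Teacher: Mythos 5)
Your proof is correct and follows essentially the same route as the paper: both rest on the Rajput--Rosinski cumulant formula for integrals against the \levy\ basis and on the collapse $\min(g(-s),g(t-s))=g(t-s)$ coming from the monotonicity of $g$. The only cosmetic difference is that you obtain the covariance by polarising $\Var(Y_0+Y_t)$, whereas the paper differentiates the bivariate joint cumulant function in $(\theta_1,\theta_2)$ --- the same computation in a different guise.
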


We get the following important result.

\begin{proposition}\label{prop:Cor}
		Let $Y$ denote a $\tau$-periodic trawl process as defined in Definition \ref{def:PT}. Suppose that the periodic function $p$ is continuous.  Then there exists a continuous $\tau$-periodic function $c:[0,\infty)\to \R$, which is proportional to $p$ and satisfies $c(0)=1$,  such that 
		\begin{align*}
			\Cor(Y_{0},Y_{t})
			&=c(t)\frac{\int_{0}^{\infty}g(t+u)du}{\int_0^{\infty}g(u)du}.
		\end{align*}
\end{proposition}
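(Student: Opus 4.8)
\noindent\emph{Proof strategy.}
The plan is to start from the closed form of the autocorrelation given by Proposition~\ref{prop:SecondOrder},
\[
\Cor(Y_0,Y_t)=\frac{\int_0^\infty p(u)\,p(t+u)\,g(t+u)\,du}{\int_0^\infty p^2(u)\,g(u)\,du},
\]
and to reduce the whole statement to the factorisation
\[
\int_0^\infty p(u)\,p(t+u)\,g(t+u)\,du=\kappa\,p(t)\int_0^\infty g(t+u)\,du,\qquad t\ge0,
\]
for some constant $\kappa$ not depending on $t$. Indeed, taking $t=0$ in the identity forces $\kappa=\frac{\int_0^\infty p^2(u)g(u)\,du}{p(0)\int_0^\infty g(u)\,du}$, so that the function $c(t):=\kappa\,p(t)\,\frac{\int_0^\infty g(u)\,du}{\int_0^\infty p^2(u)g(u)\,du}=\frac{p(t)}{p(0)}$ is continuous, $\tau$-periodic (since $p$ is), proportional to $p$, and satisfies $c(0)=1$; substituting the factorisation into the quotient above and cancelling then gives exactly $\Cor(Y_0,Y_t)=c(t)\,\frac{\int_0^\infty g(t+u)\,du}{\int_0^\infty g(u)\,du}$. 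Continuity in $t$ of all the integrals involved --- which is all that continuity of $c$ needs --- follows from continuity of $p$ and $g$ together with dominated convergence, dominating functions being available from the integrability conditions~\eqref{eq:intcondMA}.

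To obtain the factorisation I would substitute $v=t+u$ to write the left-hand side as $\int_t^\infty p(v-t)\,p(v)\,g(v)\,dv$ and cut $[t,\infty)$ into the periods $[t+k\tau,t+(k+1)\tau)$, $k\ge0$. On the $k$-th piece the substitution $v=t+k\tau+y$ and $\tau$-periodicity of $p$ give $p(v-t)=p(y)$ and $p(v)=p(t+y)$, so after summing over $k$ (justified by monotone convergence, using $\int_0^\infty g<\infty$) the left-hand side becomes $\int_0^\tau p(y)\,p(t+y)\,G_\tau(t+y)\,dy$, where $G_\tau(z):=\sum_{k\ge0}g(z+k\tau)$; the same splitting turns the plain kernel into $\int_0^\infty g(t+u)\,du=\int_0^\tau G_\tau(t+y)\,dy$. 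Expanding the continuous $\tau$-periodic function $p$ in a Fourier series $p(y)=\sum_n a_n e^{2\pi i n y/\tau}$ (approximating by trigonometric polynomials and passing to the limit with the integrability bound), the pair of harmonics $(n,m)$ contributes a factor $e^{-2\pi i n t/\tau}$ times an integral of $e^{2\pi i(n+m)y/\tau}\,G_\tau(t+y)$; the ``resonant'' part $n+m=0$ already collapses to a $\tau$-periodic function of $t$ multiplied by $\int_0^\infty g(t+u)\,du$.

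The decisive step, where I expect the real work to lie, is to show that after dividing by $\int_0^\infty g(t+u)\,du$ only the single term $\kappa\,p(t)$ survives, i.e.\ that the non-resonant harmonics contribute nothing and that the resonant part is proportional to $p(t)$ with a $t$-independent coefficient. I would exploit that $g$ is monotone decreasing and integrable, so $g(z)=\int_{(z,\infty)}(-dg(w))$; this rewrites $\int_t^\infty p(v-t)p(v)g(v)\,dv=\int_{(t,\infty)}\!\big(\int_t^{w}p(v-t)p(v)\,dv\big)(-dg(w))$ and, likewise, $\int_0^\infty g(t+u)\,du=\int_{(t,\infty)}(w-t)\,(-dg(w))$, so everything reduces to the inner integrals $\int_t^{w}p(v-t)p(v)\,dv$, which, being taken over intervals anchored at $t$, can be handled period by period via the periodicity of $p$; matching the two Stieltjes representations is what should expose the common factor $p(t)$ and identify $\kappa$. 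Establishing that this matching holds for \emph{every} admissible decreasing kernel $g$ is the crux of the proof; the remaining verifications (the value $c(0)=1$, continuity, and the final identity for $\Cor(Y_0,Y_t)$) are routine.
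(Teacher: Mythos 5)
Your reduction of the proposition to the factorisation $\int_0^\infty p(u)p(t+u)g(t+u)\,du=\kappa\,p(t)\int_0^\infty g(t+u)\,du$, and hence to $c(t)=p(t)/p(0)$, is where the argument breaks: that identity is false in general, and the paper's own first example already refutes it. There $g(x)=e^{-\lambda x}$ and $p(x)=\sin(2\pi x/\tau)$, so $p(0)=0$ and $c(t)=p(t)/p(0)$ is not even defined; the explicitly computed periodic factor is $c(t)=\frac{1}{4\pi}\left[\tau\lambda\sin\left(\tfrac{2\pi t}{\tau}\right)+4\pi\cos\left(\tfrac{2\pi t}{\tau}\right)\right]$, a phase-shifted sinusoid that is not a scalar multiple of $p(t)$ (so the literal reading of ``proportional to $p$'' that you built your candidate on is not what the example realises). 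Your own harmonic analysis shows why the plan cannot close: the resonant part $n+m=0$ produces $\sum_n|a_n|^2e^{-2\pi i nt/\tau}$, i.e.\ a multiple of the autocorrelation $\frac1\tau\int_0^\tau p(y)p(t+y)\,dy$ of $p$, not of $p(t)$ itself, and the non-resonant harmonics do not vanish for a general decreasing $g$ --- in the exponential example they are exactly what generates the $\tau\lambda\sin(2\pi t/\tau)$ term. So the ``decisive step'' you defer is not merely hard; it is false, and the proof is incomplete at precisely the point you flag as the crux.

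The paper's proof takes a different and much shorter route: it applies the first mean value theorem for integrals to $\int_0^M p(u)p(t+u)g(t+u)\,du$ with non-negative weight $g(t+u)$, writing it as $p(\xi_1)p(\xi_1+t)\int_0^M g(t+u)\,du$ with $\xi_1\in[0,\tau]$ (possible because $u\mapsto p(u)p(t+u)$ is continuous and $\tau$-periodic), treats the denominator $\int_0^M p^2(u)g(u)\,du$ analogously with some $\xi_2\in[0,\tau]$, lets $M\to\infty$, and sets $c(t)=p(\xi_1)p(\xi_1+t)/(p(\xi_2))^2$, with $c(0)=1$ read off from $\Cor(Y_0,Y_0)=1$. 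If you want to repair your computation, the correct target is not $\kappa p(t)$ but the function defined by the ratio $c(t):=\int_0^\infty p(u)p(t+u)g(t+u)\,du \,/\, (\rho_0\int_0^\infty g(t+u)\,du)$ with $\rho_0=\int_0^\infty p^2(u)g(u)\,du/\int_0^\infty g(u)\,du$; the substance of the proposition is that this ratio is continuous and $\tau$-periodic, not that it equals $p(t)/p(0)$.
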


We note that the $\tau$-periodic function $c$ can be represented via its corresponding  Fourier series representation, which is useful if a parametric model for $c$ is required.

\begin{remark}
The corresponding results for a trawl process follow by setting $p\equiv c\equiv 1$.
\end{remark}

\subsubsection{Examples}
Let us study some examples of periodic trawl processes. 

\begin{example}
	Consider an exponential trawl function given by  
$g(x)=\exp(-\lambda x)$ for $\lambda>0, x\geq 0$ and the $\tau$-periodic function $p(x)=\sin(2\pi x/\tau)$.
Then
\begin{align*}
\E(Y_t)&
=\E(L')\frac{2\pi \tau}{\lambda^2 \tau^2 +4\pi^2},\\ 
\Var(Y_t)
&=\Var(L')\frac{8\pi^2}{\lambda^3 \tau^2+16\pi^2 \lambda},\\
\Cov(Y_{0},Y_{t})
&=  \Var(L') 2\pi\frac{e^{-\lambda t}\left[\sin\left(\frac{2\pi t}{\tau}\right)\tau \lambda +4\pi\cos\left(\frac{2\pi t}{\tau}\right)\right]}{\lambda(\lambda^2\tau^2 + 16\pi^2)},\\
\Cor(Y_{0},Y_{t})
&= e^{-\lambda t} c(t),
\quad \mathrm{with} \quad c(t) = \frac{1}{4\pi}\left[\sin\left(\frac{2\pi t}{\tau}\right)\tau \lambda +4\pi\cos\left(\frac{2\pi t}{\tau}\right)\right],
\end{align*} 
where $c$ is a $\tau$-periodic function.

\end{example}
Next, we consider an example based on the supGamma trawl function,  which can allow for both short and long memory.
\begin{example} Consider the supGamma trawl function defined as  $g(x)=\left(1+\frac{x}{\alpha}\right)^{-H}$, for $\alpha>0, H>1, x \geq 0$. As before, let $p(x)=\sin(2\pi x/\tau)$. Then
\begin{align*}
\Cov(Y_{0},Y_{t})
&= \Var(L')\int_{0}^{\infty}p(u)p(t+u)g(t+u)du
\\
&= \Var(L')\int_{0}^{\infty}\sin\left(\frac{2\pi u}{\tau}\right)\sin\left(\frac{2\pi(t+u)}{\tau}\right) \left(1+\frac{(t+u)}{\alpha}\right)^{-H}du. 
\end{align*} 
We note that
	$\int_0^{\infty}g(t+u)du=
\int_{0}^{\infty} \left(1+\frac{(t+u)}{\alpha}\right)^{-H}du=\frac{\alpha^H}{H-1}(t+\alpha)^{1-H}$.
According to Proposition \ref{prop:Cor}, there exists a $\tau$-periodic function $c$ such that 
	\begin{align*}
	\Cor(Y_{0},Y_{t})
	&=c(t)\frac{\int_{0}^{\infty}g(t+u)du}{\int_0^{\infty}g(u)du}
	=c(t)\left(1+\frac{t}{\alpha}\right)^{1-H}.
\end{align*}
For $H\in(1,2]$, we are in the long-memory case and for $H>2$ in the short-memory case.
\end{example}

\begin{example}
Let us briefly consider the general case of a superposition trawl. Let 
\begin{align*}
g(x)=\int_0^{\infty}e^{-\lambda x}f_{\lambda}(\lambda)d\lambda,
\end{align*}
for a density $f_{\lambda}$. If 
\begin{align*}
f_{\lambda}(\lambda)=\frac{1}{\Gamma(H)}\alpha^H\lambda^{H-1}e^{-\lambda \alpha},
\end{align*}
for $\alpha>0, H>1$. 
Then 
\begin{align*}
g(x)=\int_0^{\infty}e^{-\lambda x}\frac{1}{\Gamma(H)}\alpha^H\lambda^{H-1}e^{-\lambda \alpha}d\lambda =\left(1+\frac{x}{\alpha}\right)^{-H},
\end{align*}
i.e.~we are in the case of the previous example. 
Then 
\begin{align*}
\Cov(Y_{0},Y_{t})
&= \Var(L')\int_{0}^{\infty}p(u)p(t+u)g(t+u)du,
\end{align*}
where

\begin{align*}
&\int_{0}^{\infty}p(u)p(t+u)g(t+u)du\\
&=\int_{0}^{\infty}p(u)p(t+u)\int_0^{\infty}e^{-\lambda (t+u)}f_{\lambda}(\lambda)d\lambda du\\
&=\int_0^{\infty}  e^{-\lambda t} \int_{0}^{\infty}p(u)p(t+u)e^{-\lambda u} duf_{\lambda}(\lambda)d\lambda \\
&= \int_0^{\infty} 2\pi\frac{e^{-\lambda t}\left[\sin\left(\frac{2\pi t}{\tau}\right)\tau \lambda +4\pi\cos\left(\frac{2\pi t}{\tau}\right)\right]}{\lambda(\lambda^2\tau^2 + 16\pi^2)} f_{\lambda}(\lambda)d\lambda\\
&= 2\pi \tau\sin\left(\frac{2\pi t}{\tau}\right) \int_0^{\infty}\frac{e^{-\lambda t}  }{(\lambda^2\tau^2 + 16\pi^2)} f_{\lambda}(\lambda)d\lambda
+
 8\pi^2 \cos\left(\frac{2\pi t}{\tau}\right) \int_0^{\infty} \frac{e^{-\lambda t}}{\lambda(\lambda^2\tau^2 + 16\pi^2)} f_{\lambda}(\lambda)d\lambda.
\end{align*} 

From Proposition \ref{prop:Cor}, we deduce that there exists a $\tau$-periodic function $c$ such that 
\begin{align*}
	\Cor(Y_0,Y_t)=c(t)\frac{\int_{0}^{\infty}g(t+u)du}{\int_{0}^{\infty}g(u)du}.
\end{align*}
Note that 
\begin{align*}
\int_{0}^{\infty}g(t+u)du&=\int_{0}^{\infty}\int_0^{\infty}e^{-\lambda (t+u)}f_{\lambda}(\lambda) d\lambda du =
\int_{0}^{\infty} e^{-\lambda t} f_{\lambda}(\lambda) \int_0^{\infty} e^{-\lambda u}  du  d\lambda\\
&= \int_{0}^{\infty} \frac{1}{\lambda} e^{-\lambda t} f_{\lambda}(\lambda)   d\lambda
=\mathbb{E}(e^{-\Lambda t}/\Lambda),
\end{align*}
where $\Lambda$ is a continuous random variable with density $f_{\lambda}$.
Also, 
\begin{align*}
	\int_{0}^{\infty}g(u)du
	&= \int_{0}^{\infty} \frac{1}{\lambda}  f_{\lambda}(\lambda)   d\lambda
	=\mathbb{E}(1/\Lambda).
\end{align*}
That is, we have 
\begin{align*}
	\Cor(Y_0,Y_t)=c(t)\frac{\mathbb{E}(e^{-\Lambda t}/\Lambda)}{\mathbb{E}(1/\Lambda)}.
	\end{align*}
\end{example}

%%%%%%%%%%%%%%%%%%%%%%%%%%%%%%%%%%%%%%%%%%%%%%%%%%%%%%%%%%%%%%%%%%%%%%%%%%%%%%%%%
%%%%%%%%%%%%%%%%%%%%%%%%%%%%%%%%%%%%%%%%%%%%%%%%%%%%%%%%%%%%%%%%%%%%%%%%%%%%%%%%%%
%%%%%%%%%%%%%%%%%%%%%%%%%%%%%%%%%
\section{Simulation of periodic trawl processes}\label{sec:simulationalg}
We will now address the question of how a periodic trawl process can be simulated efficiently. A natural first choice might be to develop a grid-based method, using a suitable spatio-temporal grid. 
However, it turns out that we can adapt a \emph{slice-based} simulation algorithm which was developed for trawl processes, see \cite{Noven2016} and also \cite{LeonteVeraart2022}, and extend it to allow for any additional temporal kernel function, which in our case, will be a $\tau$-periodic function $p$.

The advantage of this slice-based approach is that we can have a coarser, but exact, approximation in the spatial domain, and the simulation error only appears through the discretisation in time, whereas a grid-based approach would result in a discretisation error both in the spatial and the temporal domain.

Suppose that we would like to simulate $Y$ on the grid $t_0, \ldots, t_n$ with $t_i=i\Delta$ for $i=0, \ldots, n$ for $\Delta>0$ and $t_n=n\Delta=T$. 

\subsection{Slice-based simulation for trawl processes}
The idea behind the slicing, see \cite{Noven2016,LeonteVeraart2022}, is that we consider the partition of $\cup_{i=0}^nA_{t_i}$, where $A_{t_i}=\{(x,s): s\leq t, 0\leq x \leq g(t-s)\}$, which is obtained when considering all the disjoint sets obtained from the intersections of the various trawl sets.
We illustrate this idea in Figure \ref{fig:slices} for the case when $n=4$ and spell out the mathematical details in Algorithm \ref{alg:slices}.

\begin{figure}[htbp]
\centering
\includegraphics[trim = 100 520 0 100, clip,scale=0.75]{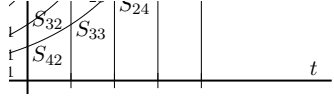}
\caption{Slices for $n=4$.}
\label{fig:slices}
\end{figure}

We create a $(n+1)\times (n+1)$-dimensional slice matrix $S=(S_{ij})$. Capital letters refer to slices themselves, which are Borel sets of finite Lebesgue measure,  and small letters $s=(s_{ij})$ to the Lebesgue measures of the associated slices ($s_{ij}=Leb(S_{ij})$). 
We will then draw independent random variables $L(S_{i,j})$ whose distribution has  cumulant function $s_{i,j}C(\theta;L')$
and simulate periodic trawl processes of the form
	$Y_t 
 =\int_{(-\infty,t]\times \R}p(t-s)\ind_{(0,g(t-s))}(x)L(dx,ds)$,
using the following approximation, for $k=0, \ldots, n$,
\begin{align*}
	Y_{k\Delta_n} & 
	=\sum_{j=1}^{k+1}p((k+2-j)\Delta_n)\sum_{i=k+2-j}^{n+2-j}L(S_{i,j}).
\end{align*}

We will now describe the simulation algorithm for periodic trawl processes in more detail by splitting the task at hand into three algorithms.

\subsubsection{Computing the matrix of slices}

We only consider slices $S_{ij}$ (or $s_{ij}$) with $j\leq n+2-i$ and store them in the matrix of slices of the form
\begin{align*}
    \begin{pmatrix}
S_{11}&S_{12}&\dots &S_{1n}& S_{1 (n+1)}\\
S_{21}&S_{22}&\dots& S_{2n}& \hrectangle\\
\vdots &\vdots & \iddots &\iddots&\vdots\\
S_{n1}&S_{n2}&\hrectangle \dots & \dots& \hrectangle\\
S_{(n+1)1}& \hrectangle & \dots & \dots & \hrectangle
    \end{pmatrix},
\end{align*}
see 
also Algorithm \ref{alg:slices}.

\begin{algorithm}[H]
\caption{Computing the slices $(s_{i,j})$, for $i\leq n+1, j\leq n+2-i$, using the trawl function $g$, $n$,  and the grid width $\Delta_n$ as inputs. \label{alg:slices}}
\begin{algorithmic}[1]
	\Procedure{Slices}{$g$, $n$, $\Delta_n$} 
	\State  $b \gets \mathrm{numeric}(n+1)$ \Comment{Create four vectors of zeros}
	\State  $c \gets \mathrm{numeric}(n)$
	\State  $d \gets \mathrm{numeric}(n+1)$ 
	\State  $e \gets \mathrm{numeric}(n)$
	\For{k in 1:(n+1)}
	\State $b[k] \gets \int_0^{\Delta_n}g(k\Delta_n-x)dx$ 
	\State $d[k] \gets \int_{-\infty}^{0}g(x-(k-1)\Delta_n)dx$
	\EndFor
	\For{k in 1:n}
	\State $c[k] \gets b[k]-b[k+1]$
	\State $e[k]=d[k]-d[k+1]$
	\EndFor
	\State  $s\gets \mathrm{matrix}(0, n+1, n+1)$ \Comment{Create slice matrix of zeros}
	\For { k in 1:n}
	\State $s[k, 1:(n+1-k)] \gets  \mathrm{replicate}(c[k], n+1-k)$
	\State $s[k,(n+1-k+1)] \gets b[k]$
	\EndFor
	\State $s[,1] \gets  c(e, d[n+1])$
	\Comment{Compute the column of first slices}
	\State \Return $s$ \Comment{Return the slice matrix}
	\EndProcedure
\end{algorithmic}
\end{algorithm}

\subsubsection{Adding the weighted slices}

\begin{algorithm}[H]
	\caption{Adding the weighted slices given a slicematrix $s=(s_{i,j})$, for $i\leq n+1, j\leq n+2-i$ and an $n$-dimensional weight vector $w$. \label{alg:addslices}}
	\begin{algorithmic}[1]
		\Procedure{AddWeightedSlices}{$s, w$}
		\State $n \gets \mathrm{nrow}(s)-1$
		\State $x \gets \mathrm{numeric}(n+1)$\Comment{Create vector of zeros}
		\State $tmp \gets 0$
		\For{k in 0:n}
		\State $tmp\gets 0$
		\For{j in 1:(k+1)}
		\State $tmp \gets tmp+w[k+2-j]\cdot \mathrm{sum}(s[(k+2-j):(n+2-j),j])$
		\EndFor
		\State $x[1+k]\gets tmp$
		\EndFor
		\State \Return $x$ 
		\EndProcedure
	\end{algorithmic}
\end{algorithm}

\subsubsection{Simulating a periodic trawl process}
\begin{algorithm}[H]
	\caption{Simulating a periodic trawl process on the grid $0, \Delta_n, \ldots, n \Delta_n$ given the functions $p, g$, the distribution of $L'$, and the grid width $\Delta_n$ and $n$.  \label{alg:sim}}
	\begin{algorithmic}[1]
		\Procedure{SimulatePeriodicTrawl}{$p, g, C(\cdot;L'), \Delta_n, n$}
		\State 	$s\gets \textsc{Slices}(g, n, \Delta_n)$ \Comment{Compute the slice-matrix $(s_{i,j})$}
		\State $L\gets \mathrm{matrix}(n+1,n+1)$ \Comment{Create matrix of r.v.s $L(S_{ij})$}
		\For {k in 1:n} 
		\State $L[k, 1:(n+1-k)]\gets$ vector of $(n+1-k)$ i.i.d.~r.v.s $\sim s[k,2]\cdot C(\cdot;L')$
		\State $L[k, (n+1-k+1)]\gets$ r.v.~$\sim s[k,(n+1-k+1)]\cdot C(\cdot;L')$
		\EndFor
		\For {k in 1:(n+1)}
		\State $L[k,1] \gets$ r.v.~$\sim s[k,1]\cdot C(\cdot;L')$
		\EndFor
		\State $w \gets \mathrm{numeric}(n+1)$ \Comment{Create weight vector}
		\For {k in 1:(n+1)}
		\State $w[k]\gets p(k \cdot \Delta_n)$
		\EndFor
		\State $y \gets  \textsc{AddWeightedSlices}(L, w)$
		\State \Return $y$ \Comment{Returns $Y_0, Y_{\Delta_n}, \ldots, Y_{n\Delta_n}$}
		\EndProcedure
	\end{algorithmic}
\end{algorithm}

We note that in the above algorithm, Algorithm \ref{alg:sim}, the notation "r.v.~$\sim s[k,(n+1-k+1)]\cdot C(\cdot;L')$" is a short-hand notation for a random variable whose distribution is characterised by the cumulant function $s[k,(n+1-k+1)]\cdot C(\cdot;L')$.

We note that the approximation for the first column of the slice matrix is rather rough. Hence, in practice, we should use a burn-in period in the simulation to minimise the effect of the initial approximation error.

The simulation algorithm has been implemented in the R package \texttt{ambit} available on CRAN, see \cite{ambit}. 

\subsection{A note on stochastic versus deterministic seasonality}
We note that periodic trawl processes are stationary with stochastic seasonality, which is reflected in the time-invariant mean, variance, and autocorrelation, where the autocorrelation function is a periodic function.

One might wonder how such periodic trawl processes compare to trawl processes with deterministic seasonality. 
As before, let $L$ denote a homogeneous \Levy basis with characteristic triplet $(\drift, a, \lev)$, $p:[0, \infty)\to \R$ a periodic function with period $\tau>0$ and $g:[0, \infty)\to \R$ a continuous, monotonically decreasing function. 

Consider the trawl process 
\begin{align*}
	X_t =\int_{(-\infty,t]\times \R}\ind_{(0,g(t-s))}(x)L(dx,ds)
\end{align*}
and
the periodic trawl process  
\begin{align*}
	Y_t =\int_{(-\infty,t]\times \R}p(t-s)\ind_{(0,g(t-s))}(x)L(dx,ds).
\end{align*}
Let $q:[0, \infty)\to \R$ a periodic function  with period $\tau>0$.
We consider trawl processes with additive and multiplicative seasonality.
\begin{align*}
	X^a(t):=q(t)+X_t, && X^m(t):=q(t)X_t.
\end{align*}
Then $X^a, X^m$ are not stationary and their second-order properties are given by
\begin{align*}
	\E(X^a_t) &= q(t)+\E(L')\int_0^{\infty}g(u)du
	=
	q(t)+\E(X_0)\\
	\Var(X^a_t)
	&= \Var(L')\int_{0}^{\infty}g(u)du=\Var(X_0),
	\\
	\Cov(X^a_{0},X^a_{t})
	&= \Var(L')\int_{0}^{\infty}g(t+u)du=\Cov(X_{0},X_{t}),
	\\
	\Cor(X^a_{0},X^a_{t})
	&= \frac{\int_{0}^{\infty}g(t+u)du}{\int_{0}^{\infty}g(u)du}=\Cor(X_{0},X_{t}),
\end{align*} 
for $t\geq 0$.
Also,  
\begin{align*}
 \E(X^m_t)&=q(t)\E(L')\int_0^{\infty}g(u)du=q(t)\E(X_0),\\
	\Var(X^m_t)
	&= q^2(t)\Var(L')\int_{0}^{\infty}g(u)du=q^2(t)\Var(X_0),\\
	\Cov(X^m_{0},X^m_{t})
	&= q(0)q(t)\Var(L')\int_{0}^{\infty}g(t+u)du=q(0)q(t)\Cov(X_{0},X_{t}),
	\\
		\Cor(X^m_{0},X^m_{t})
&= \frac{q(0)q(t)\int_{0}^{\infty}g(t+u)du}{q^2(t)\int_{0}^{\infty}g(u)du}=\frac{q(0)}{q(t)}\Cor(X_{0},X_{t}).	
\end{align*} 

\begin{example}
	Consider the case where $L'\sim \mathrm{N}(0,1)$, 
	$p(x)=q(x)=\sin(2\pi x/\tau)$, for $\tau=3$,
	$g(x)=\exp(-0.5x)$.
	We simulate the processes $X$, $X^a$, $X^m$, $Y$ and the function  $q$ 
	on the grid $t_i=i\Delta_n$, for $i=0,\ldots, n=499$ and 
	$\Delta_n=0.1$.

\begin{figure}[htbp]
	\captionsetup[subfigure]{aboveskip=-4pt,belowskip=-4pt}
	\subfloat[Sample path of $X$	\label{fig:X}]{	\includegraphics[width=0.45\textwidth, height=0.1\textheight]{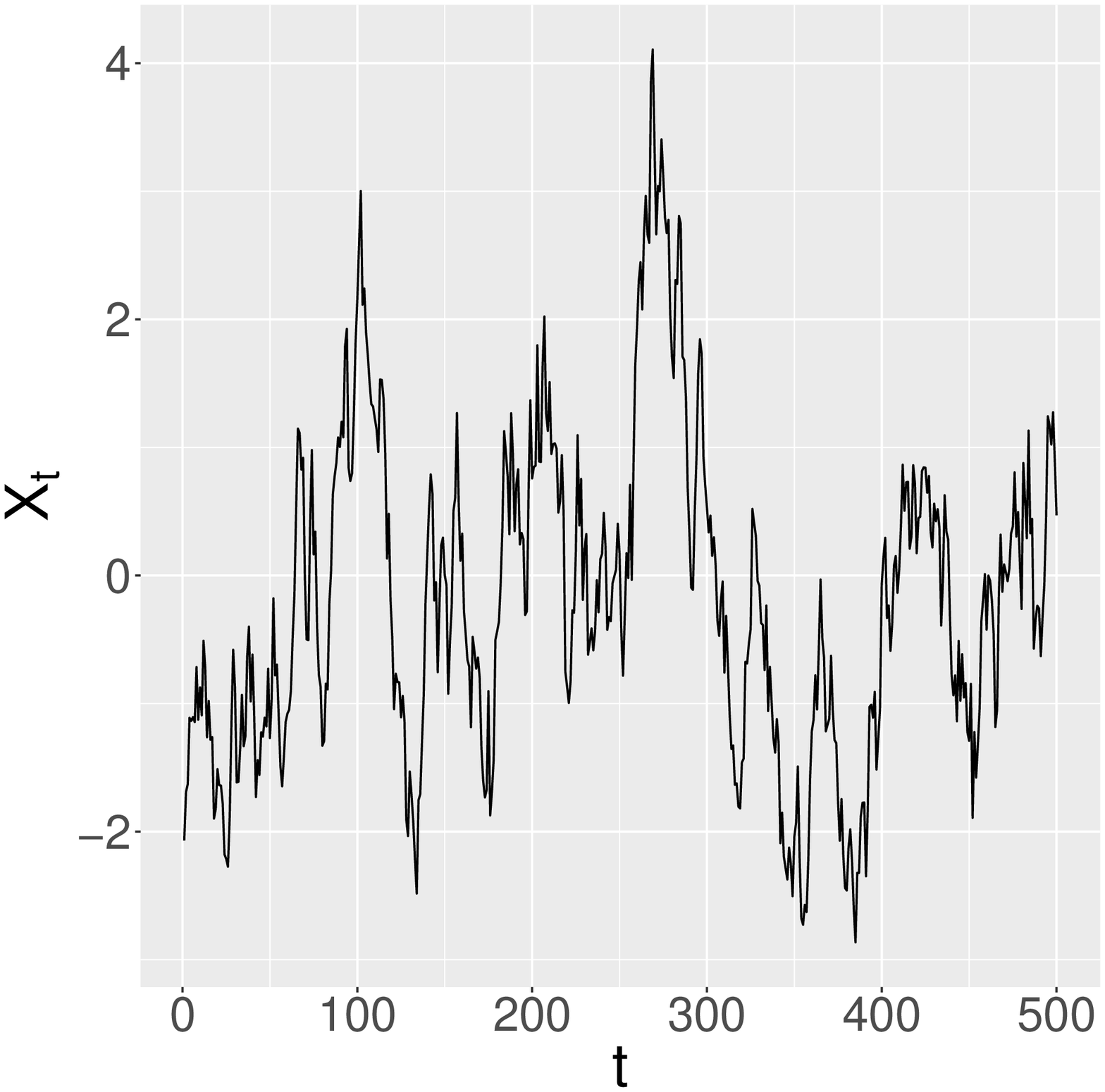} } %\\
	\captionsetup[subfigure]{aboveskip=-4pt,belowskip=-4pt}
	\subfloat[Empirical ACF of $X$\label{fig:ACF_X}]{	\includegraphics[width=0.45\textwidth, height=0.1\textheight]{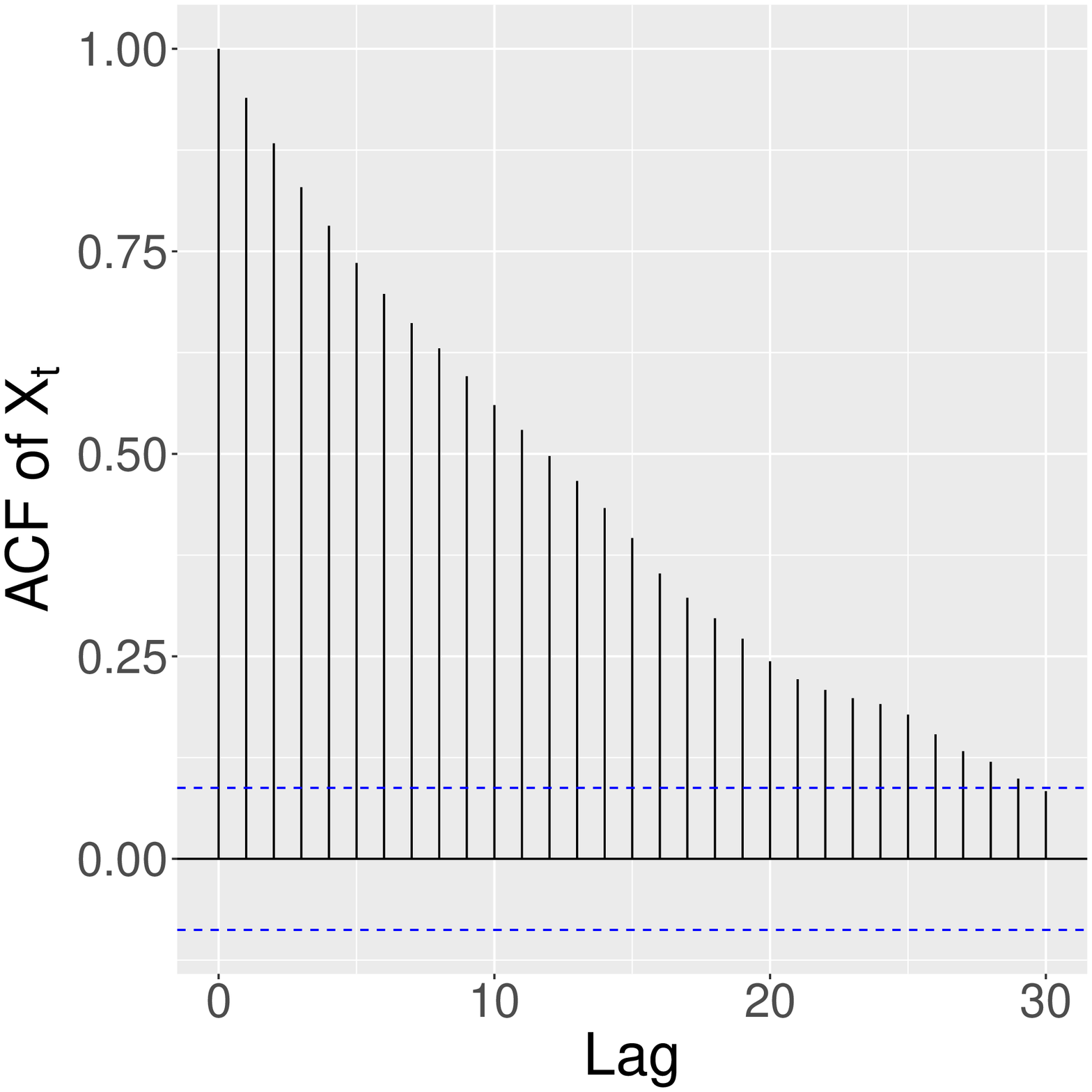}} 	\\
		\subfloat[Sample path of $X^a$	\label{fig:Xa}]{	\includegraphics[width=0.45\textwidth, height=0.1\textheight]{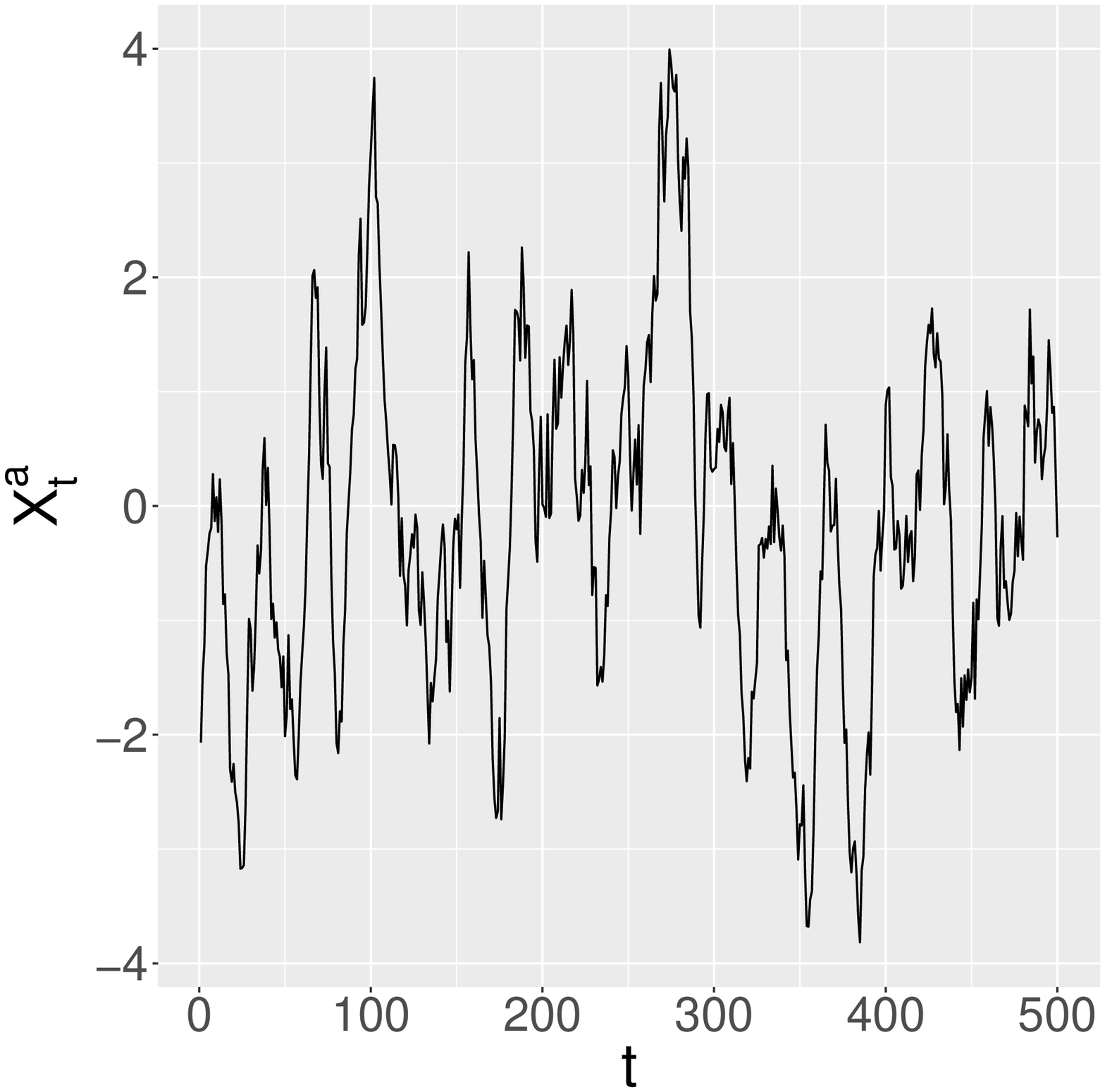} } %\\
	\captionsetup[subfigure]{aboveskip=-4pt,belowskip=-4pt}
	\subfloat[Empirical ACF of $X^a$\label{fig:ACF_Xa}]{	\includegraphics[width=0.45\textwidth, height=0.1\textheight]{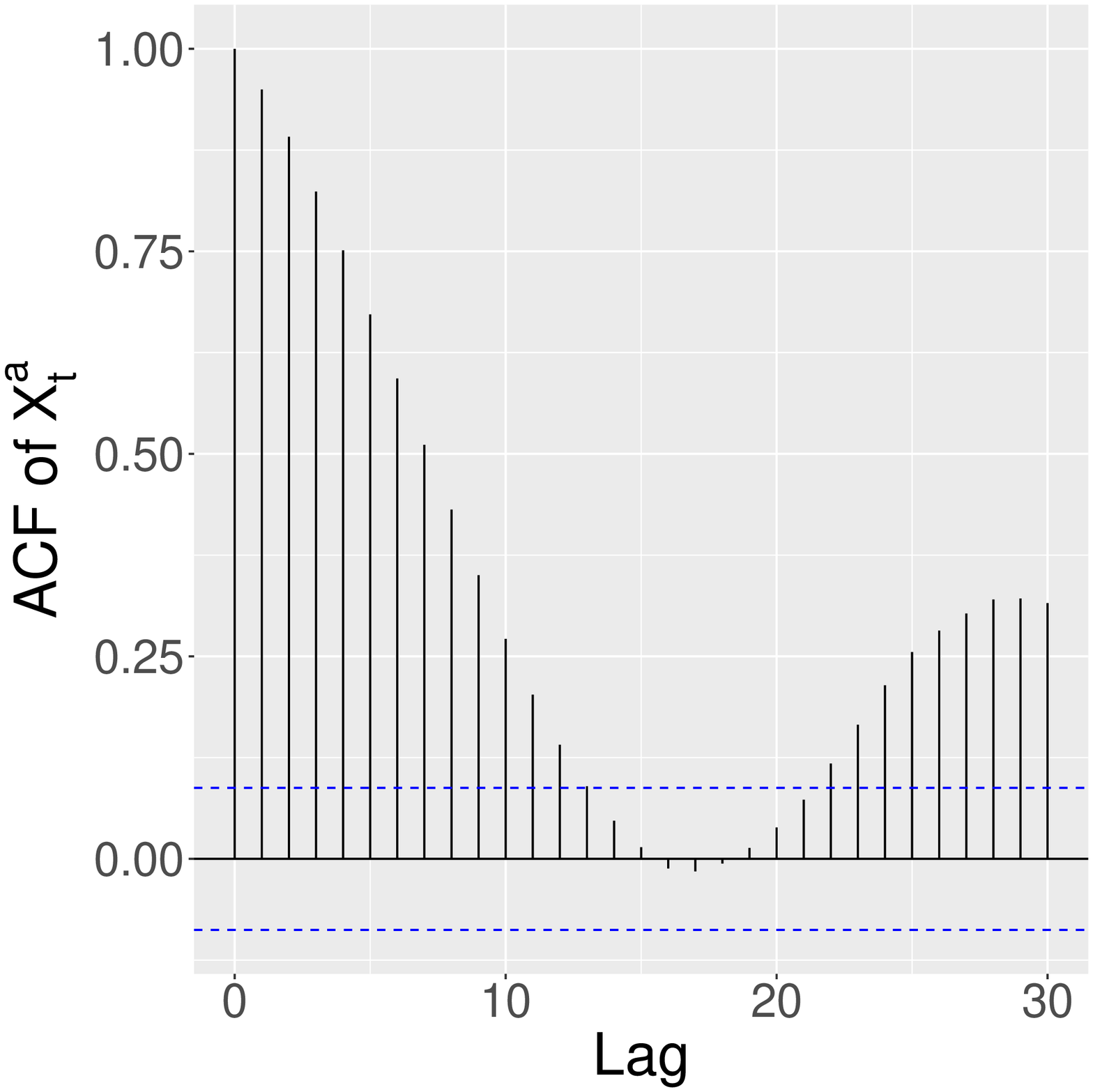}} 	\\
		\subfloat[Sample path of $X^m$	\label{fig:Xm}]{	\includegraphics[width=0.45\textwidth, height=0.1\textheight]{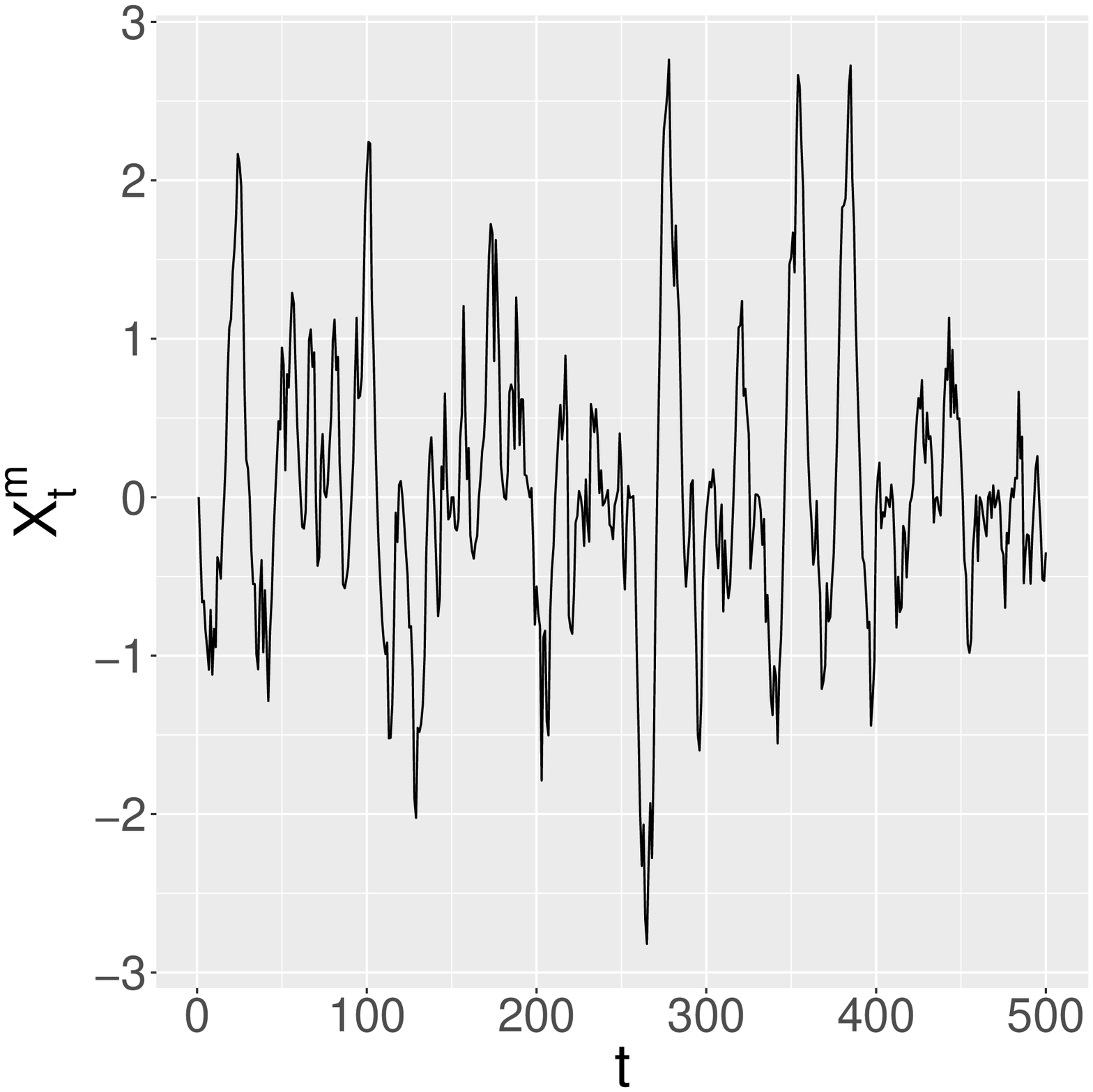} } %\\
	\captionsetup[subfigure]{aboveskip=-4pt,belowskip=-4pt}
	\subfloat[Empirical ACF of $X^m$\label{fig:ACF_Xm}]{	\includegraphics[width=0.45\textwidth, height=0.1\textheight]{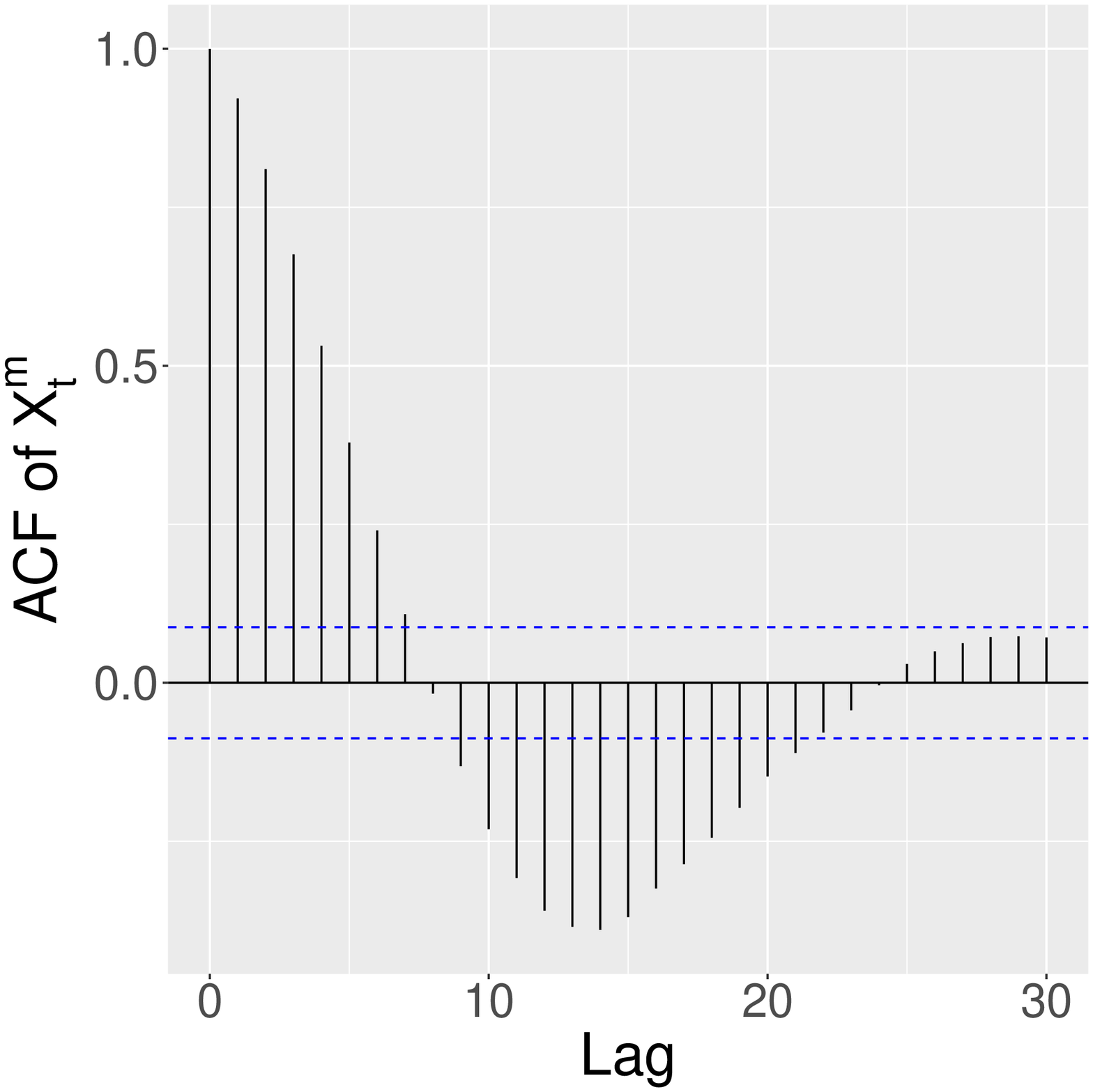}} 	\\
		\subfloat[	Sample path of $Y$\label{fig:Y}]{	\includegraphics[width=0.45\textwidth, height=0.1\textheight]{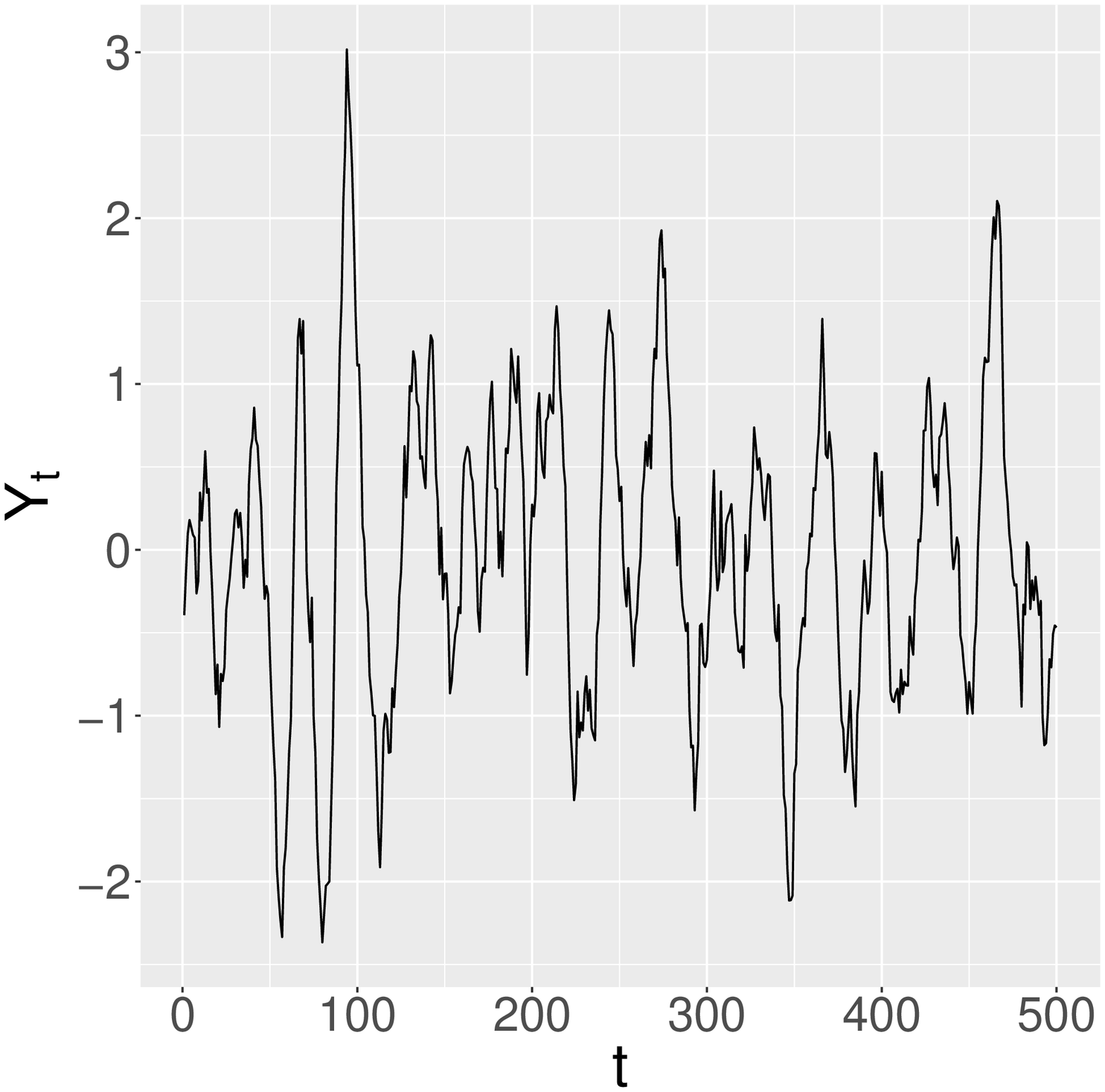} } %\\
	\captionsetup[subfigure]{aboveskip=-4pt,belowskip=-4pt}
	\subfloat[Empirical ACF of $Y$\label{fig:ACF_Y}]{	\includegraphics[width=0.45\textwidth, height=0.1\textheight]{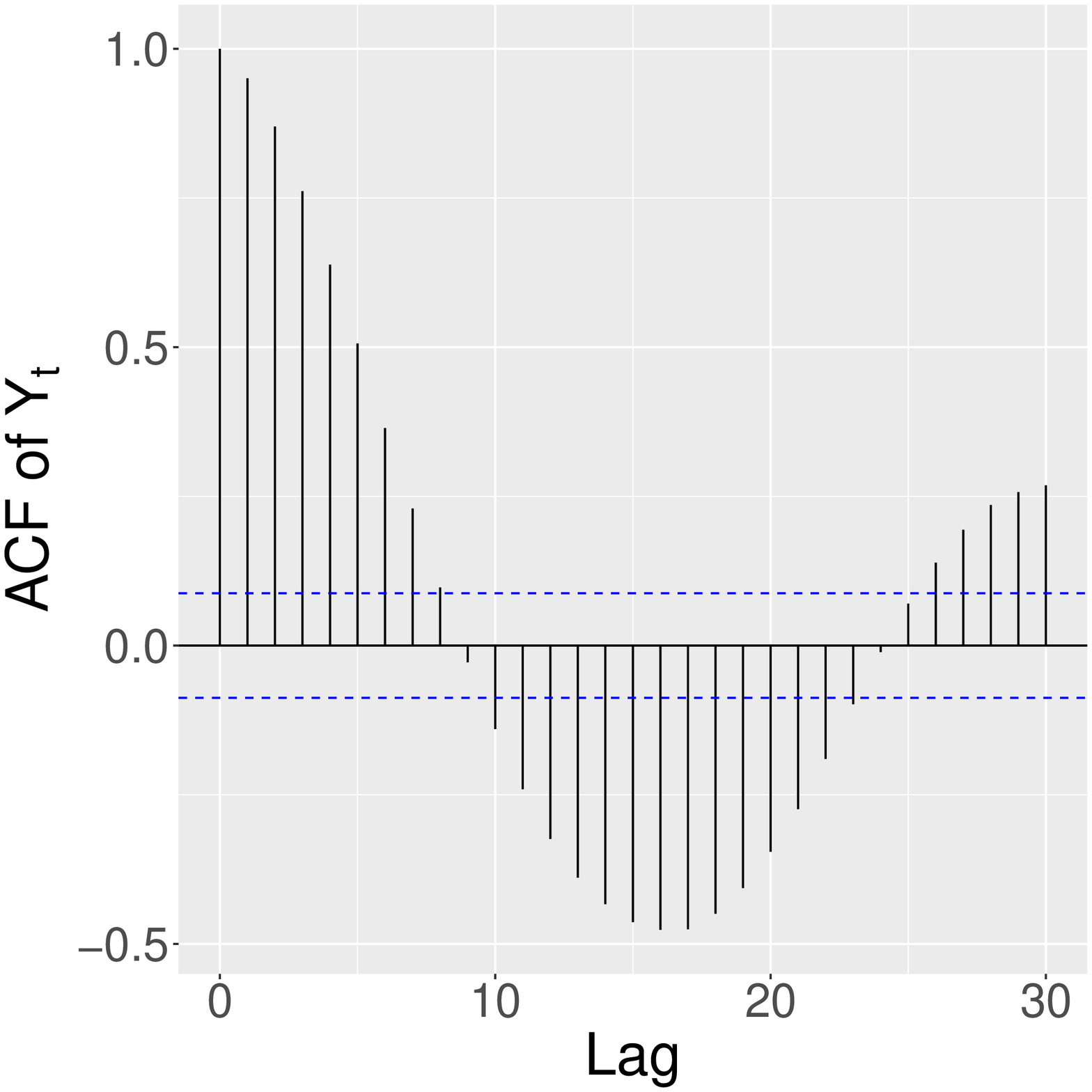}} 	\\
		\subfloat[Function $q()$	\label{fig:q}]{	\includegraphics[width=0.45\textwidth, height=0.1\textheight]{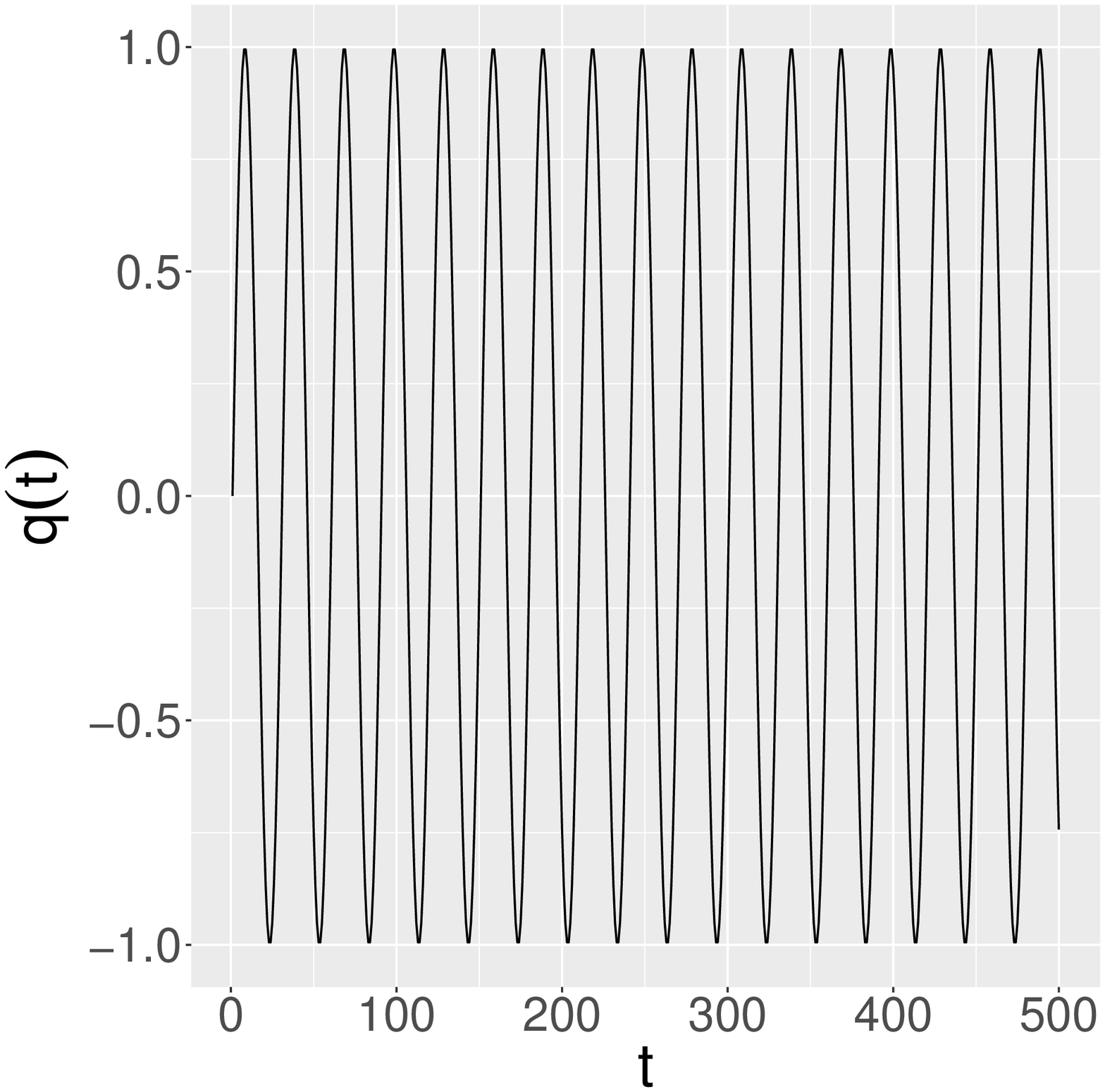} } %\\
	\captionsetup[subfigure]{aboveskip=-4pt,belowskip=-4pt}
	\subfloat[Empirical ACF of $q()$\label{fig:ACF_q}]{	\includegraphics[width=0.45\textwidth, height=0.1\textheight]{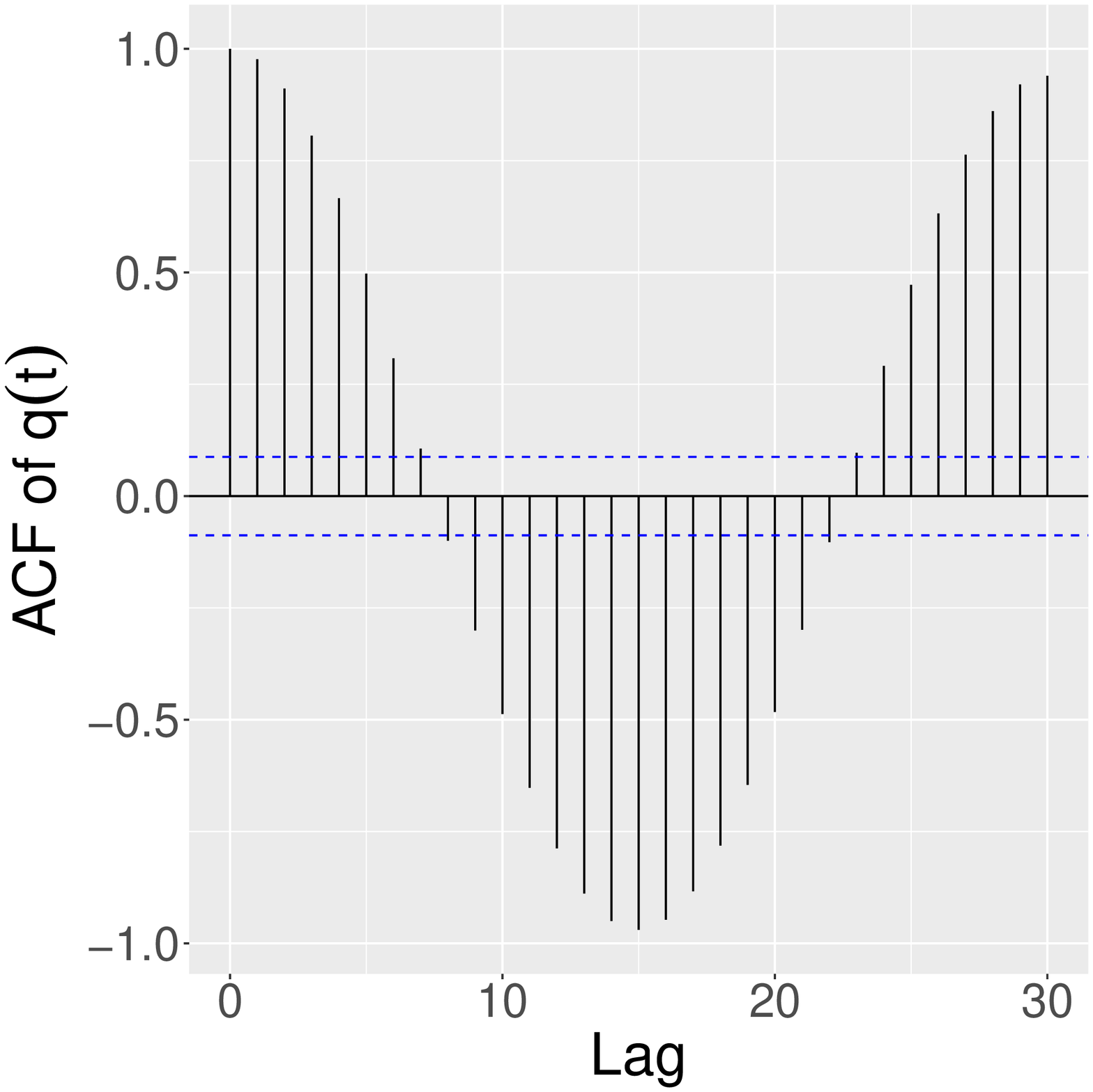}} 	\\
	\caption{\it    
		Comparing the sample paths and empirical autocorrelation functions of the  stationary trawl process $X$, the non-stationary processes $X^a$ and $X^m$, the stationary periodic trawl process $Y$ and the seasonality function $q$.}
	\label{fig:DetvsStochSeas}
\end{figure}

 We visually compare the sample paths and empirical autocorrelation functions of the stationary trawl process $X$, the nonstationary processes $X^a$ and $X^m$, the stationary periodic trawl process $Y$ and the seasonality function $q$ in Figure \ref{fig:DetvsStochSeas}, see \cite{PeriodicTrawl-Energy} for the corresponding {\tt R} code. Here we consider Gaussian processes, and the stochastic noise terms in each simulation of the various paths were kept the same to simplify the comparison. 
 
 We note that the sample paths and empirical autocorrelation functions of  $Y$ and $X^m$ look very similar, which suggests that it might be hard to distinguish between these two models in practice. Note however that the key probabilistic difference between the two processes is that $Y$ is stationary whereas $X^m$ is not.

  \end{example}

%%%%%%%%%%%%%%%%%%%%%%%%%%%%%%%%%%%%%%%%%%%%%%%%%%%%%
%%%%%%%%%%%%%%%%%%%%%%%%%%%%%%%%%%%%%%%%%%%%%%%%%%%%%
%%%%%%%%%%%%%%%%%%%%%%%%%%%%%%%%%%%%%%%%%%%%%%%%%
\section[Asymptotic theory for MMAPs]{Asymptotic theory for the sample mean, the sample autocovariances and the sample autocorrelations of mixed moving average processes}\label{sec:asymptotictheory}

Since MMAPs and hence also periodic trawl processes are mixing and, hence, ergodic, see \cite{FuchsStelzer2013}, we know that their corresponding moment estimators are consistent. 
In order to develop a suitable estimation theory for (periodic) trawl processes, we first derive the results in the more general setting of MMAPs.

Consider an MMAP given by 
 $Y=(Y_t)_{t\in \mathbb{R}}$ for 
\begin{align*}
Y_t =\mu +\int_{\mathbb{R}\times \mathbb{R}}f(x, t-s)L(dx, ds),
\end{align*}
for $\mu \in \R$, where $f$ satisfies the integrability conditions stated in \eqref{eq:intcondMA} almost surely. 

We are interested in the asymptotic behaviour of the sample mean and sample autocovariances/-correlations for such processes.

Note that for all $t \in \R$,
\begin{align*}
    \mathbb{E}(Y_t)=\mu + \mathbb{E}(L') \int_{\mathbb{R}\times \mathbb{R}}f(x, u)dx du,
\end{align*}
and the  autocovariance function of $Y$ is denoted by
\begin{align*}
\gamma(h)=\gamma_f(h)=\Cov(Y_0, Y_h)=\kappa_2\int_{\R \times \R}f(x, -s)f(x, h-s)dx ds,
\end{align*}
for any $h \in \R$, where $\kappa_2 =\Var(L')$.

Suppose that the process $Y$ is sampled over a fixed-time grid of width $\Delta>0$ at times $(n\Delta)_{n \in \mathbb{N}}$.

The proofs of the following results are presented in the Appendix, see \ref{ap:proofsgentheory}.
%%%%%%%%%%%%%%%%%%%%%%%%%%%%
%%%%%%%%%%%%%%%%%%%%%%%%%%%%%%%
%%%%%%%%%%%%%%%%%%%%%%%%%%%%%%%%
%%%%%%%%%%%%%%%%%%%%%%%%%%%%%%%%%%%
\subsection{Asymptotic normality of the sample mean}

We denote the sample mean  by
\begin{align*}
\overline{Y}_{n; \Delta}:=\frac{1}{n}\sum_{i=1}^nY_{i\Delta}.
\end{align*}
By adapting \citet[Theorem 2.1]{CohenLindner2013} to our more general setting, we get the following asymptotic result for the sample mean.

\begin{theorem}\label{thm:samplemean}
Suppose that $\mathbb{E}(L')=0, \kappa_2=\Var(L')<\infty, \mu \in \mathbb{R}$ and $\Delta >0$. 
Further, assume that 
\begin{align}\label{as:sumf2}
\left(F_{\Delta}:\mathbb{R}\times [0, \Delta] \to [0, \infty], (x, u) \mapsto F_{\Delta}(x,u)=\sum_{j=-\infty}^{\infty}|f(x, u+j\Delta)|\right) \in L^2(\R \times [0, \Delta]). 
\end{align}
Then $\sum_{j=-\infty}^{\infty} |\gamma(\Delta j)| < \infty$,
\begin{align}\label{as:sumgamma}
V_{\Delta}:=\sum_{j=-\infty}^{\infty}\gamma( \Delta j) = \kappa_2 \int_{\R \times [0, \Delta]} \left(\sum_{j=1}^{\infty}f(x, u+j\Delta)\right)^2 dx du, 
\end{align}
and the sample mean of $Y_{\Delta i}$, for $i=1, \ldots, n$, is asymptotically Gaussian as $n \to \infty$, i.e.
\begin{align*}
\sqrt{n}\left(\overline{Y}_{n; \Delta} - \mu \right)
\stackrel{\mathrm{d}}{\to} \mathrm{N}\left(0, V_{\Delta}\right), \quad \mathrm{as} \, n \to \infty.
\end{align*}
\end{theorem}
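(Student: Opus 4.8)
\emph{Proof proposal.}
The plan is to ``fold'' the time axis into blocks of length $\Delta$ so that the sampled sequence $(Y_{i\Delta})_{i\in\mathbb{Z}}$ becomes a moving average driven by an i.i.d.\ sequence of independent (infinite--dimensional) innovations, and then to establish a central limit theorem for the resulting row--wise sum of independent random variables. For $m\in\mathbb{Z}$ let $L_{m}(dx,du):=L(dx,\,m\Delta+du)$, $u\in[0,\Delta)$, denote the restriction of $L$ to the $m$--th time block; by independent scatteredness and homogeneity, $(L_{m})_{m\in\mathbb{Z}}$ are i.i.d.\ homogeneous \Levy bases on $\R\times[0,\Delta)$ with characteristic triplet $(\drift,a,\lev)$. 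Writing $s=m\Delta+u$ in $Y_{i\Delta}-\mu=\int_{\R\times\R}f(x,i\Delta-s)\,L(dx,ds)$ and reindexing the resulting double sum yields the a.s.\ identity
\begin{align*}
 \sqrt{n}\,\bigl(\overline{Y}_{n;\Delta}-\mu\bigr)=\frac{1}{\sqrt{n}}\sum_{m\in\mathbb{Z}}S_{n,m},\qquad S_{n,m}:=\int_{\R\times[0,\Delta)}f_{n,m}(x,u)\,L_{m}(dx,du),
\end{align*}
with the \emph{aggregated kernels} $f_{n,m}(x,u):=\sum_{i=1}^{n}f\bigl(x,(i-m)\Delta-u\bigr)$. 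For each fixed $n$ the $S_{n,m}$, $m\in\mathbb{Z}$, are independent and, since $\E(L')=0$, centred, so that $\sqrt{n}(\overline{Y}_{n;\Delta}-\mu)$ is a sum of independent random variables.

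The first step is a set of uniform estimates. Pointwise, $|f_{n,m}(x,u)|\le\sum_{j\in\mathbb{Z}}|f(x,j\Delta-u)|=F_{\Delta}(x,\Delta-u)$, a reflection of the function in \eqref{as:sumf2}, so that $f_{n,m}\in L^{2}(\R\times[0,\Delta))$ with $\|f_{n,m}\|_{L^{2}}\le\|F_{\Delta}\|_{L^{2}}$; together with the integrability conditions \eqref{eq:intcondMA} for $f$ (which the finite sums of translates $f_{n,m}$ inherit, with $n$--dependent constants) this makes each $S_{n,m}$ well defined, with $\sigma_{n,m}^{2}:=\Var(S_{n,m})=\kappa_{2}\|f_{n,m}\|_{L^{2}}^{2}\le\kappa_{2}\|F_{\Delta}\|_{L^{2}}^{2}<\infty$ uniformly in $n,m$. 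A useful device is the ``unfolding'' identity
\begin{align*}
 \sum_{m\in\mathbb{Z}}\int_{\R\times[0,\Delta)}h\bigl(f_{n,m}(x,u)\bigr)\,dx\,du=\int_{\R}\int_{\R}h\bigl(F_{n}(x,r)\bigr)\,dx\,dr,\qquad F_{n}(x,r):=\sum_{i=1}^{n}f(x,i\Delta-r),
\end{align*}
valid for $h\ge0$ with $h(0)=0$; taking $h(y)=y^{2}$ gives $\sum_{m}\sigma_{n,m}^{2}=\Var\bigl(\sum_{i=1}^{n}Y_{i\Delta}\bigr)$. The absolute summability $\sum_{j}|\gamma(\Delta j)|<\infty$ follows from $\gamma(\Delta j)=\kappa_{2}\int f(x,-s)f(x,\Delta j-s)\,dx\,ds$ via Tonelli, Cauchy--Schwarz and one more folding of the $s$--integral, all controlled by \eqref{as:sumf2}; hence $\Var(\sqrt{n}\,\overline{Y}_{n;\Delta})=\sum_{|j|<n}(1-|j|/n)\gamma(\Delta j)\to\sum_{j}\gamma(\Delta j)=:V_{\Delta}$ by dominated convergence, and the representation \eqref{as:sumgamma} of $V_{\Delta}$ is obtained by dividing $\sum_{m}\sigma_{n,m}^{2}$ by $n$, using the unfolding identity together with the pointwise convergence of $f_{n,m}$ on the bulk blocks, and letting $n\to\infty$.

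For the asymptotic normality I would verify a Lindeberg condition for the array $\{n^{-1/2}S_{n,m}\}_{m\in\mathbb{Z}}$ by a double truncation in the spirit of \citet[Theorem 2.1]{CohenLindner2013}. Fix $R>1$ and write $L=L^{(R)}+M^{(R)}$, where $L^{(R)}$ carries the Gaussian part and the jumps of size $\le R$ (so its seed has finite moments of all orders, in particular a finite third absolute moment) and $M^{(R)}$ is the big--jump part, whose seed has variance $\int_{|\xi|>R}\xi^{2}\lev(d\xi)\to0$ as $R\to\infty$ (using $\kappa_{2}<\infty$); fix also $K>0$ and split $f_{n,m}=f_{n,m}\ind_{\{|f_{n,m}|\le K\}}+f_{n,m}\ind_{\{|f_{n,m}|>K\}}$. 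This decomposes $\sqrt{n}(\overline{Y}_{n;\Delta}-\mu)$ into three sums of independent, centred random variables. Using the domination $|f_{n,m}|\le F_{\Delta}(\cdot,\Delta-\cdot)$, the unfolding identity and dominated convergence, one checks that the ``big--jump'' and ``tall--kernel'' sums each have variance that becomes small once $R$, respectively $K$, is taken large (with $n\to\infty$). For the remaining ``bounded--kernel/small--jump'' sum the summands are bounded by $K$ in sup--norm and by $\|F_{\Delta}\|_{L^{2}}$ in $L^{2}$--norm, hence have uniformly bounded third absolute moments whose sum over $m$ is $O(n)$ (again by the unfolding identity); a Lyapunov condition of exponent $3$ then applies and this sum satisfies a CLT with a variance that converges to $V_{\Delta}$ as $R,K\to\infty$. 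Combining the three sums — for instance via characteristic functions, first choosing $R,K$ large and then letting $n\to\infty$ — gives $\sqrt{n}(\overline{Y}_{n;\Delta}-\mu)\stackrel{\mathrm{d}}{\to}\mathrm{N}(0,V_{\Delta})$.

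I expect the Lindeberg step to be the main obstacle, and within it the requirement that the ``remainder'' sums be controlled \emph{uniformly in} $n$: because each $f_{n,m}$ is a sum of $n$ translates of $f$, naive estimates give only $O(n)$ after summing over $m$, i.e.\ $O(1)$ after dividing by $n$. The resolution rests on the uniform domination $|f_{n,m}|\le F_{\Delta}(\cdot,\Delta-\cdot)$ — which is exactly where assumption \eqref{as:sumf2} is used, and which implicitly encodes short memory — together with the unfolding identity, which converts an $m$--sum of block integrals into a single integral over $\R\times\R$ against a dominating function that is $\Delta$--periodic in its time argument, so that dominated convergence can be applied. Apart from this the adaptation of the Cohen--Lindner argument to the present MMAP setting is routine: the extra spatial variable $x$ is carried through unchanged, and the kernel hypotheses of the one--dimensional continuous--time moving average are replaced by \eqref{eq:intcondMA} and \eqref{as:sumf2}.
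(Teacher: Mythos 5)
Your argument is correct in outline, but it takes a genuinely different route from the paper. The paper follows \citet[Theorem 2.1]{CohenLindner2013}: it truncates the kernel in time, $f_{m;\Delta}(x,s)=f(x,s)\ind_{(-m\Delta,m\Delta)}(s)$, so that the sampled approximating sequence is strictly stationary and $(2m-1)$-dependent, invokes the classical CLT for $m$-dependent sequences \citep[Theorem 6.4.2]{BrockwellDavis1987}, lets $m\to\infty$ via dominated convergence of the variances, controls $\Var\bigl(\sqrt{n}(\overline{Y}_{n;\Delta}-\overline{Y}^{(m)}_{n;\Delta})\bigr)$ by Chebyshev, and closes with the Slutsky-type approximation lemma \citep[Proposition 6.3.9]{BrockwellDavis1987}. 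You instead work with the \emph{exact} decomposition of $\sum_{i=1}^n(Y_{i\Delta}-\mu)$ into independent block contributions $S_{n,m}$ and verify a Lindeberg/Lyapunov condition directly, which forces the double truncation (in jump size $R$ and kernel height $K$) because only $\kappa_2<\infty$ is assumed. What your approach buys is that no approximating process is needed and the independence structure is exact; what it costs is that the Lindeberg verification must be done by hand, whereas the paper outsources it to the $m$-dependent CLT. Both proofs ultimately rest on the same use of \eqref{as:sumf2} through the domination $|f_{n,m}(x,u)|\le F_\Delta(x,\Delta-u)$ (your analogue of the paper's bound \eqref{eq:bound}). Two points in your sketch deserve to be written out: (i) the uniform-in-$n$ smallness of the tall-kernel variance, which follows from $\frac{1}{n}\int F_n^2\,\ind_{\{|F_n|>K\}}\le\int_{\R\times[0,\Delta]}F_\Delta^2\,\ind_{\{F_\Delta>K\}}\,dx\,du\to0$ as $K\to\infty$ after unfolding and using $\sum_m f_{n,m}(x,u)^2\le n\,F_\Delta(x,\Delta-u)^2$; and (ii) since your kernel truncation level $\{|f_{n,m}|\le K\}$ depends on $n$, the variance of the truncated main term need not converge as $n\to\infty$ for fixed $R,K$, so the final characteristic-function argument should be phrased as a $\limsup$/$\liminf$ sandwich (or you can truncate at the $n$-independent level $\{F_\Delta(\cdot,\Delta-\cdot)\le K\}$ to avoid this). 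With these details supplied, your proof is complete and yields the same limit $\mathrm{N}(0,V_\Delta)$, with \eqref{as:sumgamma} obtained by the same folding computation either from $\sum_j\gamma(j\Delta)$ directly (as in the paper) or from $n^{-1}\sum_m\sigma_{n,m}^2$ as you propose.
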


\begin{remark}
As in \cite{CohenLindner2013}, we remark that the assumption that
$\mathbb{E}(L')=0$ can be removed if the kernel function $f$ satisfies $f \in L^1(\R \times \R)\cap L^2(\R \times \R)$.
\end{remark}

%%%%%%%%%%%%%%%%%%%%%%%%%%%5
\subsection{Asymptotic normality of the sample autocovariance}
We define the sample autocovariance function as
\begin{align*}
\widehat{\gamma}_{n;\Delta}(\Delta h):= n^{-1}\sum_{j=1}^{n-h}(Y_{j\Delta}-\overline{Y}_{n,\Delta})(Y_{(j+h)\Delta}-\overline{Y}_{n,\Delta}), \quad h \in \{0, \ldots, n-1\},
\end{align*}
and the sample autocorrelation function as
\begin{align*}
\widehat{\rho}_{n;\Delta}(\Delta h)=\frac{\widehat{\gamma}_{n;\Delta}(\Delta h)}{\widehat{\gamma}_{n;\Delta}(0)}, \quad h \in \{0, \ldots, n-1\}.
\end{align*}

In the case when $\mu=0$ (and we assumed that $\mathbb{E}(L')=0$), one could use the following simpler estimators
\begin{align*}
\widehat{\gamma}_{n;\Delta}^*(\Delta h):= n^{-1}\sum_{j=1}^{n}Y_{j\Delta}Y_{(j+h)\Delta}, \quad h \in \{0, \ldots, n-1\},
\end{align*}
and 
\begin{align*}
\widehat{\rho}_{n;\Delta}^*(\Delta h)=\frac{\widehat{\gamma}_{n;\Delta}^*(\Delta h)}{\widehat{\gamma}_{n;\Delta}^*(0)}, \quad h \in \{0, \ldots, n-1\}.
\end{align*}

Consider the case where $\mu=0$, for which we will study the limiting behaviour of 
$\Cov(\widehat{\gamma}_{n,\Delta}^*(\Delta p), \widehat{\gamma}_{n,\Delta}^*(\Delta q))$. For this, we need a formula for the fourth (joint) moments of the mixed moving average process, which we derive in the following lemma.

\begin{lemma}\label{lem:fourthmoment}
Let $L'$ be a non-zero L\'{e}vy seed with $\mathbb{E}(L')=0$ and $\mathbb{E}(L'^4)<\infty$. Let $\kappa_2:=\mathbb{E}(L'^2)=\Var(L')$, $\eta:= \mathbb{E}(L'^4)/\kappa_2^2$ and $\kappa_4:=(\eta-3)\kappa_2^2$. Let $\Delta>0$, and assume that $f \in L^2(\R\times \R)\cap L^4(\R \times \R)$.
Then, for $t_1, t_2, t_3, t_4 \in \R$, we have
\begin{align*}
&\mathbb{E}(Y_{t_1}Y_{t_2}Y_{t_3}Y_{t_4})\\
&= \gamma(t_1-t_2)\gamma(t_3-t_4) 
+
\gamma(t_1-t_3) \gamma(t_2-t_4)
+\gamma(t_1-t_4)\gamma(t_2-t_3) \\
& \quad + \kappa_4
\int_{\R \times \R} f(x, t_1-t_3+s) f(x, t_2-t_3)
f(x, s) f(x, t_4-t_3+s) dx ds.
\end{align*}
\end{lemma}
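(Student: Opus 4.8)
<br>

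The plan is to compute the fourth joint moment by expressing each $Y_{t_i}$ as a stochastic integral against the \Levy basis $L$ and then invoking a joint-cumulant/moment expansion for integrals against an infinitely divisible random measure. Concretely, write $Y_{t_i} = \int_{\R\times\R} f(x, t_i - s)\, L(dx, ds)$ (the constant $\mu$ drops out since we may centre, and $\mathbb{E}(L')=0$ removes the drift contribution). The key tool is the formula for joint cumulants of Wiener--type integrals against a homogeneous \Levy basis: for the independently scattered random measure $L$ with \Levy seed $L'$, the $k$-th joint cumulant of $\int h_1\, dL, \ldots, \int h_k\, dL$ equals $\kappa_k \int_{\R\times\R} \prod_{j=1}^k h_j(x,s)\, dx\, ds$, where $\kappa_k$ is the $k$-th cumulant of $L'$ (this follows by differentiating the cumulant function in Proposition \ref{pro:exMMA}, suitably polarised). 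Under $f \in L^2 \cap L^4$ all the relevant integrals converge absolutely, and the moment/cumulant bookkeeping is legitimate.

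The main steps are then: (i) express $\mathbb{E}(Y_{t_1}Y_{t_2}Y_{t_3}Y_{t_4})$ in terms of joint cumulants via the standard moment--cumulant relation, which for four mean-zero variables reads
\begin{align*}
\mathbb{E}(Y_{t_1}Y_{t_2}Y_{t_3}Y_{t_4}) = \mathrm{cum}(Y_{t_1},\ldots,Y_{t_4}) + \sum_{\text{pairings}} \mathrm{cum}(Y_{t_i},Y_{t_j})\,\mathrm{cum}(Y_{t_k},Y_{t_l}),
\end{align*}
where the sum runs over the three ways of pairing $\{1,2,3,4\}$ (terms involving a single $Y_{t_i}$ vanish since $\mathbb{E}(Y_{t_i})=0$); (ii) identify the three second-order cumulants with the autocovariances, $\mathrm{cum}(Y_{t_i},Y_{t_j}) = \gamma(t_i - t_j) = \kappa_2 \int f(x,-s)f(x, t_i - t_j - s)\,dx\,ds$, using stationarity to shift the dummy variable; (iii) identify the fourth joint cumulant: by the cumulant formula above,
\begin{align*}
\mathrm{cum}(Y_{t_1},Y_{t_2},Y_{t_3},Y_{t_4}) = \kappa_4 \int_{\R\times\R} \prod_{j=1}^4 f(x, t_j - s)\, dx\, ds,
\end{align*}
and then change variables $s \mapsto t_3 - s$ (equivalently shift so that the $t_3$-argument becomes $f(x, s)$... wait, more precisely substitute $s' = s - t_3 + s$) — actually one simply substitutes $u = t_3 - s$, giving $\int f(x, t_1 - t_3 + u) f(x, t_2 - t_3 + u) f(x, u) f(x, t_4 - t_3 + u)\, dx\, du$; matching this against the stated expression (which has $f(x, t_2 - t_3)$ without a $+s$) suggests the paper intends a further specialisation, so I would double-check the exact indexing, but the structural content is exactly the fourth-cumulant term.

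The step I expect to require the most care is justifying the joint-cumulant formula for stochastic integrals against the \Levy basis rigorously under only $f \in L^2 \cap L^4$ — i.e., differentiating the joint characteristic function $\mathbb{E}\exp(i\sum_j \theta_j \int f(x,t_j-s)L(dx,ds)) = \exp(\int_{\R\times\R} C(\sum_j \theta_j f(x,t_j-s); L')\, dx\, ds)$ four times at $\theta = 0$ and exchanging differentiation with the integral. This requires a dominated-convergence argument using the $L^4$ (and $L^2$) integrability of $f$ together with $\mathbb{E}(L'^4) < \infty$ to control the difference quotients; the polynomial growth of the relevant derivatives of $\xi \mapsto e^{i\theta\xi}$ against the moment bounds on $\lev$ makes this routine but not entirely trivial. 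Once the cumulant formula is in hand, the remainder is purely the combinatorial moment--cumulant expansion and a change of variables, which I would present compactly. I would also remark that the symmetry of the expression in $t_1,\ldots,t_4$ is a useful consistency check on the final formula.
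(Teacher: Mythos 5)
Your proposal is correct and follows essentially the same route as the paper: both differentiate the joint characteristic function $\exp\left(\int_{\R\times\R} C\left(\sum_j a_j f(x,t_j-s);L'\right)dx\,ds\right)$ four times at zero, the paper carrying out the derivative bookkeeping of $\exp(C)$ by hand where you invoke the moment--cumulant relation and the joint-cumulant formula for L\'{e}vy integrals directly. Your flag about the indexing is also warranted: after the substitution the second factor should read $f(x,t_2-t_3+s)$, and the missing $+s$ is a typo that persists in the paper's own statement and proof.
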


The following result is an extension of \citet[Proposition 3.1]{CohenLindner2013}.
\begin{proposition}\label{prop:covlimit}
Let $\mu=0$. Let $L'$ be a non-zero L\'{e}vy seed with $\mathbb{E}(L')=0$ and $\mathbb{E}(L'^4)<\infty$. Let $\kappa_2:=\mathbb{E}(L'^2)=\Var(L')$ and $\eta:= \mathbb{E}(L'^4)/\kappa_2^2$. Let $\Delta>0$, and assume that $f \in L^2(\R\times \R)\cap L^4(\R \times \R)$ and
\begin{align}\label{as:sumf2b}
\left(\mathbb{R}\times [0, \Delta] \to \R, (x, u) \mapsto \sum_{j=-\infty}^{\infty}f^2(x, u+j\Delta)\right) \in L^2(\R \times [0, \Delta]). 
\end{align}
For $q \in \mathbb{Z}$, define the function
\begin{align}\label{as:sumf2c}
\left(G_{q;\Delta}:\mathbb{R}\times [0, \Delta] \to \R, (x, u) \mapsto G_{q;\Delta}(x,u)=\sum_{j=-\infty}^{\infty}f(x, u+j\Delta)f(x, u+(j+q)\Delta)\right),
\end{align}
which is in $L^2(\R \times [0, \Delta])$ 
due to \eqref{as:sumf2b}.

Further, assume that 
\begin{align}\label{eq:sumgamma2-cov}
\sum_{j=-\infty}^{\infty}|\gamma(\Delta j)|^2<\infty.
\end{align}
Then, for each $p, q \in \mathbb{N}$, we have
\begin{align}\begin{split}\label{eq:sumT}
&\lim_{n\to \infty}n\Cov(\widehat{\gamma}_{n;\Delta}^*(\Delta p), \widehat{\gamma}_{n;\Delta}^*(\Delta q)) =v_{pq;\Delta},\\
v_{pq;\Delta}&:=(\eta-3)\kappa_2^2\int_{\R\times[0,\Delta]}G_{p;\Delta}(x,u)G_{q;\Delta}(x,u)dxdu \\
&+\sum_{l=-\infty}^{\infty}[\gamma(l\Delta)\gamma((l+p-q)\Delta)
+
\gamma((l-q)\Delta)\gamma((l+p)\Delta)
].
\end{split}
\end{align}
\end{proposition}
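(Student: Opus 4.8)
The plan is to expand the covariance of the two quadratic statistics into a double sum over the sampling times, reduce it by stationarity to a single (Cesàro‑weighted) sum over the lag $l=j-k$, invoke the fourth‑moment formula of Lemma~\ref{lem:fourthmoment}, and then pass to the limit by dominated convergence. Since $\mu=0$ and $\mathbb{E}(L')=0$ give $\mathbb{E}(Y_t)=0$, and hence $\mathbb{E}(Y_{j\Delta}Y_{(j+p)\Delta})=\gamma(p\Delta)$ for every $j$, one has
\begin{align*}
n\,\Cov\bigl(\widehat{\gamma}^*_{n;\Delta}(\Delta p),\widehat{\gamma}^*_{n;\Delta}(\Delta q)\bigr)
&=\frac1n\sum_{j=1}^{n}\sum_{k=1}^{n}\Cov\bigl(Y_{j\Delta}Y_{(j+p)\Delta},\,Y_{k\Delta}Y_{(k+q)\Delta}\bigr)\\
&=\sum_{|l|<n}\Bigl(1-\tfrac{|l|}{n}\Bigr)a_l,\qquad a_l:=\Cov\bigl(Y_{l\Delta}Y_{(l+p)\Delta},\,Y_{0}Y_{q\Delta}\bigr),
\end{align*}
the second equality using stationarity of $Y$ (shift the second pair to the origin) and collecting the $n-|l|$ pairs $(j,k)$ with $j-k=l$. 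I would then establish $\sum_{l\in\Z}|a_l|<\infty$ and compute $\sum_{l\in\Z}a_l$; since for each fixed $l$ the weight $1-|l|/n$ (taken to be $0$ once $n\le|l|$) tends to $1$ and is bounded by $1$, dominated convergence over $\Z$ with the counting measure then yields $n\,\Cov(\cdots)\to\sum_{l\in\Z}a_l$.

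To identify $a_l$, set
\begin{align*}
I_l:=\int_{\R\times\R}f\bigl(x,l\Delta+s\bigr)\,f\bigl(x,(l+p)\Delta+s\bigr)\,f\bigl(x,s\bigr)\,f\bigl(x,q\Delta+s\bigr)\,dx\,ds,
\end{align*}
and apply Lemma~\ref{lem:fourthmoment} with $(t_1,t_2,t_3,t_4)=(l\Delta,(l+p)\Delta,0,q\Delta)$: using that $\gamma$ is even, its three pair‑product terms are $\gamma(p\Delta)\gamma(q\Delta)$, $\gamma(l\Delta)\gamma((l+p-q)\Delta)$ and $\gamma((l-q)\Delta)\gamma((l+p)\Delta)$, while its fourth‑cumulant term equals $\kappa_4 I_l$ with $\kappa_4=(\eta-3)\kappa_2^2$. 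Subtracting $\mathbb{E}(Y_{l\Delta}Y_{(l+p)\Delta})\mathbb{E}(Y_0Y_{q\Delta})=\gamma(p\Delta)\gamma(q\Delta)$ cancels the first pair term, leaving
\begin{align*}
a_l=\gamma(l\Delta)\gamma\bigl((l+p-q)\Delta\bigr)+\gamma\bigl((l-q)\Delta\bigr)\gamma\bigl((l+p)\Delta\bigr)+\kappa_4 I_l.
\end{align*}

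It remains to treat the three pieces. For the $\gamma$‑products, the Cauchy--Schwarz inequality and \eqref{eq:sumgamma2-cov} give $\sum_{l\in\Z}|\gamma(l\Delta)\gamma((l+p-q)\Delta)|\le\sum_{l\in\Z}\gamma(l\Delta)^2<\infty$, and likewise for the other product; summing over $l$ produces the second group of terms in $v_{pq;\Delta}$. For the $I_l$ part I would partition $\R$ into the intervals $[m\Delta,(m+1)\Delta)$, $m\in\Z$, writing $s=m\Delta+u$ with $u\in[0,\Delta]$, so that
\begin{align*}
I_l=\sum_{m\in\Z}\int_{\R}\int_{0}^{\Delta}f\bigl(x,(l+m)\Delta+u\bigr)f\bigl(x,(l+p+m)\Delta+u\bigr)f\bigl(x,m\Delta+u\bigr)f\bigl(x,(q+m)\Delta+u\bigr)\,du\,dx.
\end{align*}
With $\phi(x,u):=\sum_{j\in\Z}f^2(x,u+j\Delta)$, which lies in $L^2(\R\times[0,\Delta])$ by \eqref{as:sumf2b}, the estimate $\sum_{j\in\Z}|f(x,u+j\Delta)||f(x,u+(j+r)\Delta)|\le\phi(x,u)$ (Cauchy--Schwarz for series, reindexing the second one) and Tonelli's theorem give $\sum_{l\in\Z}|I_l|\le\int_{\R\times[0,\Delta]}\phi^2\,dx\,du=\|\phi\|_{L^2(\R\times[0,\Delta])}^2<\infty$, hence $\sum_{l\in\Z}|a_l|<\infty$. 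Carrying out the same cell decomposition, summing over $l$, substituting $l'=l+m$ for each fixed $m$, and invoking Fubini's theorem (legitimate by the last bound), the resulting double series factorises into the two functions $G_{p;\Delta}$ and $G_{q;\Delta}$ of \eqref{as:sumf2c}, giving $\sum_{l\in\Z}I_l=\int_{\R\times[0,\Delta]}G_{p;\Delta}(x,u)G_{q;\Delta}(x,u)\,dx\,du$, which is finite since $|G_{r;\Delta}|\le\phi\in L^2(\R\times[0,\Delta])$. Altogether,
\begin{align*}
\sum_{l\in\Z}a_l&=(\eta-3)\kappa_2^2\int_{\R\times[0,\Delta]}G_{p;\Delta}(x,u)G_{q;\Delta}(x,u)\,dx\,du\\
&\quad+\sum_{l\in\Z}\bigl[\gamma(l\Delta)\gamma((l+p-q)\Delta)+\gamma((l-q)\Delta)\gamma((l+p)\Delta)\bigr]=v_{pq;\Delta},
\end{align*}
and the dominated‑convergence step of the first paragraph then gives \eqref{eq:sumT}. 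I expect the only genuinely delicate point to be the bookkeeping around the $I_l$ term: partitioning $\R$ into $\Delta$‑cells, shifting the summation index via $l'=l+m$, and verifying (through the domination by $\phi^2$) that this legitimises interchanging the double series with the $(x,u)$‑integration; the fourth‑moment algebra and the dominated‑convergence argument over $\Z$ are routine.
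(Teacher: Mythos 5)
Your proposal is correct and follows essentially the same route as the paper's proof: expand the covariance into a Cesàro-weighted sum over the lag $l=j-k$, apply Lemma~\ref{lem:fourthmoment}, control the $\gamma$-products via \eqref{eq:sumgamma2-cov} and the fourth-cumulant integrals via the $L^2$ bound from \eqref{as:sumf2b}, and conclude by dominated convergence; your partition of the $s$-axis into $\Delta$-cells with the shift $l'=l+m$ is exactly the paper's periodic-continuation argument for $\widetilde{G}_{q;\Delta}$ in different notation.
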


We now state a joint central limit theorem for the sample autocovariance, autocorrelations and their counterparts when $\mu=0$.

\begin{theorem}\label{thm:acflimits}
\begin{enumerate}
    \item Suppose that the assumptions of Proposition \ref{prop:covlimit} hold and that, in addition,
    \begin{align}\label{as:finsumsquint}
    \sum_{j=-\infty}^{\infty}\left(\int_{\R \times \R } |f(x, u)||f(x, u+j\Delta)| dxdu\right)^2<\infty.
    \end{align}
    Then, for each $h\in \mathbb{N}$, we have
    \begin{align*}
        \sqrt{n}(\gamma_{n;\Delta}^*(0)-\gamma(0), \ldots, \gamma_{n;\Delta}^*(h\Delta)-\gamma(h\Delta))^{\top} \stackrel{\mathrm{d}}{\to}\mathrm{N}(0,V_{\Delta}), \quad n \to \infty,
            \end{align*}
            where the asymptotic covariance matrix is given by $V_{\Delta}=(v_{pq, \Delta})_{p,q=0,\ldots,h}\in \R^{h+1,h+1}$ with $v_{pq; \Delta}$ defined as in   \eqref{eq:sumT}.

\item Suppose that the same assumptions as in 1.) hold and further assume that the function 
$(x,u)\mapsto \sum_{j=-\infty}^{\infty}|f(x,u+j\Delta)|$ is in $L^2(\R \times [0,\Delta])$.
Then, for each $h\in \mathbb{N}$, we have
    \begin{align*}
        \sqrt{n}(\widehat{\gamma}_{n;\Delta}(0)-\gamma(0), \ldots, \widehat{\gamma}_{n;\Delta}(h\Delta)-\gamma(h\Delta))^{\top} \stackrel{\mathrm{d}}{\to}\mathrm{N}(0,V_{\Delta}), \quad n \to \infty,
            \end{align*}
            where the asymptotic covariance matrix is given by $V_{\Delta}=(v_{pq; \Delta})_{p,q=0,\ldots,h}\in \R^{h+1,h+1}$ with $v_{pq; \Delta}$ defined as in   \eqref{eq:sumT}.
\item Suppose that the same assumptions as in 1.) hold and that $f$ is not almost everywhere equal to  zero.
  Then, for each $h\in \mathbb{N}$, we have
    \begin{align*}
        \sqrt{n}(\rho^*_{n;\Delta}(\Delta)-\rho(\Delta), \ldots, \rho^*_{n;\Delta}(h\Delta)-\rho(h\Delta))^{\top} \stackrel{\mathrm{d}}{\to}\mathrm{N}(0,W_{\Delta}), \quad n \to \infty,
            \end{align*}
            where the asymptotic covariance matrix is given by $W_{\Delta}=(w_{pq; \Delta})_{p,q=0,\ldots,h}\in \R^{h,h}$ with 
\begin{align*}
    w_{pq;\Delta}&=(v_{pq;\Delta}-\rho(p\Delta)v_{0q;\Delta}
-\rho(q\Delta)v_{p0;\Delta}
+\rho(p\Delta)\rho(q\Delta)v_{00;\Delta})/\gamma^2(0)\\
&=
\frac{(\eta-3)\kappa_2^2}{\gamma^2(0)}  \int_{\R \times [0,\Delta]}
  (G_{p;\Delta}(x,u)-G_{0;\Delta}(x,u)\rho(p\Delta))\\
  &\quad \cdot
  (G_{q;\Delta}(x,u)-G_{0;\Delta}(x,u)\rho(q\Delta))dxdu\\
  &+\sum_{l=-\infty}^{\infty}
 \left[
 \rho((l+q)\Delta)\rho((l+p)\Delta)+\rho((l-q)\Delta)\rho((l+p)\Delta) \right.\\
 &
 -2\rho((l+q)\Delta)\rho(l\Delta)\rho(p\Delta)
 -2\rho(l\Delta)\rho((l+p)\Delta)\rho(q\Delta)\\
 &\left. 
 +2\rho(p\Delta)\rho(q\Delta)\rho^2(l\Delta)\right].
\end{align*}  
If, in addition, the function 
$(x,u)\mapsto \sum_{j=-\infty}^{\infty}|f(x,u+j\Delta)|$ is in $L^2(\R \times [0,\Delta])$, then we have
\begin{align*}
        \sqrt{n}(\widehat{\rho}_{n,\Delta}(\Delta)-\rho(\Delta), \ldots, \widehat{\rho}_{n,\Delta}(h\Delta)-\rho(h\Delta))^{\top} \stackrel{\mathrm{d}}{\to}\mathrm{N}(0,W_{\Delta}), \quad n \to \infty,
            \end{align*}
            with $W_{\Delta}$ defined as above.
\end{enumerate}
\end{theorem}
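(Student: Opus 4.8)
The plan is to adapt the arguments of \citet{CohenLindner2013} for L\'evy-driven continuous-time moving averages to the present MMAP setting: approximate $Y$ by finite-memory (truncated) MMAPs, apply a central limit theorem for $m$-dependent stationary sequences to these approximants, and pass to the limit; the limiting (co)variances have already been identified in Proposition~\ref{prop:covlimit}, so the remaining work is to upgrade that $L^2$-statement to a distributional limit and then to handle the mean correction and the ratio map. Throughout, $\mu=0$ and $\mathbb{E}(L')=0$ are in force, being inherited from the assumptions of Proposition~\ref{prop:covlimit}; note that $\widehat\gamma_{n;\Delta}$ and $\widehat\rho_{n;\Delta}$ are in any case invariant under adding a constant to $Y$, so the constraint $\mu=0$ is harmless for them. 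For part~1, the Cram\'er--Wold device reduces the claim to: for every $(\lambda_0,\dots,\lambda_h)\in\R^{h+1}$, the scalar $\sqrt n\sum_{p=0}^h\lambda_p\bigl(\widehat\gamma^*_{n;\Delta}(\Delta p)-\gamma(\Delta p)\bigr)$ is asymptotically normal, and Proposition~\ref{prop:covlimit} forces the limiting variance to equal $\sum_{p,q=0}^h\lambda_p\lambda_q v_{pq;\Delta}$.

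For this scalar CLT, fix $M\in\N$ and truncate the kernel in its temporal argument: set $f^{(M)}(x,u):=f(x,u)\indicator_{[-M\Delta,M\Delta]}(u)$ and $Y^{(M)}_t:=\int_{\R\times\R}f^{(M)}(x,t-s)\,L(dx,ds)$. Since $|f^{(M)}|\le|f|$ we still have $f^{(M)}\in L^2(\R\times\R)\cap L^4(\R\times\R)$, so $Y^{(M)}$ is a well-defined stationary MMAP with finite fourth moment. Because $f^{(M)}(x,t-s)$ vanishes unless $|t-s|\le M\Delta$, the variable $Y^{(M)}_{j\Delta}$ is measurable with respect to $L$ restricted to $\R\times[(j-M)\Delta,(j+M)\Delta]$; hence the strictly stationary sequence $\bigl(\sum_{p=0}^h\lambda_p Y^{(M)}_{j\Delta}Y^{(M)}_{(j+p)\Delta}\bigr)_{j\in\Z}$ is $(2M+h)$-dependent, with finite variance (finiteness of the second moments of these products is exactly what Lemma~\ref{lem:fourthmoment}, using $\mathbb{E}(L'^4)<\infty$, supplies). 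The Hoeffding--Robbins central limit theorem for $m$-dependent stationary sequences then gives $\sqrt n\sum_{p=0}^h\lambda_p\bigl(\widehat\gamma^{*,(M)}_{n;\Delta}(\Delta p)-\gamma^{(M)}(\Delta p)\bigr)\stackrel{\mathrm{d}}{\to}\mathrm{N}(0,\sigma_M^2)$, where $\widehat\gamma^{*,(M)}_{n;\Delta}$ and $\gamma^{(M)}$ are the analogues of $\widehat\gamma^{*}_{n;\Delta}$ and $\gamma$ built from $f^{(M)}$ and $\sigma_M^2$ is the corresponding explicit $m$-dependent asymptotic variance.

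It remains to let $M\to\infty$. Expanding the relevant covariances via Lemma~\ref{lem:fourthmoment} (applied to $f$, to $f^{(M)}$, and to $f-f^{(M)}$) and invoking \eqref{as:sumf2b}, \eqref{as:sumf2c}, \eqref{eq:sumgamma2-cov} and \eqref{as:finsumsquint}, one verifies that $\sigma_M^2\to\sum_{p,q}\lambda_p\lambda_q v_{pq;\Delta}=:\sigma^2$ and, crucially, that the residual is negligible uniformly in $n$:
\[
\limsup_{n\to\infty}\, n\,\Var\!\Bigl(\sum_{p=0}^h\lambda_p\bigl(\widehat\gamma^*_{n;\Delta}(\Delta p)-\widehat\gamma^{*,(M)}_{n;\Delta}(\Delta p)\bigr)\Bigr)\;\longrightarrow\;0\quad\text{as }M\to\infty .
\]
A standard approximating-sequence argument --- if $Z^{(M)}_n\stackrel{\mathrm{d}}{\to}\mathrm{N}(0,\sigma_M^2)$ for each $M$, $\sigma_M^2\to\sigma^2$, and $\limsup_{n\to\infty}\mathbb{E}\bigl(Z_n-Z^{(M)}_n\bigr)^2\to0$ as $M\to\infty$, then $Z_n\stackrel{\mathrm{d}}{\to}\mathrm{N}(0,\sigma^2)$ --- combined with a Chebyshev bound on the residual, then transfers the $m$-dependent CLT to $M=\infty$ and establishes part~1 with asymptotic covariance matrix $V_\Delta=(v_{pq;\Delta})_{p,q=0,\dots,h}$. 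This last step is the crux of the proof: the uniform-in-$n$ control of the truncation error forces one to track all four groups of terms produced by Lemma~\ref{lem:fourthmoment} for the mixed sample autocovariances of $Y$ and $Y^{(M)}$ and to deploy \eqref{as:sumf2b}, \eqref{as:sumf2c}, \eqref{eq:sumgamma2-cov} and \eqref{as:finsumsquint} each at precisely the right place; the ``fourth-cumulant'' term (governed by the $L^4$-norm of $f$) and the term requiring \eqref{as:finsumsquint} are the delicate ones, whereas the $m$-dependent CLT itself is routine.

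Parts~2 and~3 are then comparatively mechanical. For part~2 it suffices to show $\sqrt n\bigl(\widehat\gamma_{n;\Delta}(\Delta h)-\widehat\gamma^*_{n;\Delta}(\Delta h)\bigr)\stackrel{\mathbb{P}}{\to}0$ for each $h$ and to invoke Slutsky's theorem with part~1: expanding $\widehat\gamma_{n;\Delta}(\Delta h)-\widehat\gamma^*_{n;\Delta}(\Delta h)$, the discrepancy is a combination of the $h$ boundary products $n^{-1}\sum_{j=n-h+1}^{n}Y_{j\Delta}Y_{(j+h)\Delta}=O_{\mathbb{P}}(h/n)$ (by stationarity and $\mathbb{E}(L'^2)<\infty$) and of terms of the shape $\overline Y_{n;\Delta}\cdot n^{-1}\sum_j Y_{j\Delta}$ and $\overline Y_{n;\Delta}^{\,2}$, and under the extra hypothesis of part~2 --- that $(x,u)\mapsto\sum_{j}|f(x,u+j\Delta)|$ lies in $L^2(\R\times[0,\Delta])$, which is precisely the assumption \eqref{as:sumf2} needed by Theorem~\ref{thm:samplemean} --- we have $\overline Y_{n;\Delta}=O_{\mathbb{P}}(n^{-1/2})$ and $n^{-1}\sum_{j=1}^{n-h}Y_{j\Delta}=O_{\mathbb{P}}(n^{-1/2})$, so those terms are $O_{\mathbb{P}}(n^{-1})$ and the whole discrepancy is $o_{\mathbb{P}}(n^{-1/2})$. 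Part~3 is the delta method applied to $\Phi:(y_0,\dots,y_h)\mapsto(y_1/y_0,\dots,y_h/y_0)$, which is continuously differentiable near $(\gamma(0),\dots,\gamma(h\Delta))$ because $\gamma(0)=\Var(Y_0)=\kappa_2\int_{\R\times\R}f^2(x,u)\,dx\,du>0$ (this uses that $f$ is not a.e.\ zero and that $\kappa_2=\Var(L')>0$); differentiating $\Phi$ at this point and sandwiching $V_\Delta$ between the Jacobian and its transpose gives the first displayed expression for $w_{pq;\Delta}$, and the second follows upon substituting \eqref{eq:sumT} and $\gamma(k\Delta)=\gamma(0)\rho(k\Delta)$. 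Applying the delta method to part~1 yields the limit for $\widehat\rho^*_{n;\Delta}$, and applying it to part~2 (valid under the additional $L^2$-assumption) yields the limit for $\widehat\rho_{n;\Delta}$.
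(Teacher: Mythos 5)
Your proposal is correct and follows essentially the same route as the paper's proof: temporal truncation of the kernel to obtain finitely dependent approximants, an $m$-dependent CLT for those, dominated-convergence identification of the limiting covariances together with a uniform-in-$n$ negligibility bound on the truncation residual closed by the Slutsky-type approximation lemma (\citet[Proposition 6.3.9]{BrockwellDavis1987}), then the $o_{\mathbb{P}}(n^{-1/2})$ comparison of $\widehat{\gamma}_{n;\Delta}$ with $\widehat{\gamma}^*_{n;\Delta}$ via the sample-mean CLT for part~2, and the delta method with the Bartlett formula for part~3. The only differences are presentational (explicit Cram\'er--Wold reduction and the Hoeffding--Robbins theorem in place of direct citations of \citet[Propositions 7.3.2 and 7.3.4]{BrockwellDavis1987}).
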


\begin{remark}
Note that the assumptions used in the above theorems typically rule out long-memory settings. We discuss the detailed implications in the context of (periodic) trawl processes in the appendix, see Section \ref{sec:asscheck}. 
\end{remark}

%%%%%%%%%%%%%%%%%%%%%%%%%%%%%%%%%%%%%%%%%%%%%%%%%%%%
%%%%%%%%%%%%%%%%%%%%%%%%%%%%%%%%%%%%%%%%%%%%%%%%
\section{Inference for periodic trawl processes using methods of moments}\label{sec:inference}
Based on two working examples, we will now develop an estimation and inference methodology for the parameters determining the serial correlation of (periodic) trawl processes using a method-of-moments approach and present the corresponding asymptotic theory. Our methodology can in principle be applied to general trawl functions, but in order to simplify the exposition, we will concentrate on two particularly relevant specifications, namely the exponential trawl and the  supGamma trawl  function.

\subsection{Exponential trawl function}
Recall that, in the case of an exponential trawl function $g(x)=\exp(-\lambda x), \lambda >0, x\geq 0$, the autocorrelation is, for all $t>0$,  given by  
\begin{align*}
\rho(t)=\Cor(Y_0, Y_t)=c(t) \exp(-\lambda t), 
\end{align*}
which is equivalent to
\begin{align*}
\lambda =\log\left(\frac{\rho(t)}{c(t)} \right)/(-t).
\end{align*}
Setting $t=\Delta$ and assuming (for now) that $c(\Delta)$ is known (for a (non-periodic) trawl process, we would have $c\equiv 1$), we get the following estimator
\begin{align*}
\hat \lambda = -\log\left(\frac{\rho_{n;\Delta}^*(\Delta)}{c(\Delta)} \right),
\end{align*}
where $\widehat{\rho}(\cdot)$ denotes the empirical autocorrelation function.

We now establish the corresponding limit theorem.
From Theorem \ref{thm:acflimits}, assuming the corresponding assumptions hold, we deduce 
that
\begin{align*}
\sqrt{n}(\rho_{n;\Delta}^*(\Delta)-\rho(\Delta)) \stackrel{\mathrm{d}}{\to}\mathrm{N}(0,w_{11;\Delta}).
\end{align*}

Next, we define the function
$\phi:(0, \infty) \to \R,  \phi(x):=-\log(x/c(\Delta))$, which is continuously differentiable with $\phi'(x)=-\frac{1}{x}$
and set
\begin{align*}
\widehat{\lambda} := \phi(\rho_{n;\Delta}^*(\Delta)).
\end{align*}
An application of the delta method, see \citet[Proposition 6.4.3]{BrockwellDavis1987}, 
leads to 
\begin{align*}
\sqrt{n}(\widehat{\lambda}-\lambda)
=\sqrt{n}(\phi(\rho_{n;\Delta}^*(\Delta))-\phi(\rho(\Delta))) \stackrel{\mathrm{d}}{\to}\mathrm{N}(0, \Sigma_{\lambda;\Delta}),
\end{align*}
where
\begin{align*} \Sigma_{\lambda;\Delta}:=\phi'(\rho(\Delta))w_{11;\Delta}\phi'(\rho(\Delta))
=\frac{w_{11;\Delta}}{\rho^2(\Delta)}
=\frac{w_{11;\Delta}}{c^2(\Delta)\exp(-2\lambda \Delta)}=\frac{w_{11;\Delta}\exp(2\lambda \Delta)}{c^2(\Delta)}.
\end{align*}

\begin{remark}
In the case of a nonperiodic trawl, we would have $c(\Delta)= 1$, which simplifies the asymptotic result.
For periodic trawls, in the  case of unknown  $c(\Delta)$, a plug-in estimator of the following form could be considered:
$
\hat \lambda = -\log\left(\frac{\rho_{n;\Delta}^*(\Delta)}{\widehat{c(\Delta)}} \right)$.
\end{remark}

Now, suppose that $c(\Delta)$ is not known, but the period $\tau$ of the periodic function is known. I.e.~we have that $c(x+\tau)=c(x)$, for all $x$. 
Furthermore, assume that 
there exists a $\widetilde{\tau}_{\Delta} \in \N$ such that $\tau =\widetilde{\tau}_{\Delta}\Delta$.
To simplify the notation, we shall also define
$\mathcal{T}_{\Delta}:=(1+\widetilde{\tau}_{\Delta})$.
Then, we have
\begin{align*}
c(\mathcal{T}_{\Delta}\Delta)=c(\Delta).
\end{align*}
We can now construct an estimator for the memory parameter based on two empirical autocorrelations. More precisely, note that
\begin{align*}
\rho(\Delta)=e^{-\lambda \Delta}c(\Delta),&&
\rho(\mathcal{T}_{\Delta}\Delta)=e^{-\lambda \mathcal{T}_{\Delta}\Delta}c(\mathcal{T}_{\Delta}\Delta)=e^{-\lambda \mathcal{T}_{\Delta}\Delta}c(\Delta),
\end{align*}
which leaves us with two equations to estimate the two unknown parameters $\lambda$ and $c(\Delta)$.

We can solve both equations for $c(\Delta)$
and obtain
\begin{align*}
c(\Delta) = \rho(\Delta) e^{\lambda \Delta},&&
c(\Delta) = \rho(\mathcal{T}_{\Delta}\Delta) e^{\lambda \mathcal{T}_{\Delta} \Delta}.
\end{align*}
Setting the two equations equal and solving for $\lambda$ leads to 
\begin{align*}
\lambda &=\frac{1}{\Delta(1- \mathcal{T}_{\Delta})}\log\left( \frac{\rho(\mathcal{T}_{\Delta} \Delta)}{\rho(\Delta)}\right).
\end{align*}
After $\lambda$ has been estimated, we can also estimate $c(l\Delta)$ for $l=1, \ldots, \tilde{\tau}_{\Delta}$, using the relation
\begin{align*}
c(l \Delta) =\rho(l \Delta)e^{\lambda l \Delta}.
\end{align*}
Altogether, we can consider the following estimator
\begin{multline*}
(\widehat{\lambda}, \widehat{c( \Delta)}, \ldots, \widehat{c(\tilde{\tau}_{\Delta}\Delta)})^{\top}
\\:= 
\left(\frac{1}{\Delta(1- \mathcal{T}_{\Delta})}\log\left( \frac{\rho_{n;\Delta}^*(\mathcal{T}_{\Delta} \Delta)}{\rho_{n;\Delta}^*(\Delta)}\right),
\left(
\rho_{n;\Delta}^*(l\Delta)
\exp\left\{  \frac{l}{(1- \mathcal{T}_{\Delta})}\log\left( \frac{\rho_{n;\Delta}^*(\mathcal{T}_{\Delta} \Delta)}{\rho_{n;\Delta}^*(\Delta)}\right)\right\}
\right)_{l=1, \ldots, \tilde{\tau}_{\Delta}}
\right).
\end{multline*}

We note that the vector $(\widehat{\lambda}, \widehat{c( \Delta)}, \ldots, \widehat{c(\tilde{\tau}_{\Delta}\Delta)})^{\top}$ 
can be represented as a function of the empirical autocorrelation functions $(\rho_{n;\Delta}^*(l\Delta))_{l=1, \ldots, \mathcal{T}_{\Delta}}$. As such, we can apply the delta method to derive a joint central limit theorem as follows.
More precisely, we have
\begin{align*}
(\widehat{\lambda}, \widehat{c( \Delta)}, \ldots, \widehat{c(\tilde{\tau}_{\Delta}\Delta)})^{\top}
=F((\rho_{n;\Delta}^*(l\Delta))_{l=1, \ldots, \mathcal{T}_{\Delta}})
\\
=(F_1((\rho_{n;\Delta}^*(l\Delta))_{l=1, \ldots, \mathcal{T}_{\Delta}}), \ldots, 
F_{\mathcal{T}_{\Delta}}((\rho_{n;\Delta}^*(l\Delta))_{l=1, \ldots, \mathcal{T}_{\Delta}})
)^{\top},
\end{align*}
where
the functions $F_1, \ldots, F_{\mathcal{T}_{\Delta}}$ are given by
\begin{align*}
F_1(\rho_1, \ldots, \rho_{\mathcal{T}_{\Delta}})
&=F_1(\rho_1, \rho_{\mathcal{T}_{\Delta}})=\frac{1}{\Delta(1- \mathcal{T}_{\Delta})}\log\left( \frac{\rho_{\mathcal{T}_{\Delta}}}{\rho_1}\right),\\
F_2(\rho_1, \ldots, \rho_{\mathcal{T}_{\Delta}})
&=F_2(\rho_1, \rho_{\mathcal{T}_{\Delta}})
= \rho_1 \exp(\Delta F_1(\rho_1, \rho_{\mathcal{T}_{\Delta}})),\\
F_3(\rho_1, \ldots, \rho_{\mathcal{T}_{\Delta}})
&=F_3(\rho_1, \rho_2, \rho_{\mathcal{T}_{\Delta}})
= \rho_2 \exp(2\Delta F_1(\rho_1, \rho_{\mathcal{T}_{\Delta}})),\\
F_4(\rho_1, \ldots, \rho_{\mathcal{T}_{\Delta}})
&=F_4(\rho_1, \rho_3, \rho_{\mathcal{T}_{\Delta}})
= \rho_3 \exp(3\Delta F_1(\rho_1, \rho_{\mathcal{T}_{\Delta}})),\\
\vdots\\
F_{\mathcal{T}_{\Delta}}(\rho_1, \ldots, \rho_{\mathcal{T}_{\Delta}})
&= F_{\mathcal{T}_{\Delta}}(\rho_1,  \rho_{\mathcal{T}_{\Delta}-1}, \rho_{\mathcal{T}_{\Delta}}) =\rho_{\mathcal{T}_{\Delta}-1} \exp((\mathcal{T}_{\Delta}-1)\Delta F_1(\rho_1, \rho_{\mathcal{T}_{\Delta}})).
\end{align*}
Under the assumptions of Theorem \ref{thm:acflimits},
\begin{align*}
\sqrt{n}((\widehat{\lambda}, \widehat{c( \Delta)}, \ldots, \widehat{c(\tilde{\tau}_{\Delta}\Delta)})^{\top}-
(\lambda, c(\Delta), \ldots, c(\mathcal{T}_{\Delta}\Delta))
\end{align*}
converges to a multivariate normal random vector with zero mean  and variance given by
$D {W}_{\Delta}D'$, where 
${W}_{\Delta}$ is the $\mathcal{T}_{\Delta}\times \mathcal{T}_{\Delta}$-matrix whose elements are the ones defined in Theorem \ref{thm:acflimits}.
The matrix $D$ is a $\mathcal{T}_{\Delta}\times \mathcal{T}_{\Delta}$-matrix 
$[(\partial F_i/\partial \rho_j)(\rho(\Delta),\ldots, \rho(\mathcal{T}_{\Delta}\Delta))^{\top}]$,
where
\begin{align*}
\frac{\partial F_1}{\partial \rho_1}(\rho_1, \ldots, \rho_{\mathcal{T}_{\Delta}})
&=\frac{\partial F_1}{\partial \rho_1}(\rho_1, \rho_{\mathcal{T}_{\Delta}})
=\frac{1}{\Delta(\mathcal{T}_{\Delta}-1) \rho_1},\\
\frac{\partial F_1}{\partial \rho_2}(\rho_1, \ldots, \rho_{\mathcal{T}_{\Delta}})
&=\cdots=\frac{\partial F_1}{\partial \rho_{\mathcal{T}_{\Delta}-1}}(\rho_1, \ldots, \rho_{\mathcal{T}_{\Delta}})=0,\\
\frac{\partial F_1}{\partial \rho_{\mathcal{T}_{\Delta}}}(\rho_1, \ldots, \rho_{\mathcal{T}_{\Delta}})
&=\frac{\partial F_1}{\partial \rho_{\mathcal{T}_{\Delta}}}(\rho_1, \rho_{\mathcal{T}_{\Delta}})
=\frac{1}{\Delta(1- \mathcal{T}_{\Delta}) \rho_{\mathcal{T}_{\Delta}}},\\
%
%
%
%
%
%2:
%
\frac{\partial F_2}{\partial \rho_1}(\rho_1, \ldots, \rho_{\mathcal{T}_{\Delta}})
&=\frac{\partial F_2}{\partial \rho_1}(\rho_1, \rho_{\mathcal{T}_{\Delta}})
=
\frac{\mathcal{T}_{\Delta}}{\mathcal{T}_{\Delta}-1}
\exp(\Delta F_1(\rho_1, \rho_{\mathcal{T}_{\Delta}})),\\
\frac{\partial F_2}{\partial \rho_2}(\rho_1, \ldots, \rho_{\mathcal{T}_{\Delta}})
&=\cdots=\frac{\partial F_2}{\partial \rho_{\mathcal{T}_{\Delta}-1}}(\rho_1, \ldots, \rho_{\mathcal{T}_{\Delta}})=0,\\
\frac{\partial F_2}{\partial \rho_{\mathcal{T}_{\Delta}}}(\rho_1, \ldots, \rho_{\mathcal{T}_{\Delta}})
&=\frac{\partial F_2}{\partial \rho_{\mathcal{T}_{\Delta}}}(\rho_1, \rho_{\mathcal{T}_{\Delta}})
=\frac{1}{1-\mathcal{T}_{\Delta}}
\frac{\rho_1}{\rho_{\mathcal{T}_{\Delta}}}
\exp(\Delta F_1(\rho_1, \rho_{\mathcal{T}_{\Delta}})),\\
%
%
%
%
%F3:
%
\frac{\partial F_3}{\partial \rho_1}(\rho_1, \ldots, \rho_{\mathcal{T}_{\Delta}})
&=\frac{\partial F_3}{\partial \rho_1}(\rho_1,\rho_2,\rho_{\mathcal{T}_{\Delta}})
= \frac{2}{ \mathcal{T}_{\Delta}-1}
\frac{\rho_2}{\rho_1}\exp(2\Delta F_1(\rho_1, \rho_{\mathcal{T}_{\Delta}})),\\
\frac{\partial F_3}{\partial \rho_2}(\rho_1, \ldots, \rho_{\mathcal{T}_{\Delta}})
&=\frac{\partial F_3}{\partial \rho_2}(\rho_1,\rho_2,\rho_{\mathcal{T}_{\Delta}})
=
\exp(2\Delta F_1(\rho_1, \rho_{\mathcal{T}_{\Delta}})),\\
\frac{\partial F_3}{\partial \rho_3}(\rho_1, \ldots, \rho_{\mathcal{T}_{\Delta}})
&=\cdots=\frac{\partial F_3}{\partial \rho_{\mathcal{T}_{\Delta}-1}}(\rho_1, \ldots, \rho_{\mathcal{T}_{\Delta}})=0,\\
\frac{\partial F_3}{\partial \rho_{\mathcal{T}_{\Delta}}}(\rho_1, \ldots, \rho_{\mathcal{T}_{\Delta}})
&=\frac{\partial F_3}{\partial \rho_{\mathcal{T}_{\Delta}}}(\rho_1,\rho_2, \rho_{\mathcal{T}_{\Delta}})
=
 \frac{2}{1- \mathcal{T}_{\Delta}}
\frac{\rho_2}{\rho_{\mathcal{T}_{\Delta}}}\exp(2\Delta F_1(\rho_1, \rho_{\mathcal{T}_{\Delta}})),\\
\vdots\\
\frac{\partial F_{\mathcal{T}_{\Delta}}}{\partial \rho_1}(\rho_1, \ldots, \rho_{\mathcal{T}_{\Delta}})
&= \frac{\partial F_{\mathcal{T}_{\Delta}}}{\partial \rho_1}(\rho_1,\rho_{\mathcal{T}_{\Delta}-1}  \rho_{\mathcal{T}_{\Delta}}) =\frac{\rho_{\mathcal{T}_{\Delta}-1}}{\rho_1} \exp((\mathcal{T}_{\Delta}-1)\Delta F_1(\rho_1, \rho_{\mathcal{T}_{\Delta}})),\\
\frac{\partial F_{\mathcal{T}_{\Delta}}}{\partial \rho_2}(\rho_1, \ldots, \rho_{\mathcal{T}_{\Delta}})
&= 
\cdots= \frac{\partial F_{\mathcal{T}_{\Delta}}}{\partial \rho_{\mathcal{T}_{\Delta}-2}}(\rho_1, \ldots, \rho_{\mathcal{T}_{\Delta}})
=0,\\
\frac{\partial F_{\mathcal{T}_{\Delta}}}{\partial \rho_{\mathcal{T}_{\Delta}-1}}(\rho_1, \ldots, \rho_{\mathcal{T}_{\Delta}})
&=\frac{\partial F_{\mathcal{T}_{\Delta}}}{\partial \rho_{\mathcal{T}_{\Delta}-1}}(\rho_1, \rho_{\mathcal{T}_{\Delta}-1},  \rho_{\mathcal{T}_{\Delta}})
=\exp((\mathcal{T}_{\Delta}-1)\Delta F_1(\rho_1, \rho_{\mathcal{T}_{\Delta}})), 
\\
\frac{\partial F_{\mathcal{T}_{\Delta}}}{\partial \rho_{\mathcal{T}_{\Delta}}}(\rho_1, \ldots, \rho_{\mathcal{T}_{\Delta}})
&= \frac{\partial F_{\mathcal{T}_{\Delta}}}{\partial \rho_{\mathcal{T}_{\Delta}}}(\rho_1,\rho_{\mathcal{T}_{\Delta}-1},   \rho_{\mathcal{T}_{\Delta}}) =-\frac{\rho_{\mathcal{T}_{\Delta}-1}}{\rho_{\mathcal{T}_{\Delta}}}\exp((\mathcal{T}_{\Delta}-1)\Delta F_1(\rho_1, \rho_{\mathcal{T}_{\Delta}})).
\end{align*}

\subsection{SupGamma trawl function}
Let us now focus on the case of a supGamma trawl function that allows for both short- and long-memory settings. 

We note that the inference for the long-memory case appears numerically unstable when 
approximating the corresponding integrals appearing in the autocorrelation function (in simulation experiments not reported here). Hence we suggest, as before,  using the mean value theorem result for inference on the memory parameter.
 Let $g(x)=\left(1+\frac{x}{\alpha}\right)^{-H}$, for $\alpha>0, H>1$. 
 As before, there exists a $\tau$-periodic function $c$ such that, for $t\geq 0$,  
\begin{align*}
\rho(t)=	\Cor(Y_{0},Y_{t})&=
c(t)\left(1+\frac{t}{\alpha}\right)^{1-H}.
\end{align*}
	For $H\in(1,2]$, we are in the long-memory case and for $H>2$ in the short-memory case.

We are interested in estimating the Hurst index $H$ and therefore in the following assume that the parameter $\alpha$ is known.

Using the analogous procedure as in the short-memory case, we can write that, for all $t\geq 0$, 
\begin{align*}
\frac{\rho(t)}{c(t)}=\exp((1-H)\log(1+t/\alpha)) 
\Leftrightarrow \log\left(\frac{\rho(t)}{c(t)} \right)=(1-H)\log(1+t/\alpha).
\end{align*}
This suggests the following estimator of the Hurst index:
\begin{align*}
\widehat{H} =1-\log\left(\frac{\rho_{n;\Delta}^*(\Delta)}{c(\Delta)} \right)/\log(1+\Delta/\alpha),
\end{align*}
assuming that $c(\Delta)$ and $\alpha$ are known.

Now, define the function
$\phi:(0, \infty) \to \R,  \phi(x):=1-\log\left(\frac{x}{c(\Delta)} \right)/\log(1+\Delta/\alpha)$, which is continuously differentiable with $\phi'(x)=-\frac{1}{\log(1+\Delta/\alpha)}\frac{1}{x}$
and set
\begin{align*}
\widehat{H} := \phi(\rho_{n;\Delta}^*(\Delta)).
\end{align*}
An application of the delta method, see \citet[Proposition 6.4.3]{BrockwellDavis1987}, 
leads to 
\begin{align*}
\sqrt{n}(\widehat{H}-H)
=\sqrt{n}(\phi(\rho_{n;\Delta}^*(\Delta))-\phi(\rho(\Delta))) \stackrel{\mathrm{d}}{\to}\mathrm{N}(0, \Sigma_{H;\Delta}),
\end{align*}
where
\begin{align*} \Sigma_{H;\Delta}&:=\phi'(\rho(\Delta))w_{11;\Delta}\phi'(\rho(\Delta))
=\frac{w_{11;\Delta}}{
(\log(1+\Delta/\alpha))^2\rho^2(\Delta)
}\\
&=
\frac{w_{11;\Delta}}{
(\log(1+\Delta/\alpha))^2c^2(\Delta)\left(1+\frac{\Delta}{\alpha}\right)^{2-2H}
}
=\frac{w_{11;\Delta}\left(1+\frac{\Delta}{\alpha}\right)^{2H-2}}{
(\log(1+\Delta/\alpha))^2c^2(\Delta)
}.
\end{align*}

Now suppose that $\tau$ is known and proceed as in the exponential case: Assume that 
there exists a $\widetilde{\tau}_{\Delta} \in \N$ such that $\tau =\widetilde{\tau}_{\Delta}\Delta$
and set 
$\mathcal{T}_{\Delta}:=(1+\widetilde{\tau}_{\Delta})$.
Then, we have
\begin{align*}
c(\mathcal{T}_{\Delta}\Delta)=c(\Delta).
\end{align*}
We can now construct an estimator for the memory parameter based on two empirical autocorrelations. More precisely, note that
\begin{align*}
\rho(\Delta)=(1+\Delta/\alpha)^{H-1}c(\Delta),&&
\rho(\mathcal{T}_{\Delta}\Delta)=(1+\mathcal{T}_{\Delta}\Delta/\alpha)^{H-1}c(\mathcal{T}_{\Delta}\Delta)=(1+\mathcal{T}_{\Delta}\Delta/\alpha)^{H-1}c(\Delta),
\end{align*}
which leaves us with two equations to estimate the two unknown parameters $H$ and $c(\Delta)$, assuming that $\alpha$ is known. 

We can solve both equations for $c(\Delta)$
and obtain
\begin{align*}
c(\Delta) = \rho(\Delta) (1+\Delta/\alpha)^{1-H},&&
c(\Delta) = \rho(\mathcal{T}_{\Delta}\Delta) (1+\mathcal{T}_{\Delta}\Delta/\alpha)^{1-H}.
\end{align*}
Setting the two equations equal and solving for $H$ leads to 
\begin{align*}
H =1+\log\left( \frac{\rho(\mathcal{T}_{\Delta} \Delta)}{\rho(\Delta)}\right)/
\log\left( \frac{\alpha+\Delta}{\alpha +\mathcal{T}_{\Delta} \Delta}\right).
\end{align*}

After $H$ has been estimated, we can also estimate $c(l\Delta)$ for $l=1, \ldots, \tilde{\tau}_{\Delta}$, using the relation
\begin{align*}
c(l \Delta) =\rho(l \Delta)(1+l\Delta/\alpha)^{1-H}.
\end{align*}
Altogether, we can consider the following estimator
\begin{multline*}
(\widehat{H}, \widehat{c( \Delta)}, \ldots, \widehat{c(\tilde{\tau}_{\Delta}\Delta)})^{\top}
\\:= 
\left(1+\log\left( \frac{\rho_{n;\Delta}^*(\mathcal{T}_{\Delta} \Delta)}{\rho_{n;\Delta}^*(\Delta)}\right)/
\log\left( \frac{\alpha+\Delta}{\alpha +\mathcal{T}_{\Delta} \Delta}\right),\right.\\
\left.
\left(
\rho_{n;\Delta}^*(l\Delta)
(1+l\Delta/\alpha)^{\log\left( \frac{\rho_{n;\Delta}^*(\mathcal{T}_{\Delta} \Delta)}{\rho_{n;\Delta}^*(\Delta)}\right)/
\log\left( \frac{\alpha+\Delta}{\alpha +\mathcal{T}_{\Delta} \Delta}\right)}
\right)_{l=1, \ldots, \tilde{\tau}_{\Delta}}
\right).
\end{multline*}

We note that the vector $(\widehat{H}, \widehat{c( \Delta)}, \ldots, \widehat{c(\tilde{\tau}_{\Delta}\Delta)})^{\top}$ 
can be represented as a function of the empirical autocorrelation functions $(\rho_{n;\Delta}^*(l\Delta))_{l=1, \ldots, \mathcal{T}_{\Delta}}$. That is, we have
\begin{align*}
(\widehat{H}, \widehat{c( \Delta)}, \ldots, \widehat{c(\tilde{\tau}_{\Delta}\Delta)})^{\top}
=F((\rho_{n;\Delta}^*(l\Delta))_{l=1, \ldots, \mathcal{T}_{\Delta}})
\\
=(F_1((\rho_{n;\Delta}^*(l\Delta))_{l=1, \ldots, \mathcal{T}_{\Delta}}), \ldots, 
F_{\mathcal{T}_{\Delta}}((\rho_{n;\Delta}^*(l\Delta))_{l=1, \ldots, \mathcal{T}_{\Delta}})
)^{\top},
\end{align*}
where
the functions $F_1, \ldots, F_{\mathcal{T}_{\Delta}}$ are given by
\begin{align*}
F_1(\rho_1, \ldots, \rho_{\mathcal{T}_{\Delta}})
&=F_1(\rho_1, \rho_{\mathcal{T}_{\Delta}}) =1+\log\left( \frac{\rho_{\mathcal{T}_{\Delta}}}{\rho_1}\right)/
\log\left( \frac{\alpha+\Delta}{\alpha +\mathcal{T}_{\Delta} \Delta}\right),\\
F_2(\rho_1, \ldots, \rho_{\mathcal{T}_{\Delta}})
&=F_2(\rho_1, \rho_{\mathcal{T}_{\Delta}})
= \rho_1  (1+\Delta/\alpha)^{1-F_1(\rho_1,  \rho_{\mathcal{T}_{\Delta}})}\\
F_3(\rho_1, \ldots, \rho_{\mathcal{T}_{\Delta}})
&=F_3(\rho_1, \rho_2, \rho_{\mathcal{T}_{\Delta}})
= \rho_2 (1+2\Delta/\alpha)^{1-F_1(\rho_1, \rho_{\mathcal{T}_{\Delta}})},\\
F_4(\rho_1, \ldots, \rho_{\mathcal{T}_{\Delta}})
&=F_4(\rho_1, \rho_3, \rho_{\mathcal{T}_{\Delta}})
= \rho_3 (1+3\Delta/\alpha)^{1-F_1(\rho_1, \rho_{\mathcal{T}_{\Delta}})},\\
\vdots\\
F_{\mathcal{T}_{\Delta}}(\rho_1, \ldots, \rho_{\mathcal{T}_{\Delta}})
&= F_{\mathcal{T}_{\Delta}}(\rho_1,  \rho_{\mathcal{T}_{\Delta}-1}, \rho_{\mathcal{T}_{\Delta}}) =\rho_{\mathcal{T}_{\Delta}-1}  (1+(\mathcal{T}_{\Delta}-1)\Delta/\alpha)^{1-F_1(\rho_1, \rho_{\mathcal{T}_{\Delta}})}.
\end{align*}
Under the assumptions of Theorem \ref{thm:acflimits}
\begin{align*}
\sqrt{n}((\widehat{H}, \widehat{c( \Delta)}, \ldots, \widehat{c(\tilde{\tau}_{\Delta}\Delta)})^{\top}-
(H, c(\Delta), \ldots, c(\mathcal{T}_{\Delta}\Delta))
\end{align*}
converges to a multivariate normal random vector with zero mean  and variance given by
$D {W}_{\Delta}D'$, where 
${W}_{\Delta}$ is the $\mathcal{T}_{\Delta}\times \mathcal{T}_{\Delta}$-matrix whose elements are the ones defined in Theorem \ref{thm:acflimits}.
The matrix $D$ is a $\mathcal{T}_{\Delta}\times \mathcal{T}_{\Delta}$-matrix 
$[(\partial F_i/\partial \rho_j)(\rho(\Delta),\ldots, \rho(\mathcal{T}_{\Delta}\Delta))^{\top}]$.

\begin{remark}
  Note that the assumptions of Theorem \ref{thm:acflimits} are not satisfied in the case of a long-memory supGamma trawl function, i.e.~when $H\in (1,2]$, see Subsection \ref{sec:asscheck} in the Appendix. 
\end{remark}

\section{Inference for periodic trawl processes using a generalised-method-of-moments approach}\label{sec:inferenceGMM}
In the previous section, we showed how the memory parameter and periodic function of a periodic trawl process can be inferred if the period is known. 
More generally, if we would like to estimate all model parameters, including the parameters of the L\'{e}vy basis, we can proceed by using 
the generalised method of moments (GMM). 

Hence, we will now develop the asymptotic theory for estimating the parameters of a periodic trawl process using a generalised method of moments (GMM). This extends the work presented in \cite{BLSV2023} for (integer-valued) trawl processes to the case of periodic trawl processes.

Recall that we denote the periodic trawl process by 
\begin{align*}
	Y_t & =
\int_{\R\times \R}p(t-s)\ind_{(0,g(t-s))}(x)\ind_{[0,\infty)}(t-s)L(dx,ds),
		\end{align*}
  and we denote the corresponding trawl process, where $p\equiv 1$, by 
\begin{align}
	X_t & =
 \int_{\R\times \R}\ind_{(0,g(t-s))}(x)\ind_{[0,\infty)}(t-s)L(dx,ds).
\end{align}

\subsection{Weak dependence}
We will first show that periodic trawl processes are  $\theta$-weakly dependent, see \citet[Definition 3.2]{CuratoStelzer2019}.

We note that a periodic trawl process is  
a special case of a causal mixed moving average process as defined in 
\citet[Definition 3.3]{CuratoStelzer2019}.
Hence, under the assumption that $\int_{|\xi|>1}|\xi|^2\lev(d\xi)<\infty$, we can deduce from \citet[Corollary 3.4]{CuratoStelzer2019} 
 the periodic trawl process is $\theta$-weakly dependent, see \citet[Definition 3.2]{CuratoStelzer2019} with coefficient, for $r\geq 0$, 
 \begin{align*}
\theta_Y(r)&=\left( \mathrm{Var}(L')\int_{(-\infty, -r)\times \R}p^2(-s)\ind_{(0,g(-s))}^2(x)\ind_{[0,\infty)}^2(-s) dxds
\right.\\
&\left. 
+
\left| \E(L')\int_{(-\infty, -r)\times \R}p(-s)\ind_{(0,g(-s))}(x)\ind_{[0,\infty)}(-s) dxds
\right|^2
\right)^{1/2}
\\
&=
\left( \mathrm{Var}(L')\int_{-\infty}^{-r}p^2(-s)g(-s)ds+ (\E(L'))^2\left(\int_{-\infty}^{-r}p(-s)g(-s)ds\right)^2\right)^{1/2}
\\
&=
\left( \mathrm{Var}(L')\int_{r}^{\infty}p^2(s)g(s)ds+ (\E(L'))^2\left(\int_{r}^{\infty}p(s)g(s)ds\right)^2\right)^{1/2}.
\end{align*}
Since the periodic function is continuous and bounded, we note that the weak-dependence coefficient can be bounded by a function $\widetilde{\theta}_Y$ such that
$\theta_Y(r)\leq \widetilde{\theta}_Y(r)$ for all $r\geq 0$, where 
\begin{align*}
\widetilde{\theta}_Y(r)&=
\left( c_1\mathrm{Cov}(X_0, X_r)
+c_2 (\E(L'))^2 (\mathrm{Cov}(X_0, X_r))^2\right)^{1/2},
 \end{align*}
 where $\E(L')=\gamma+\int_{|\xi|>1}\lev(d\xi)$, and 
 $\mathrm{Var}(L')=a+\int_{\R}\xi^2\lev(d\xi)$ and $c_1, c_2>0$ are constants.

Let us briefly consider the special case when $L'$ is of finite variation, i.e.~when the characteristic triplet is given by $(\gamma, 0, \lev)$ with $\int_{\R}|\xi|\lev(\xi)< \infty$.
This case is of interest since it includes the well-known class of integer-valued trawl processes. 
Here, the weak dependence coefficient is, for $r\geq 0$,  given by
\begin{align*}
\theta_Y(r)&=\int_{(-\infty, -r)\times \R}\int_{\R}|p(-s)\ind_{(0,g(-s))}(x)\ind_{[0,\infty)}(-s)\xi|\lev(d\xi) dxds\\
& +\int_{(-\infty, -r)\times \R}|p(-s)\ind_{(0,g(-s))}(x)\ind_{[0,\infty)}(-s)\gamma_0| dxds\\
&=\int_{\R}|\xi|\lev(d\xi) \int_{-\infty}^{-r}|p(-s)|g(-s) ds +|\gamma_0|\int_{-\infty}^{-r}|p(-s)|g(-s)ds\\
&=\left(\int_{\R}|\xi|\lev(d\xi)+|\gamma_0|\right) \int_{r}^{\infty}|p(s)|g(s)ds.
\end{align*}
As before, 
we note that the weak-dependence coefficient can be bounded by a function $\widetilde{\theta}_Y$ such that
$\theta_Y(r)\leq \widetilde{\theta}_Y(r)$ for all $r\geq 0$, where 
\begin{align*}
\widetilde{\theta}_Y(r)&=
c\mathrm{Cov}(X_0, X_r),
\end{align*}
for a positive constant $c>0$ and $\gamma_0=\gamma-\int_{|\xi|\leq 1}\xi\lev(d\xi)$.

\begin{remark}
    We note that the $\theta$-dependence coefficient of the periodic trawl process $Y$  can be related to the $\theta$-dependence coefficient of $X$ via
    \begin{align*}
        \theta_Y(r)\leq \widetilde{\theta}_Y(r)=O(\theta_X(r)).
    \end{align*}
  \end{remark}
  
  \subsection{GMM estimation for periodic trawl processes}\label{sec:GMM}
  In this section, we will describe how the parameters of a periodic trawl process can be estimated via the generalised method of moments (GMM).

As before, we consider the equidistantly sampled process $Y_{\Delta}, Y_{2\Delta}, \ldots, Y_{n\Delta}$, for $\Delta=T/n >0, T>0, n \in \N$. We will consider a GMM estimator, which is based on the sample mean, sample variance and sample autocovariances up to lag $m\geq 2$.

Consider the vector
$$
Y_t^{(m)}=(Y_{t\Delta}, Y_{(t+1)\Delta}, \ldots, Y_{(t+m)\Delta}),
$$
for $t=1, \ldots, n-m$. We denote by  $\Theta$  the parameter space of the periodic trawl process and set $\mu:=\mu(\theta)=\E(Y_0)$ and $D(k):=D(k,\theta):=\E(Y_0Y_{k\Delta})$, for $k=0, \ldots, m$. So, as soon as we specify a  parametric model for $Y$, then $D(k)$ is  a function of the model parameter(s) $\theta$.

Next, we define the measurable function $h:\R^{m+1}\times \Theta\to \R^{m+2}$ by
\begin{align*}
h(Y_t^{(m)}, \theta)&=
\left(
\begin{array}{c}
h_E(Y_t^{(m)}, \theta)\\
h_0(Y_t^{(m)}, \theta)\\
h_1(Y_t^{(m)}, \theta)\\
\vdots
\\
h_m(Y_t^{(m)}, \theta)\\
\end{array}
\right)
=
\left(
\begin{array}{c}Y_{t\Delta}-\mu(\theta) \\
Y_{t\Delta}^2-D(0,\theta)\\
Y_{t\Delta}Y_{(t+1)\Delta}-D(1,\theta)\\
\vdots
\\
Y_{t\Delta}Y_{(t+m)\Delta}-
D(m,\theta)
\end{array}
\right).
\end{align*}
Moreover, define the corresponding sample moments  as 
\begin{align*}
g_{n,m}(\theta)&=
\frac{1}{n-m}\sum_{t=1}^{n-m}h(Y_t^{(m)}, \theta)=
\left(
\begin{array}{c}
\frac{1}{n-m}\sum_{t=1}^{n-m}h_E(Y_t^{(m)}, \theta)\\
\frac{1}{n-m}\sum_{t=1}^{n-m}h_0(Y_t^{(m)}, \theta)\\
\vdots
\\
\frac{1}{n-m}\sum_{t=1}^{n-m}h_m(Y_t^{(m)}, \theta)
\end{array}
\right).
\end{align*}
Suppose that the true parameter (vector) is denoted by $\theta_0$, say, which can be estimated by minimising the objective function of the GMM, i.e.~the GMM estimator is given by 
\begin{align}\label{eq:GMMest}
\widehat\theta_{0,\mathrm{GMM}}^{n,m}
=\mathrm{argmin} 
g_{n,m}(\theta)^{\top} A_{n,m}
g_{n,m}(\theta),
\end{align}
for a positive-definite weight matrix $A_{n,m}$.

We now  derive the weak consistency and asymptotic normality of the GMM estimator under suitable (standard) assumptions.

First, we present the assumptions which guarantee the weak consistency of the estimator, cf.~\citet[Assumptions 1.1-1.3]{Matyas1999}
\begin{assumption}\label{as:1.1}
\begin{enumerate}
\item[(i)] Suppose that the expectation $\E(h(Y_t^{(m)}, \theta))$ exists and is finite for all $\theta\in \Theta$ and for all $t$.
\item[(ii)] Set $h_t^{(m)}(\theta)=\E(h(Y_t^{(m)}, \theta))$. There exists a $\theta_0\in \Theta$ such that $h_t^{(m)}(\theta)=0$ for all $t$ if and only if $\theta=\theta_0$. 
\end{enumerate}
\end{assumption}
\begin{assumption}\label{as:1.2}
Let $h^{(m)}(\theta)=\sum_{t=1}^{n-m}h_t^{(m)}(\theta)$ and denote the $j$th component of the $m+2$-dimensional vectors $h^{(m)}(\theta)$ and $g_{n,m}(\theta)$ by $h^{(m)}_j(\theta)$ and $g_{n,m; j}(\theta)$, respectively.
Suppose that, for $j=1, \ldots, m+2$, as $n\to \infty$, 
$$
\sup_{\theta \in \Theta}|h^{(m)}_j(\theta) -g_{n,m; j}(\theta)|\stackrel{\mathbb{P}}{\rightarrow} 0,
$$
\end{assumption}

\begin{assumption}\label{as:1.3}
There exists a sequence of non-random, positive definite matrices $\overline{A}_{n,m}$ such that, as $n \to \infty$, 
$|A_{n,m}-\overline{A}_{n,m}|\stackrel{\mathbb{P}}{\rightarrow} 0$
\end{assumption}

From \citet[Theorem 1.1]{Matyas1999}, we deduce the following result.
\begin{theorem}\label{thm:consistency}
Assume that Assumptions \ref{as:1.1}, \ref{as:1.2}, \ref{as:1.3} hold. Then the GMM estimator $\widehat\theta_{0,\mathrm{GMM}}^{n,m}$ defined in \eqref{eq:GMMest}
is weakly consistent.
\end{theorem}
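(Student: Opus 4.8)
The plan is to obtain the statement as a direct application of the general GMM consistency result \citet[Theorem 1.1]{Matyas1999}, so the real content of the proof is to check that Assumptions \ref{as:1.1}--\ref{as:1.3} supply exactly the hypotheses that theorem requires, and to record the structure of the limiting criterion. First I would introduce the GMM criterion $Q_{n,m}(\theta):=g_{n,m}(\theta)^{\top}A_{n,m}g_{n,m}(\theta)$ and its population counterpart $Q_{m}(\theta):=h^{(m)}(\theta)^{\top}\overline{A}_{n,m}h^{(m)}(\theta)$, where $h^{(m)}(\theta)$ is the (appropriately normalised) mean of $h(Y_t^{(m)},\theta)$, which is well defined and finite by Assumption \ref{as:1.1}(i). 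The identification hypothesis, Assumption \ref{as:1.1}(ii), gives $h^{(m)}(\theta_0)=0$ and $h^{(m)}(\theta)\neq 0$ for $\theta\neq\theta_0$; combined with the positive definiteness of the limiting weight matrix $\overline{A}_{n,m}$ from Assumption \ref{as:1.3}, this shows $Q_m(\theta_0)=0<Q_m(\theta)$ for every $\theta\neq\theta_0$, i.e.\ $Q_m$ has a unique, well-separated minimiser at $\theta_0$.

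Next I would show $Q_{n,m}\to Q_m$ uniformly in probability on $\Theta$. Assumption \ref{as:1.2} provides $\sup_{\theta\in\Theta}|g_{n,m;j}(\theta)-h^{(m)}_j(\theta)|\stackrel{\Prob}{\to}0$ for each coordinate $j=1,\dots,m+2$, and Assumption \ref{as:1.3} controls the deviation of $A_{n,m}$ from $\overline{A}_{n,m}$; a triangle-inequality/continuous-mapping argument, using the boundedness of $h^{(m)}$ on $\Theta$, then upgrades these to uniform convergence of the quadratic form $Q_{n,m}$ to $Q_m$. The pointwise ingredient underlying Assumption \ref{as:1.2} is available because periodic trawl processes are mixing and hence ergodic, see \cite{FuchsStelzer2013} (equivalently, $\theta$-weakly dependent by \citet[Corollary 3.4]{CuratoStelzer2019}), so the stationary sequence $(Y_t^{(m)})_{t}$ satisfies a law of large numbers and $g_{n,m;j}(\theta)\to h^{(m)}_j(\theta)$ for each fixed $\theta$; passing from this to the supremum over $\theta$ is a stochastic-equicontinuity matter requiring continuity of $\theta\mapsto h(Y_t^{(m)},\theta)$, an integrable envelope, and compactness of $\Theta$.

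Finally, the standard argmin-consistency lemma — uniform convergence of the objective together with a unique, well-separated minimiser of the limit forces convergence in probability of the minimiser — yields $\widehat\theta_{0,\mathrm{GMM}}^{n,m}\stackrel{\Prob}{\to}\theta_0$, which is weak consistency. I expect the only genuine obstacle to be the uniform-in-$\theta$ convergence in Assumption \ref{as:1.2}: since that assumption is taken as a hypothesis here, the theorem itself is essentially immediate from \citet{Matyas1999}, but in a self-contained treatment one would have to turn the ergodic (pointwise) law of large numbers for the concrete periodic-trawl moment functions $h_E,h_0,\dots,h_m$ into a uniform law of large numbers, and that equicontinuity step is where the work lies.
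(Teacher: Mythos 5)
Your proposal is correct and follows the same route as the paper, which simply deduces the result from \citet[Theorem 1.1]{Matyas1999} without further argument; you have merely unpacked the standard argmin-consistency reasoning (unique minimiser of the population criterion from identification plus the positive-definite limiting weight, uniform convergence of the sample criterion from Assumptions \ref{as:1.2} and \ref{as:1.3}) that underlies that cited theorem. Your closing observation — that the genuine work would lie in verifying the uniform law of large numbers of Assumption \ref{as:1.2} for the concrete trawl moment functions, which here is simply hypothesised — is accurate and consistent with how the paper treats the matter.
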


Next, we formulate the assumptions needed for the central limit theorem.

\begin{assumption}\label{as:1}
$\Theta$ is a compact  parameter space which includes the true parameter $\theta_0$. 
\end{assumption}
\begin{assumption}\label{as:2}
The weight matrix $A_{n,m}$ converges in probability to a positive definite matrix $A$.
\end{assumption}
\begin{assumption}\label{as:3}
The covariance matrix $\Sigma_a$ defined in \eqref{eq:Sigmaa} below is positive definite.
\end{assumption}

\begin{theorem}\label{prop:clt-gmm}
Consider a periodic trawl process $Y$ with characteristic triplet $(\drift, a, \lev)$ and suppose that
$\int_{|\xi|>1}|\xi|^{4+\delta}\lev(d\xi)<\infty$, for some $\delta>0$ and and suppose that the $\theta$-weak dependence coefficient of the periodic trawl process satisfies  $\theta_Y(r)\leq O(r^{-\alpha})$, for $\alpha>\left( 1+\frac{1}{\delta}\right)\left(1+\frac{1}{2+\delta} \right)$.
Suppose that Assumptions \ref{as:1.1},  \ref{as:1}, \ref{as:2}, \ref{as:3} hold.
Then, as $n\to \infty$, 
$$
\sqrt{n}(\widehat\theta_{0,\mathrm{GMM}}^{n,m}-\theta_0)
\stackrel{d}{\to}\mathrm{N}(0, M\Sigma_aM^{\top}),
$$
where
\begin{align}\label{eq:Sigmaa}
\Sigma_{a}&=\sum_{l\in \Z}\mathrm{Cov}(h(Y_0^{(m)}, \theta_0), h(Y_l^{(m)}, \theta_0)),\\ \nonumber
M&=(G_0^{\top}AG_0)^{-1}G_0^{\top}A, \quad \mathrm{where} \; \;\;
G_0=\E\left[\frac{\partial h(Y_t^{(m)}, \theta)}{\partial \theta^{\top}} \right]_{\theta=\theta_0}.
\end{align}
\end{theorem}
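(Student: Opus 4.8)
The plan is to follow the classical two‑step route for asymptotic normality of GMM estimators, cf.~\citet[Chapter 1]{Matyas1999}: first establish a central limit theorem for the vector of sample moment conditions evaluated at the true parameter, then linearise the first‑order conditions of \eqref{eq:GMMest} around $\theta_0$ using weak consistency, and finally combine the two via Slutsky's theorem.

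\noindent\emph{Step 1 (CLT for the moment conditions).}
Since $Y$ is a periodic trawl process with $\int_{|\xi|>1}|\xi|^{4+\delta}\lev(d\xi)<\infty$, the cumulant formula in Proposition \ref{pro:exMMA} together with standard moment/cumulant bookkeeping gives $\E(|Y_0|^{4+\delta})<\infty$; by H\"older's inequality this yields $\E(|Y_{t\Delta}Y_{(t+k)\Delta}|^{2+\delta/2})<\infty$ for $k=0,\dots,m$, so every component of $h(Y_t^{(m)},\theta_0)$ possesses a finite moment of order $2+\delta/2$. The vector‑valued process $(h(Y_t^{(m)},\theta_0))_{t}$ is a Borel function of the finitely many coordinates $Y_{t\Delta},\dots,Y_{(t+m)\Delta}$, and this function is locally Lipschitz with at most quadratic growth; hence, after a truncation argument controlled by the moment bound just derived, the $\theta$‑weak dependence coefficients of $(h(Y_t^{(m)},\theta_0))_{t}$ are dominated by a constant multiple of $\theta_Y(\cdot)$ evaluated at lags shifted by at most $m$. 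Applying a central limit theorem for $\theta$‑weakly dependent sequences in the multivariate Cram\'er--Wold form (cf.~\cite{CuratoStelzer2019} and the argument used in \cite{BLSV2023}), one checks that the polynomial decay $\theta_Y(r)=O(r^{-\alpha})$ with $\alpha>\left(1+\tfrac1\delta\right)\left(1+\tfrac1{2+\delta}\right)$ is precisely the threshold ensuring both that $\Sigma_a=\sum_{l\in\Z}\Cov(h(Y_0^{(m)},\theta_0),h(Y_l^{(m)},\theta_0))$ converges absolutely and that, using $\E(h(Y_0^{(m)},\theta_0))=0$ from Assumption \ref{as:1.1}(ii),
\[
\sqrt{n}\,g_{n,m}(\theta_0)\stackrel{d}{\to}\mathrm{N}(0,\Sigma_a),\qquad n\to\infty.
\]

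\noindent\emph{Step 2 (linearisation).}
By Theorem \ref{thm:consistency}, $\widehat\theta_{0,\mathrm{GMM}}^{n,m}\stackrel{\Prob}{\to}\theta_0$, which lies in the interior of the compact set $\Theta$ by Assumption \ref{as:1}; hence with probability tending to one the first‑order condition $G_{n,m}(\widehat\theta)^{\top}A_{n,m}g_{n,m}(\widehat\theta)=0$ holds, where $G_{n,m}(\theta)=\partial g_{n,m}(\theta)/\partial\theta^{\top}$. Because $\mu(\theta)$ and $D(k,\theta)=\E(Y_0Y_{k\Delta})$ are smooth functions of $\theta$ for the parametric families under consideration, $h$ is $C^1$ in $\theta$, and a mean‑value expansion gives $g_{n,m}(\widehat\theta)=g_{n,m}(\theta_0)+\bar G_{n,m}(\widehat\theta-\theta_0)$, where the rows of $\bar G_{n,m}$ are Jacobian blocks evaluated at intermediate points between $\widehat\theta$ and $\theta_0$. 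A uniform law of large numbers over $\Theta$ (of the type underlying Assumption \ref{as:1.2}) combined with consistency of $\widehat\theta$ gives $\bar G_{n,m}\stackrel{\Prob}{\to}G_0$ and $G_{n,m}(\widehat\theta)\stackrel{\Prob}{\to}G_0$; together with Assumption \ref{as:2} ($A_{n,m}\stackrel{\Prob}{\to}A$) and Assumption \ref{as:3} (so that $\Sigma_a$, and hence $G_0^{\top}AG_0$, is nonsingular), solving the expanded first‑order condition for $\sqrt{n}(\widehat\theta-\theta_0)$ yields
\[
\sqrt{n}(\widehat\theta_{0,\mathrm{GMM}}^{n,m}-\theta_0)=-(G_0^{\top}AG_0)^{-1}G_0^{\top}A\,\sqrt{n}\,g_{n,m}(\theta_0)+o_{p}(1)=M\,\sqrt{n}\,g_{n,m}(\theta_0)+o_{p}(1).
\]
Combining this with the CLT from Step 1 and Slutsky's theorem gives $\sqrt{n}(\widehat\theta_{0,\mathrm{GMM}}^{n,m}-\theta_0)\stackrel{d}{\to}\mathrm{N}(0,M\Sigma_aM^{\top})$, as claimed.

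\noindent\textbf{Main obstacle.}
The delicate part is Step 1: transferring the $\theta$‑weak dependence of $Y$ to the nonlinear, quadratically growing transformation $h(Y_t^{(m)},\theta_0)$ while simultaneously producing moment bounds strong enough for the weakly‑dependent CLT. This is exactly where the quantitative hypothesis $\int_{|\xi|>1}|\xi|^{4+\delta}\lev(d\xi)<\infty$ and the exponent $\alpha>\left(1+\tfrac1\delta\right)\left(1+\tfrac1{2+\delta}\right)$ must be matched to the assumptions of the chosen weak‑dependence CLT; once this is done, Step 2 is routine given the parametric smoothness of $\mu(\cdot)$ and $D(\cdot,\cdot)$ and the uniform law of large numbers already invoked in Assumption \ref{as:1.2}.
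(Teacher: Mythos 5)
Your proposal is correct and follows essentially the same route as the paper: the paper's proof simply defers to the argument for trawl processes in \cite{BLSV2023}, which rests on the standard GMM linearisation of \citet[Theorem 1.2]{Matyas1999} together with the central limit theorem for $\theta$-weakly dependent sequences as in \citet[Theorem 6.2]{CuratoStelzer2019} — precisely your Steps 1 and 2. Your write-up merely makes explicit the moment bounds and the transfer of weak dependence to $h(Y_t^{(m)},\theta_0)$ that those references carry out.
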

\begin{proof}[Proof of Theorem \ref{prop:clt-gmm}]
The proof follows, with very minor modifications, the steps of the proof presented for trawl processes in \cite{BLSV2023}, which is based on the arguments of the proofs of  \citet[Theorem 1.2]{Matyas1999}, see also \citet[Proof of Theorem 6.2]{CuratoStelzer2019} for the case of a supOU process.
\end{proof}

\begin{remark}
    We note that the assumption on the $\theta$-weak dependence index rule out long-memory settings, see \cite{BLSV2023} for a more detailed discussion in the case of trawl processes. 
\end{remark}

\section{Empirical illustration  on  electricity day-ahead prices}\label{sec:empirics}
In this section, we will illustrate how the proposed methodology for estimating the kernel function of periodic trawl processes developed in Section \ref{sec:inference} can be used in practice, see  \cite{PeriodicTrawl-Energy} for the corresponding   data and {\tt R} code. 

We consider  day-ahead electricity baseload prices	for Germany and Luxembourg from 1st October 2018 to 1st January 2023 recorded in EUR/MWh. The data have been downloaded from the website \url{https://www.smard.de/en}, which is maintained by the 
 Bundesnetzagentur in Germany. The time series is depicted in Figure \ref{fig:Data}.

Given the turmoil experienced by the electricity market in recent years, a single stationary stochastic process model would not fit such data appropriately. Hence, we rather split the dataset into two parts, representing the relatively calm period from 01.10.2018--31.12.2020 (823 observations, referred to as TS1) and the volatile period from 01.01.2021--01.01.2023 (731 observations, referred to as TS2). We remark that more sophisticated methods, e.g.~from change-point-detection, could be applied to split the dataset. Figure \ref{fig:Data12}
depicts the time series, the empirical autocorrelation functions and the empirical densities of both time series. We note a rather slowly decaying serial correlation with a distinct weakly periodic pattern. The marginal distributions are rather different for the calm and the volatile regime: In the calm regime, we observe the well-known fact of a slightly skewed distribution, which has slightly heavier tails than the normal distribution and includes negative prices. For the volatile regime, the distribution appears to be skewed, multimodal, heavy-tailed and with significantly larger empirical mean and variance than in the calm regime.

\begin{figure}[htbp]
\centering
\includegraphics[scale=0.25,angle=270]{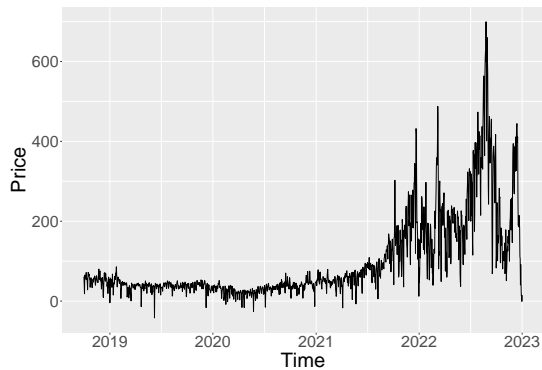}  
\caption{Time series plot of the  day-ahead electricity  baseload prices	for Germany and Luxembourg from 1st October 2018 to 1st January 2023 recorded in EUR/MWh. }
 \end{figure}

\begin{figure}[htbp]
	\captionsetup[subfigure]{aboveskip=-4pt,belowskip=-4pt}
	\subfloat[TS1: Daily prices from 01.10.2018--31.12.2020\label{fig:data1}]{	\includegraphics[scale=0.25, angle=270]{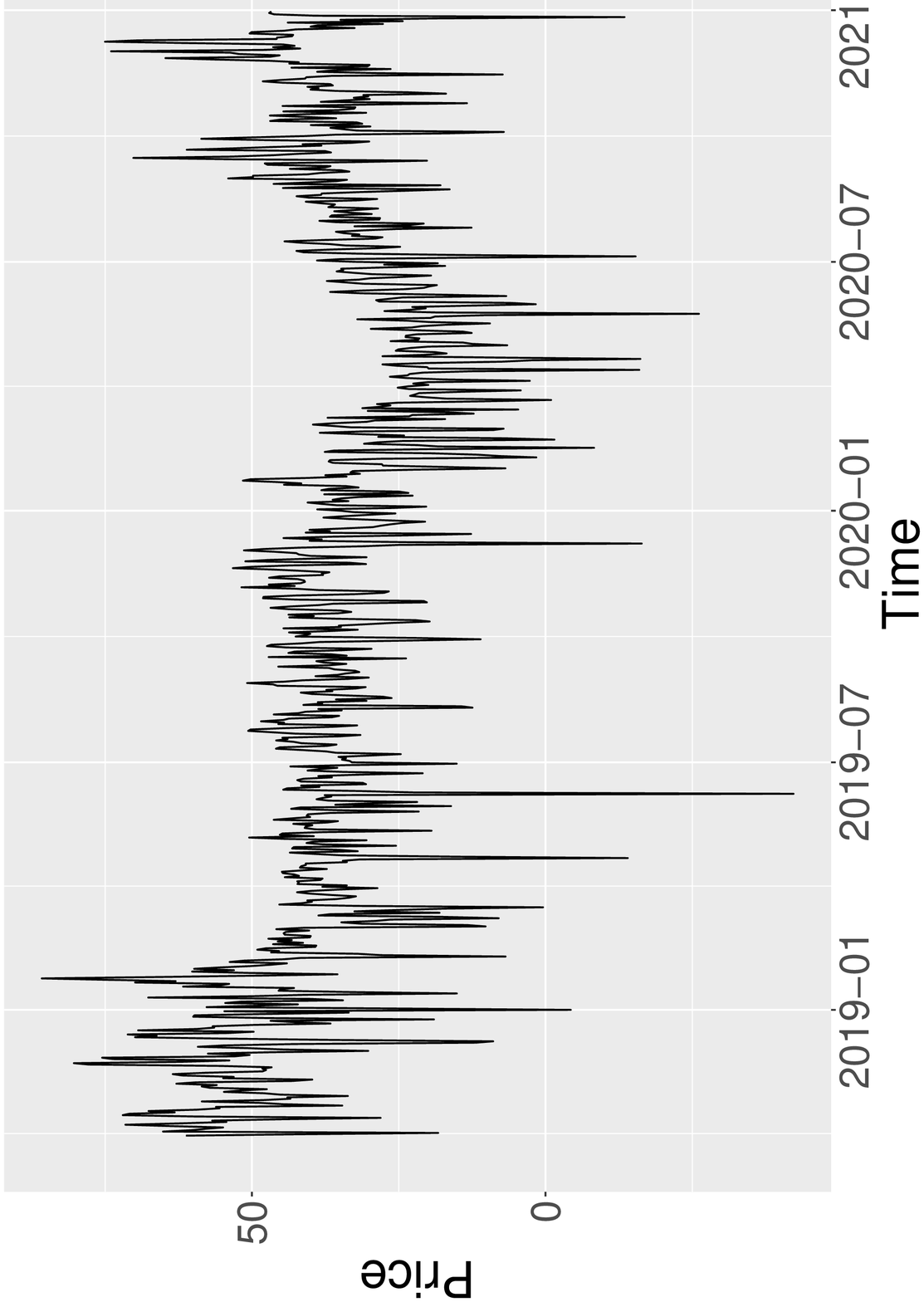} } %\\
	\captionsetup[subfigure]{aboveskip=-4pt,belowskip=-4pt}
	\subfloat[TS2: Daily prices from 01.01.2021--01.01.2023\label{fig:data2}]{	\includegraphics[scale=0.25, angle=270]{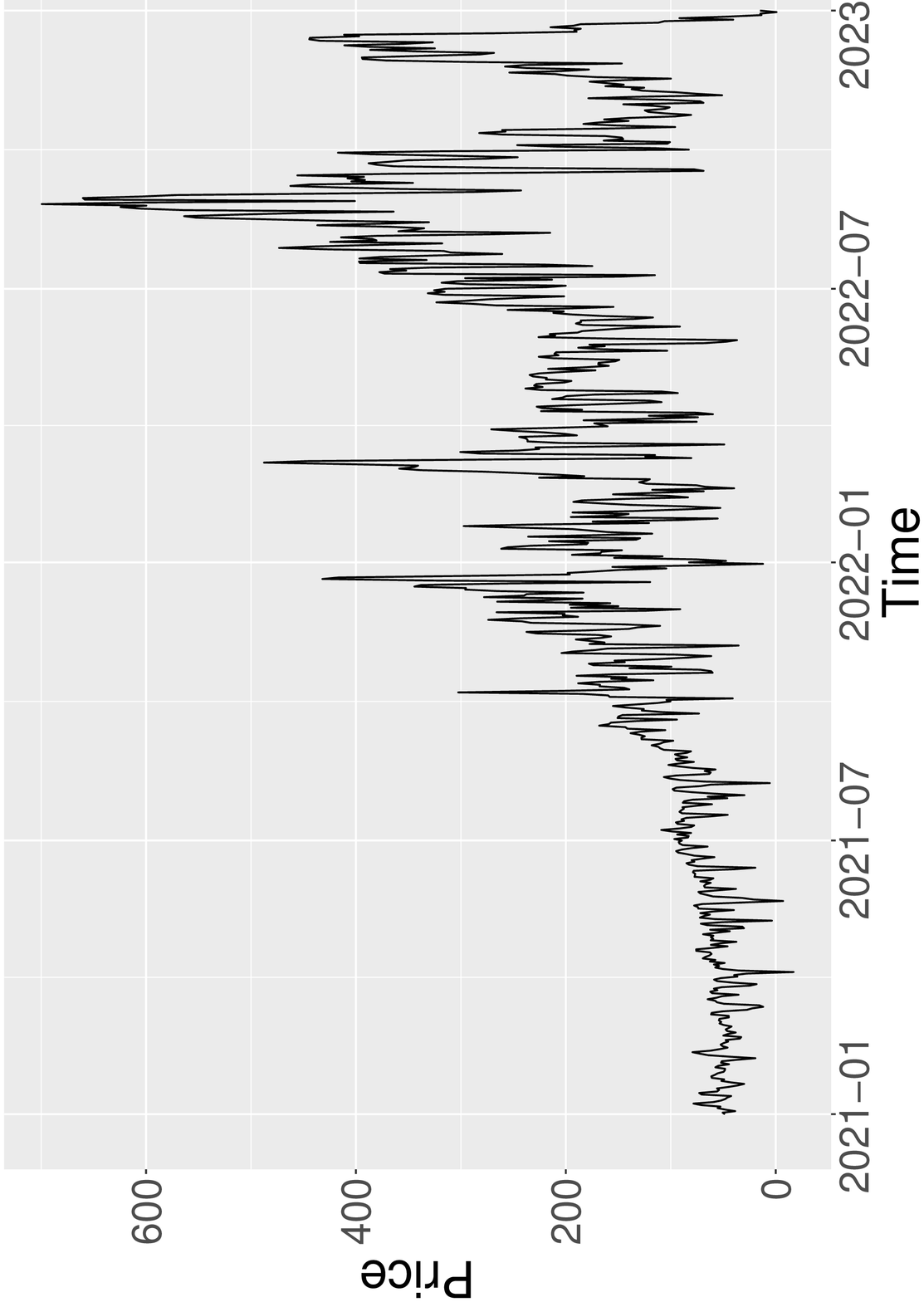}} 	\\
		\subfloat[Empirical ACF of TS1\label{fig:acf1}]{	\includegraphics[scale=0.25, angle=270]{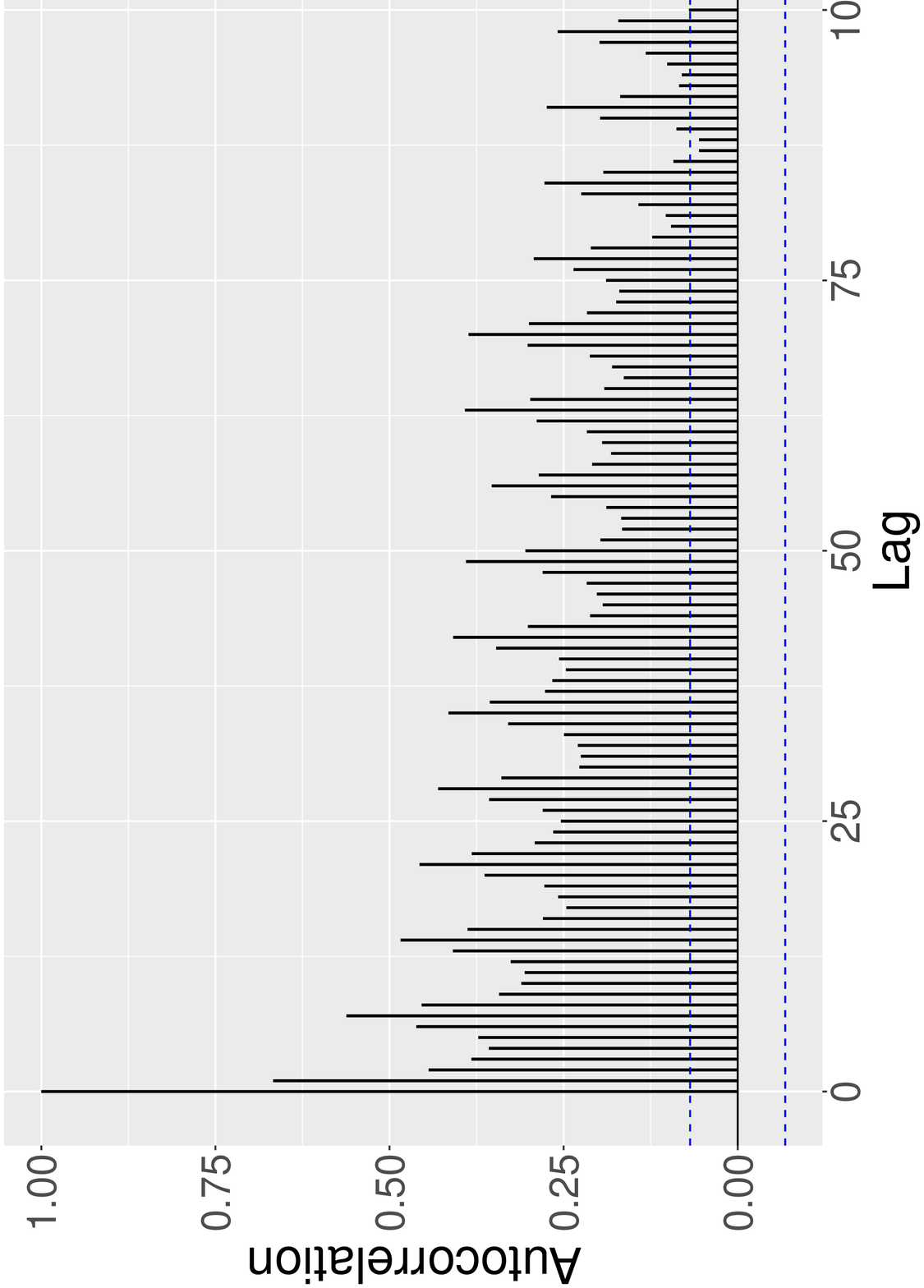} } %\\
	\captionsetup[subfigure]{aboveskip=-4pt,belowskip=-4pt}
	\subfloat[Empirical ACF of TS2\label{fig:acf2}]{	\includegraphics[scale=0.25, angle=270]{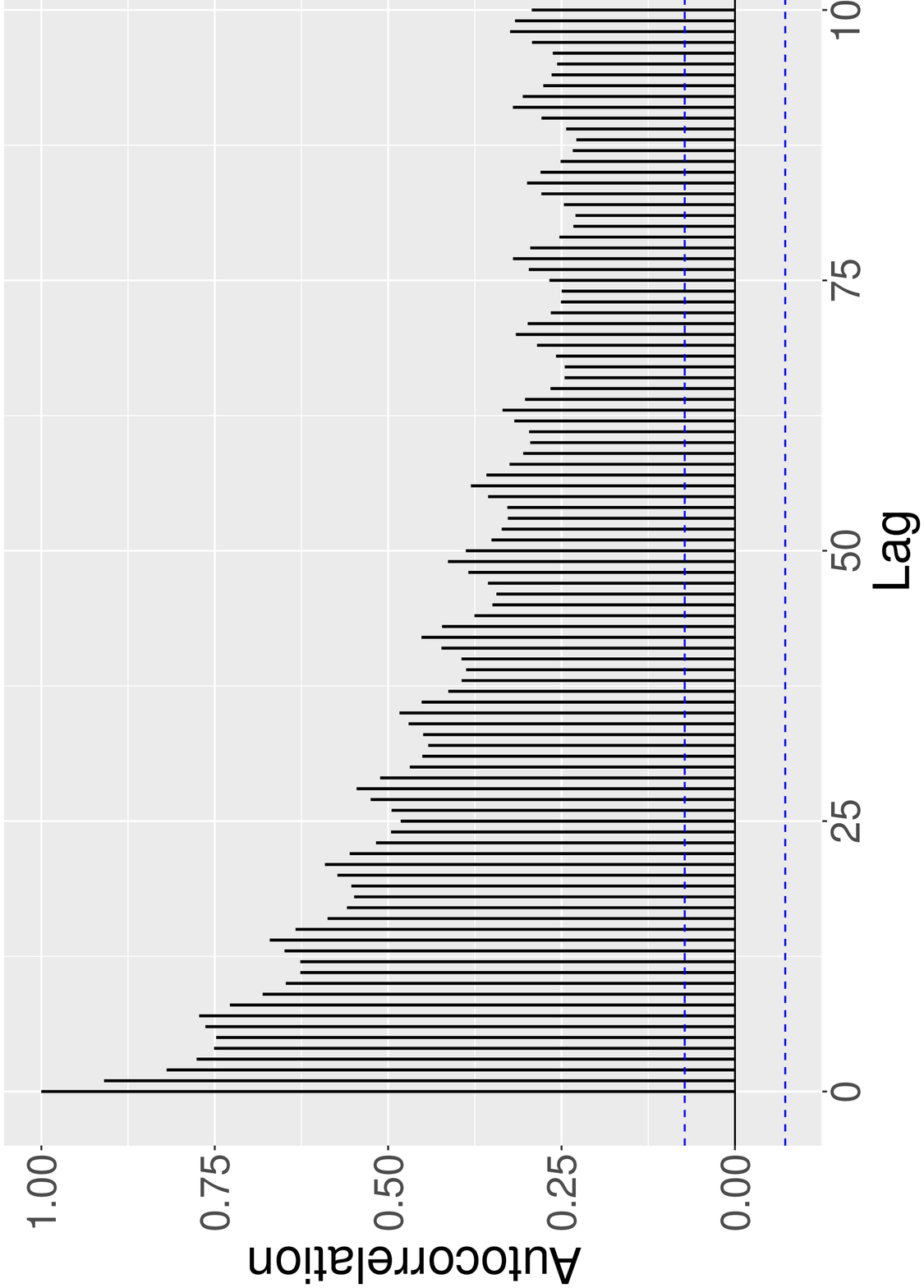}} 	\\
		\subfloat[Empirical density and scaled histogram of TS1	\label{fig:hist1}]{	\includegraphics[scale=0.25, angle=270]{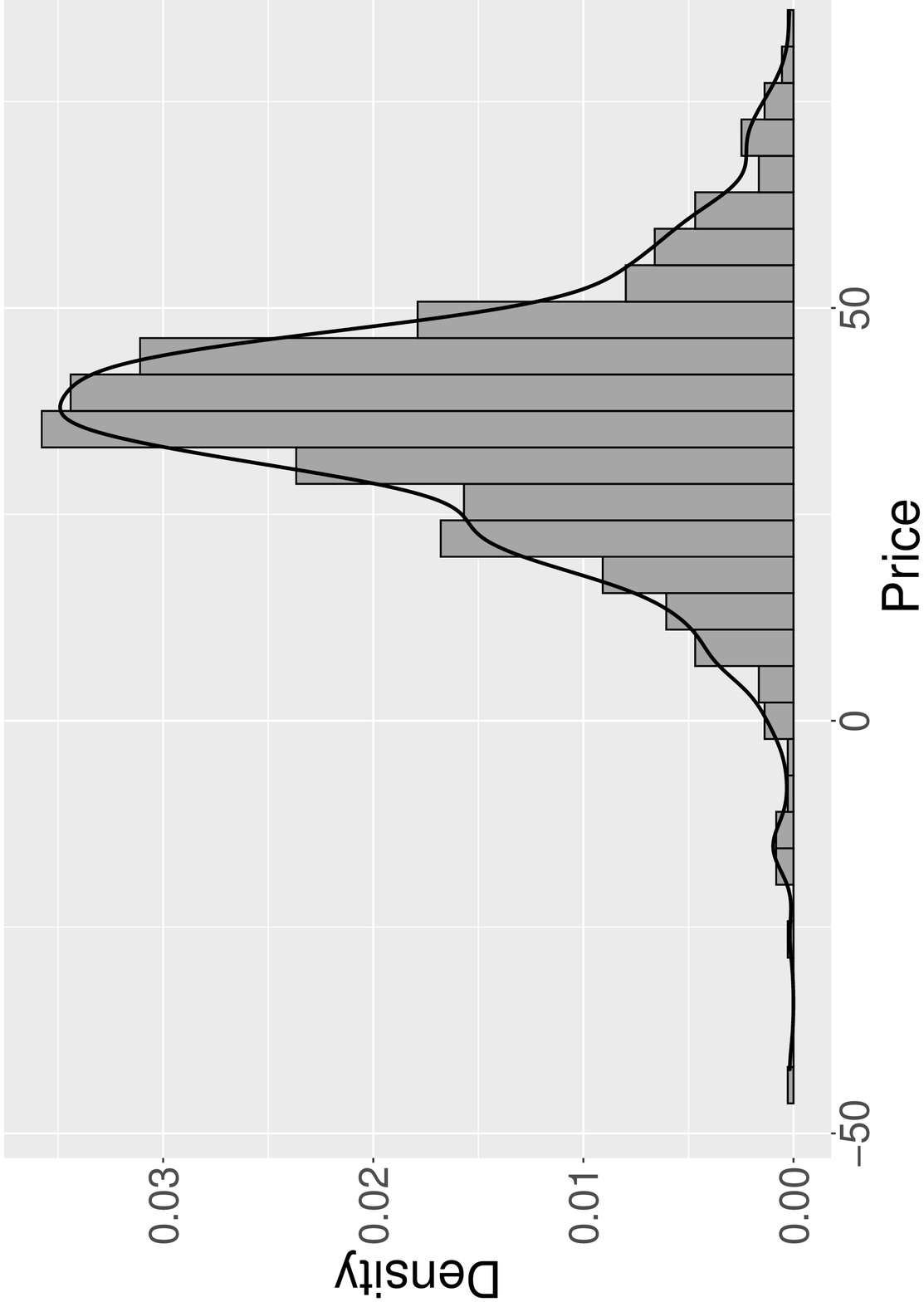} } %\\
	\captionsetup[subfigure]{aboveskip=-4pt,belowskip=-4pt}
	\subfloat[Empirical density and scaled histogram of TS2\label{fig:hist2}]{	\includegraphics[scale=0.25, angle=270]{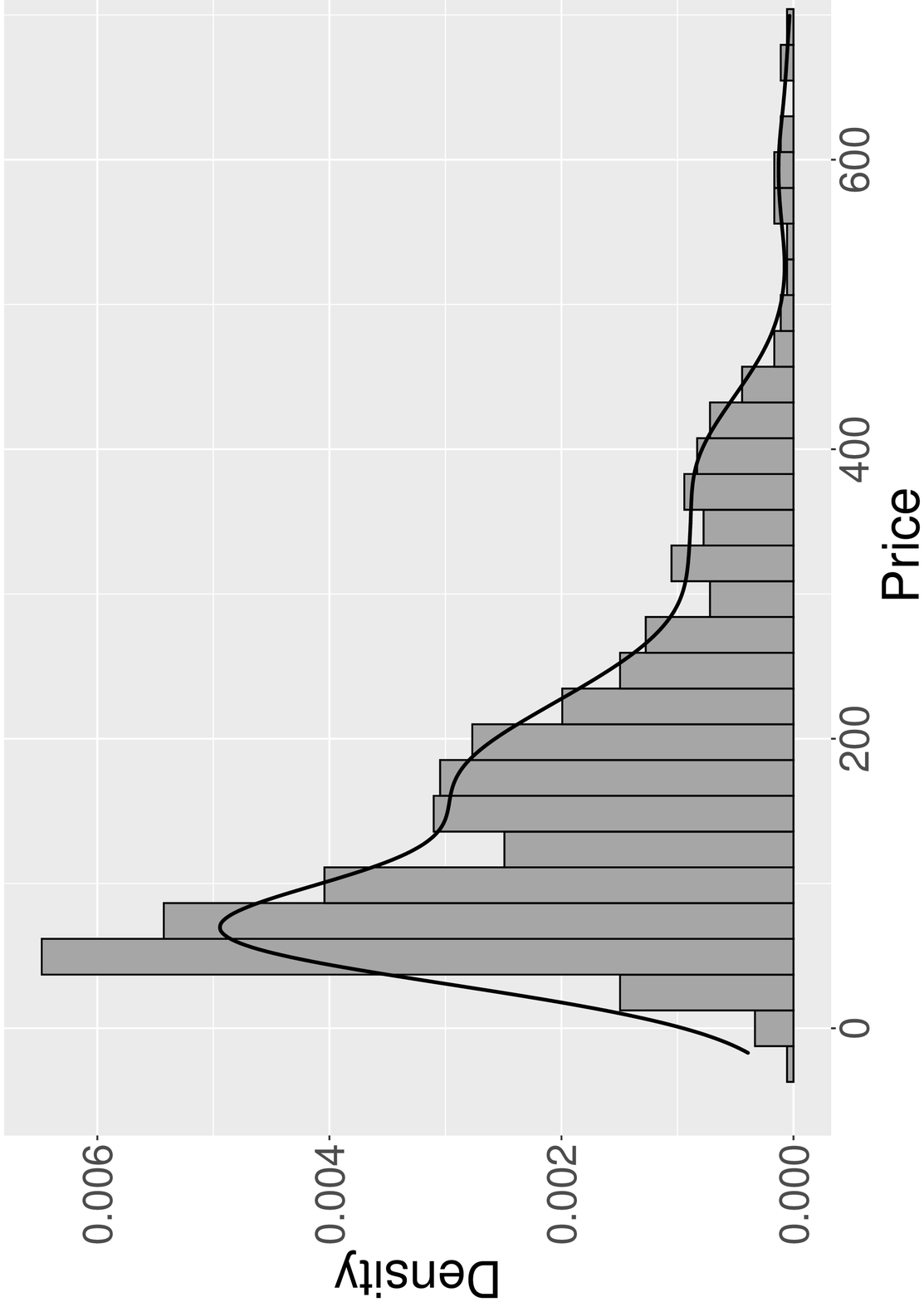}} 	\\
	\caption{\it Exploratory data analysis of the  day-ahead electricity  baseload prices	for Germany and Luxembourg from 1st October 2018 to 1st January 2023 recorded in EUR/MWh, which has been split into two times series (TS1 and TS2).}
	\label{fig:Data12}
\end{figure}

We will now focus on estimating the kernel function of a periodic trawl process.
 First, we consider the relatively calm period from 01.10.2018--31.12.2020  and estimate an exponential trawl function with  period $\tau=7$. We set $\Delta=1$, representing one day. We obtain the following estimates:
 $\widehat{\lambda} =0.055$
 and $(\widehat{c(0)}, \widehat{c(1)}, \ldots, \widehat{c(6)})=(1.000, 0.705, 0.496, 0.451, 0.445, 0.490, 0.642)$.
 The empirical and fitted autocorrelation functions are depicted in Figure \ref{fig:acf-exp-1}.
We note that the exponential decay appears to be too fast to capture the empirical autocorrelation function well. Hence we also fit a supGamma periodic trawl function. We first ran a GMM estimation on the empirical autocorrelation function to estimate $\alpha$ and then set  
$\alpha=1.2$. We then employ our method of moment estimator and obtain  $\widehat{H}=1.264$, indicating a long-memory regime, and  
$(\widehat{c(0)}, \widehat{c(1)}, \ldots, \widehat{c(6)}) = (1.000, 0.789, 0.582, 0.539, 0.534, 0.583, 0.752)$. We note that the model fit looks very good, see Figure \ref{fig:acf-lm-1}. Note that the sensitivity to the particular choice of $\alpha$ appeared low, but this needs to be investigated in more detail in the future. 

We repeat the analysis on the more volatile time period from 01.01.2021--01.01.2023. For the exponential-periodic trawl process we obtain the estimates  $\widehat{\lambda} =0.032$ and 
 $(\widehat{c(0)}, \widehat{c(1)}, \ldots, \widehat{c(6)}) = (1.000, 0.689, 0.473, 0.421, 0.406, 0.437, 0.558)$, which results in a decay which is too fast, see Figure \ref{fig:acf-exp-2}. Again, the fit of the supGamma-periodic trawl model appears better, where we set
 $\alpha = 5.297$ (based on a GMM-estimation) and then find that  $\widehat{H}=1.298$, which as before falls into the long-memory setting, and 
 $(\widehat{c(0)}, \widehat{c(1)}, \ldots, \widehat{c(6)}) = (1.000, 0.958, 0.901, 0.887, 0.888, 0.912, 0.957)$, see Figure \ref{fig:acf-lm-2}.

\begin{figure}[htbp]
	\captionsetup[subfigure]{aboveskip=-4pt,belowskip=-4pt}
	\subfloat[Exponential-periodic trawl for TS1	\label{fig:acf-exp-1}]{	\includegraphics[scale=0.25, angle=270]{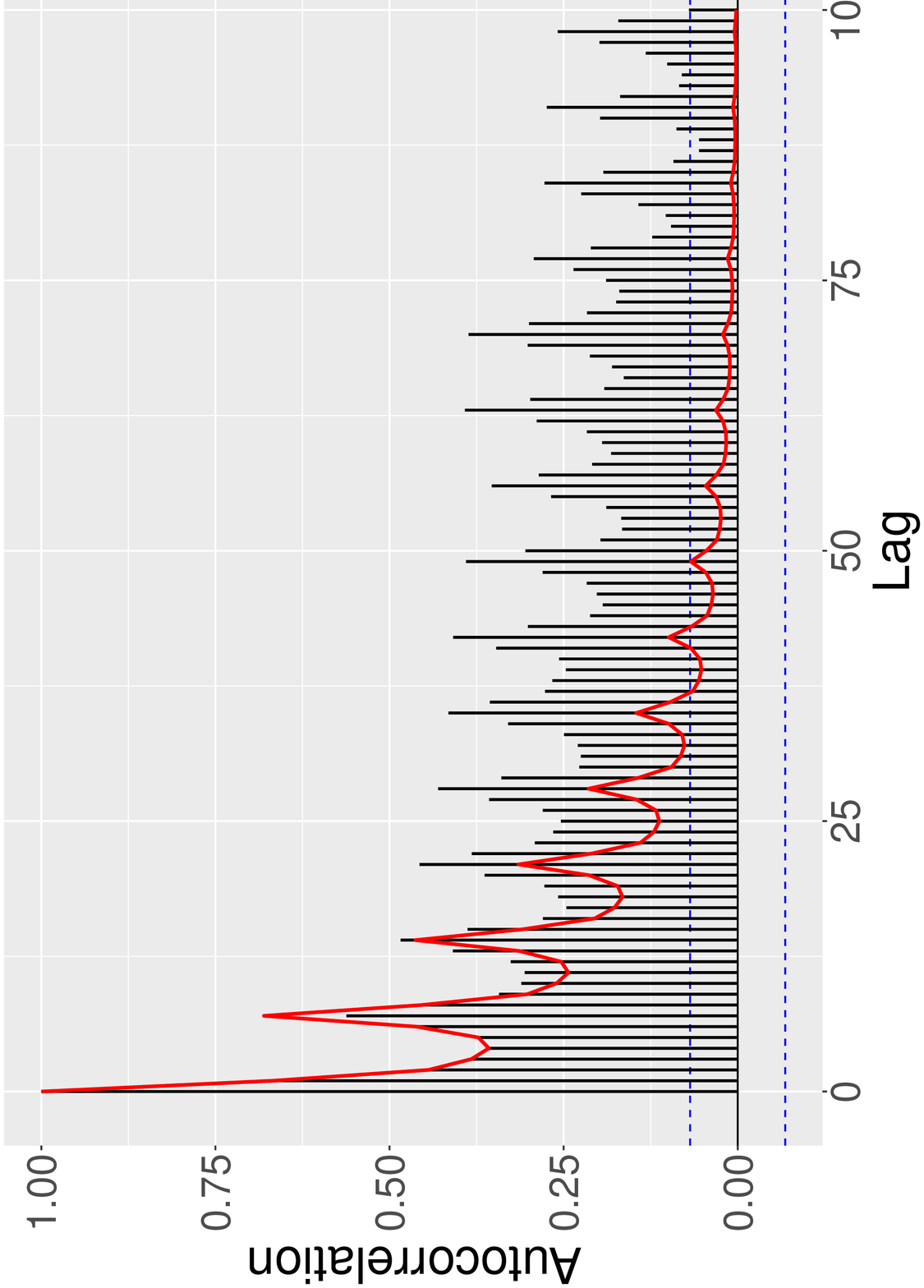} } %\\
	\captionsetup[subfigure]{aboveskip=-4pt,belowskip=-4pt}
	\subfloat[SupGamma-periodic trawl for TS1\label{fig:acf-lm-1}]{	\includegraphics[scale=0.25, angle=270]{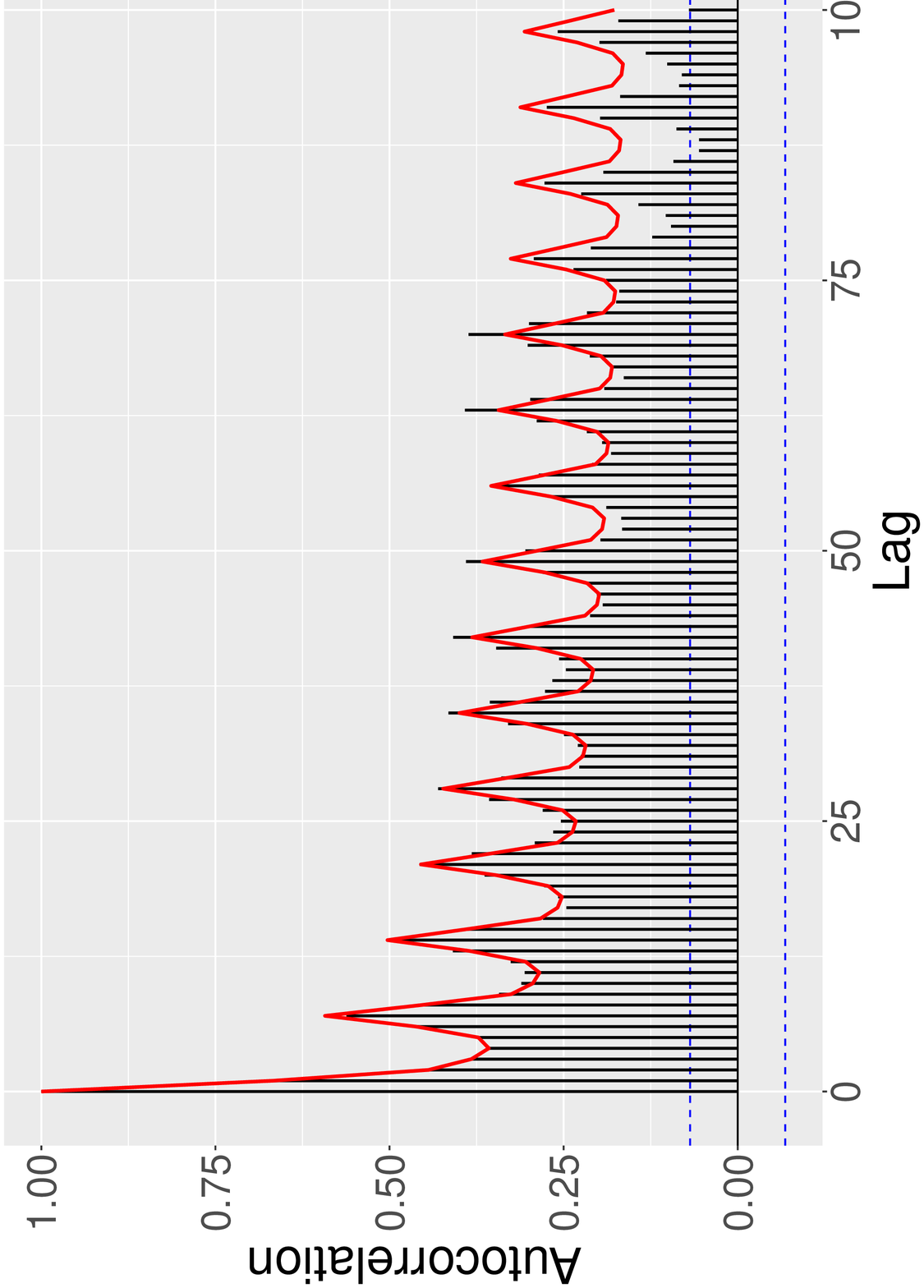}} 	\\
		\subfloat[Exponential-periodic trawl for TS2\label{fig:acf-exp-2}]{	\includegraphics[scale=0.25, angle=270]{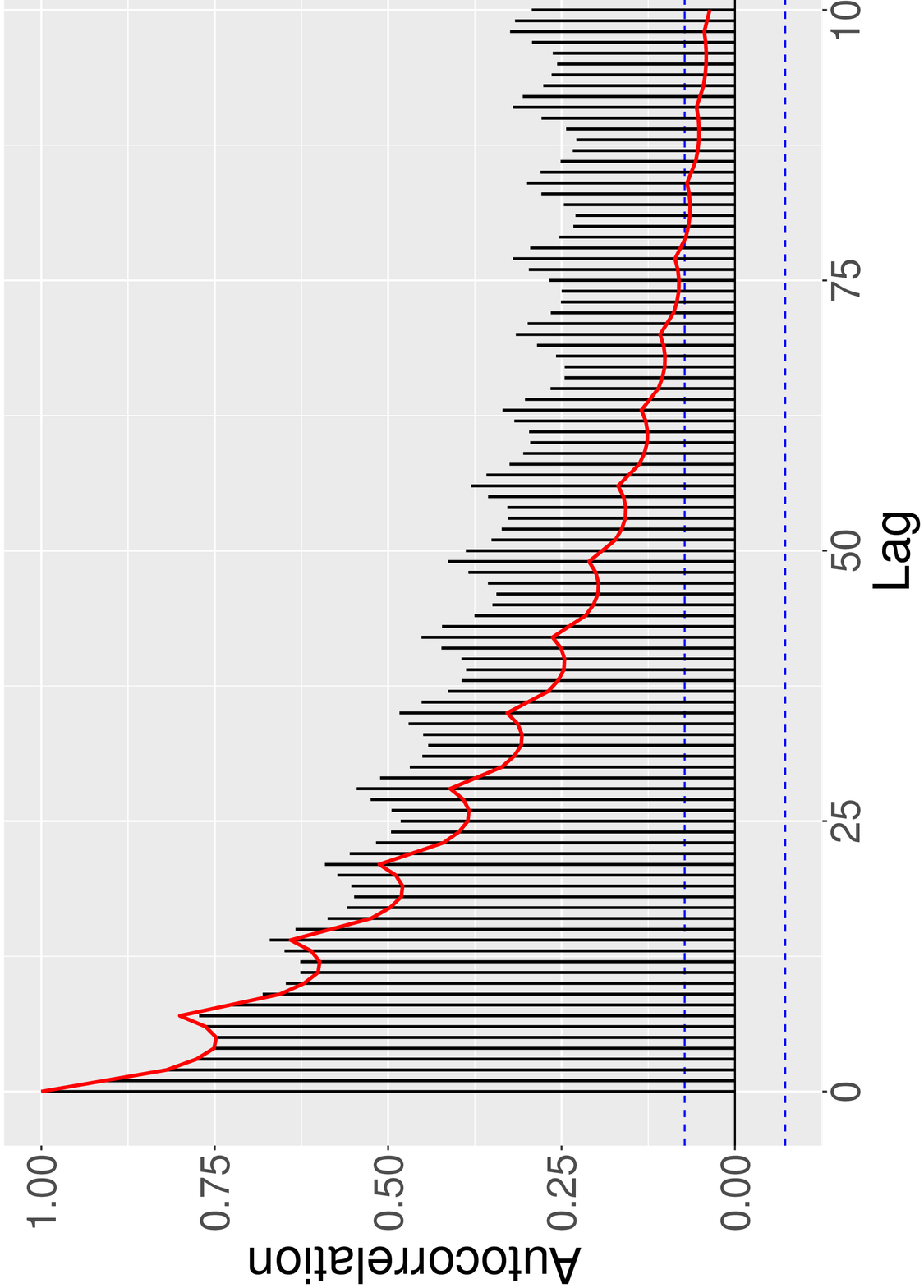} } %\\
	\captionsetup[subfigure]{aboveskip=-4pt,belowskip=-4pt}
	\subfloat[SupGamma-periodic trawl for TS2\label{fig:acf-lm-2}]{	\includegraphics[scale=0.25, angle=270]{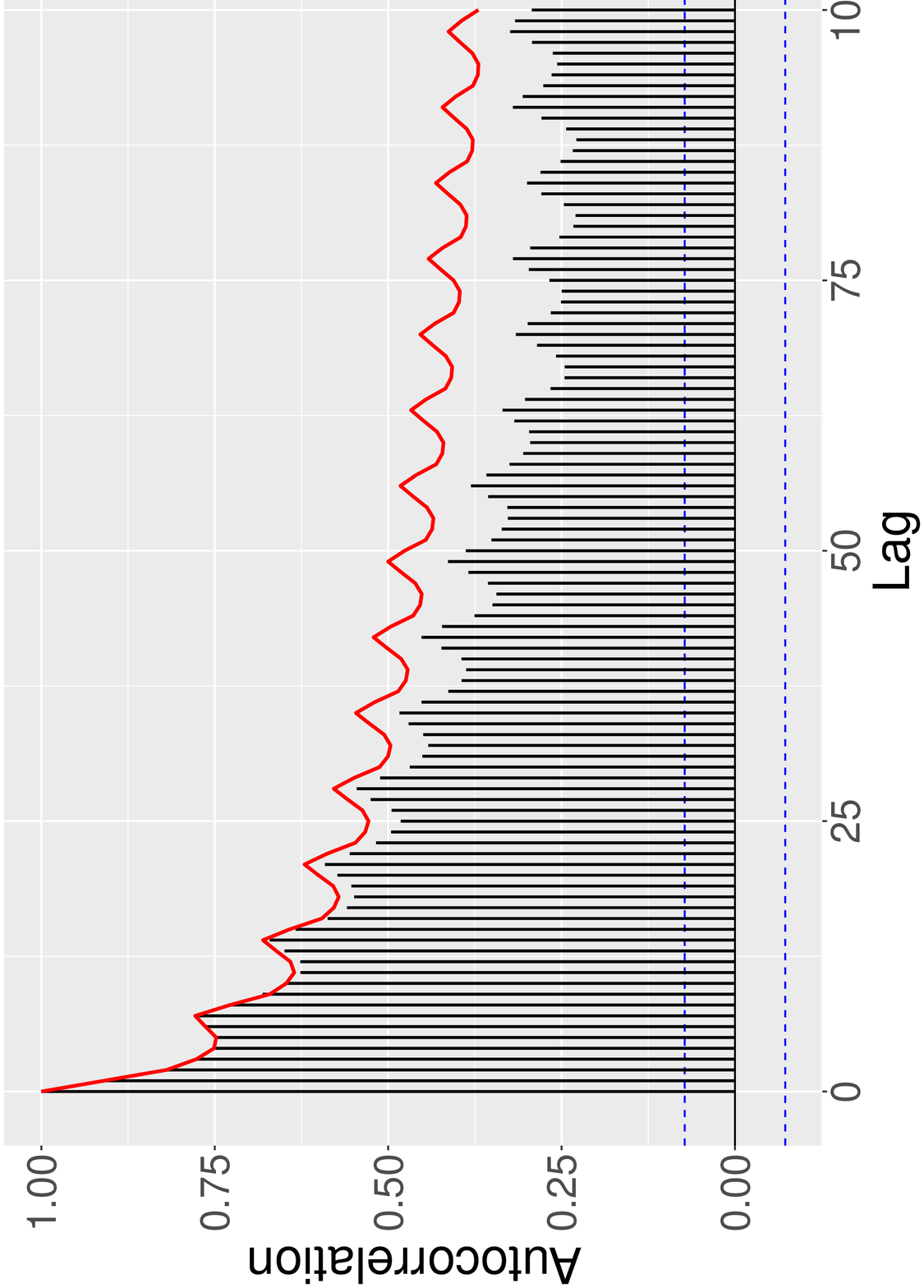}} 	
	\caption{\it Empirical and fitted exponential-periodic and supGamma periodic  trawl for the time series  TS1 and TS2.}
	\label{fig:Data}
\end{figure}

This small illustration demonstrates how easily the kernel function of a periodic trawl process can be fitted to periodic autocorrelation functions. 

In future work, it will be interesting to further extend and fine-tune the methodology, in particular also the GMM methodology derived in Section \ref{sec:GMM} to also estimate the parameters of the driving  L\'{e}vy noise (under suitable identifiability conditions) in practical settings. Here, a multistep approach, where the parameters of the kernel functions are inferred first (as in this article), followed by the parameters of the L\'{e}vy basis, might be a useful direction to investigate further. We also remark that in cases when the period is not known, preliminary simulation experiments not reported here indicate that the period can be estimated using a smoothed empirical periodogram, which is in line with existing literature.

It will also be worthwhile to explore extensions beyond the stationary setting, by for instance considering a linear combination of an  additive seasonal function and a periodic trawl process to allow for non-stationarities in the mean.
\section{Conclusion and outlook}\label{sec:conclusion}
This article introduced an alternative definition of periodic trawl processes compared to the one given in \cite{BNLSV2014}, which appears to have slightly higher analytical tractability.
We derived some of the key probabilistic properties of periodic trawl processes and have studied relevant examples including both short- and long-memory settings. We showed that a slice method can be used to simulate periodic trawl processes effectively. 
Under suitable technical conditions, which currently only cover short-memory scenarios, we have proved the asymptotic normality of the sample mean, sample autocovariances, sample autocorrelations and the GMM-estimator. 
We have implemented the proposed methodology in {\tt R} and showed that the serial correlation of periodic trawl processes 
describes the empirical one  of electricity spot prices very well. 
These are promising results, which suggest that it will be worthwhile to further advance the theory and statistical methodology for periodic trawl processes. 
For instance, for the GMM approach to be applicable in practice, we need to specify suitable identifiability conditions
for fully parametric settings. It will also be interesting to account for multiple periods, for instance, including  weekly and yearly periodicities, and tailor methods from spectral analysis to such settings. 
Moreover, model selection tools are needed for choosing 
the appropriate L\'{e}vy basis. For the particular application considered here, we note that our framework allows for a smooth
(Gaussian) noise terms as well as for jumps, where the latter can model the positive and even more dominant negative jumps, as in \cite{VV2014}, but also the more volatile behaviour observed in the most recent electricity prices. That said, in future research, we should also investigate whether including a stochastic volatility term, either via temporal or spatial stochastic  scaling of the periodic trawl process is needed to describe the data even better, which would bring us into the general framework of ambit fields and processes, see \cite{BNBV2018_book}. 

From a theoretical point of view, it will be interesting to develop an asymptotic theory for periodic trawl processes which can allow for long memory settings. We would then need to check how well such an asymptotic theory works in finite samples to decide whether confidence bounds for the obtained parameters can be constructed based on asymptotic guarantees or whether bootstrap approaches might be more useful in practice.

%\clearpage
%%%%%%%%%%%%%%%%%%%%%%%%%%%%%%%%%%%%%%%%%%%%%%%AAppendix
\begin{appendix}
\section{Appendix}\label{ap:proofs}
The appendix contains the proofs of all the technical results presented in the main paper, additional examples and a discussion of when the technical assumptions needed in our main theorems hold for periodic trawl processes.  
\subsection{Proof of the second order properties}\label{ap:SP}
First, we derive the joint characteristic/cumulant function.
\begin{proposition} Let $t_1<t_2$ and $\theta_1, \theta_2 \in \R$. Then 
	\begin{align*}
		&\Log(\mathbb{E}(\exp(i (\theta_1, \theta_2)(Y_{t_1}, Y_{t_2})^{\top})))=\Log(\mathbb{E}(\exp(i \theta_1 Y_{t_1} + i\theta_2 Y_{t_2})))
		\\
		&= \int_{(-\infty,t_1]\times \R}C_{L'}( \theta_1 p(t_1-s)\ind_{(0,g(t_1-s))}(x)+\theta_2 p(t_2-s)\ind_{(0,g(t_2-s))}(x))dx ds
		\\
		&+
		\int_{(t_1,t_2]\times \R}C_{L'}(\theta_2 p(t_2-s)\ind_{(0,g(t_2-s))}(x))dx ds,
	\end{align*}
	where $C_{L'}$ denotes the cumulant function of the L\'{e}vy seed $L'$ associated with the L\'{e}vy basis $L$.
\end{proposition}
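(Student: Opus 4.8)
The plan is to express the linear combination $\theta_1 Y_{t_1}+\theta_2 Y_{t_2}$ as a \emph{single} stochastic integral against $L$ and then apply the same Lévy--Khintchine formula for integrals against a homogeneous \Levy basis that already underlies Proposition~\ref{pro:exMMA}. First I would use linearity of the map $h\mapsto\int h\,dL$ together with the fact that the $L$-integrable functions form a vector space (cf.~\citet[Theorem~2.7]{RajputRosinski1989}) to write
\[
\theta_1 Y_{t_1}+\theta_2 Y_{t_2}=\int_{\R\times\R}h_{t_1,t_2}(x,s)\,L(dx,ds),\qquad h_{t_1,t_2}(x,s):=\theta_1 f(x,t_1-s)+\theta_2 f(x,t_2-s),
\]
where $f(x,u)=p(u)\ind_{(0,g(u))}(x)\ind_{[0,\infty)}(u)$. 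Since each $f(\cdot,t_i-\cdot)$ satisfies the integrability conditions \eqref{eq:intcondMA} by hypothesis, so does $h_{t_1,t_2}$, and \citet[Theorem~2.7]{RajputRosinski1989} gives
\[
\Log\bigl(\E(\exp(i(\theta_1 Y_{t_1}+\theta_2 Y_{t_2})))\bigr)=\int_{\R\times\R}C_{L'}\bigl(h_{t_1,t_2}(x,s)\bigr)\,dx\,ds.
\]

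The second step is a case split on the time variable $s$, using $t_1<t_2$ and the fact that $f(x,u)=0$ whenever $u<0$ (because of the factor $\ind_{[0,\infty)}(u)$). For $s>t_2$ both $t_1-s$ and $t_2-s$ are negative, hence $h_{t_1,t_2}(x,s)=0$ and $C_{L'}(0)=0$, so that region contributes nothing. For $t_1<s\le t_2$ one has $t_1-s<0$ but $t_2-s\ge 0$, so $h_{t_1,t_2}(x,s)=\theta_2 p(t_2-s)\ind_{(0,g(t_2-s))}(x)$. For $s\le t_1$ both $t_1-s$ and $t_2-s$ are nonnegative, so $h_{t_1,t_2}(x,s)=\theta_1 p(t_1-s)\ind_{(0,g(t_1-s))}(x)+\theta_2 p(t_2-s)\ind_{(0,g(t_2-s))}(x)$. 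The boundary sets $\{s=t_1\}$, $\{s=t_2\}$ are Lebesgue-null and may be ignored. Splitting the integral over these disjoint regions and discarding $\{s>t_2\}$ reproduces exactly the two-term expression in the statement.

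As a conceptual cross-check — and an alternative route I would mention — one can instead invoke independent scatteredness of $L$: the restrictions of $L$ to $(-\infty,t_1]\times\R$ and to $(t_1,t_2]\times\R$ are independent, $Y_{t_1}$ is measurable with respect to the former, and $Y_{t_2}$ decomposes as the sum of a $(-\infty,t_1]\times\R$-measurable part and a $(t_1,t_2]\times\R$-measurable part; the joint characteristic function then factorises over these two pieces, each factor being evaluated by the single-integral Lévy--Khintchine formula, which yields the same answer.

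I expect the only genuine bookkeeping point to be the verification that $h_{t_1,t_2}$ lies in the domain of the $L$-integral so that \citet[Theorem~2.7]{RajputRosinski1989} applies, and keeping the three $s$-regions disjoint while correctly tracking which indicator vanishes on each; this is routine, and the monotonicity of $g$ is not needed here beyond making the sets $\{0<x<g(u)\}$ well defined.
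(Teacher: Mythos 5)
Your proposal is correct and follows essentially the same route as the paper: both arguments reduce the joint characteristic function to the L\'evy--Khintchine formula $\Log\E\bigl(\exp\bigl(i\int h\,dL\bigr)\bigr)=\int C(h;L')\,dx\,ds$ applied to the combined integrand $\theta_1 f(x,t_1-s)+\theta_2 f(x,t_2-s)$, differing only in whether the domain is split into $s\le t_1$ and $t_1<s\le t_2$ before or after applying the cumulant formula. Your case analysis on the sign of $t_i-s$ and the observation that the region $s>t_2$ contributes nothing are exactly the bookkeeping the paper performs implicitly, so no gap remains.
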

\begin{proof}
	Let $t_1<t_2$ and $\theta_1, \theta_2 \in \R$. Then the joint characteristic function is given by 
	\begin{align*}
		&\mathbb{E}(\exp(i (\theta_1, \theta_2)(Y_{t_1}, Y_{t_2})^{\top}))=\mathbb{E}(\exp(i \theta_1 Y_{t_1} + i\theta_2 Y_{t_2}))
		\\
		&=\mathbb{E}\left[\exp\left(i \theta_1 \int_{(-\infty,t_1]\times \R}p(t_1-s)\ind_{(0,g(t_1-s))}(x)L(dx,ds)
		\right.\right.\\
		& \left.\left.+ i\theta_2 \int_{(-\infty,t_1]\times \R}p(t_2-s)\ind_{(0,g(t_2-s))}(x)L(dx,ds)+i\theta_2\int_{(t_1,t_2]\times \R}p(t_2-s)\ind_{(0,g(t_2-s))}(x)L(dx,ds)\right)\right]\\
		&=\mathbb{E}\left[\exp\left( i\int_{(-\infty,t_1]\times \R}\{ \theta_1 p(t_1-s)\ind_{(0,g(t_1-s))}(x)+\theta_2 p(t_2-s)\ind_{(0,g(t_2-s))}(x)\}L(dx,ds)
		\right.\right.\\
		& \left.\left.+i\theta_2\int_{(t_1,t_2]\times \R}p(t_2-s)\ind_{(0,g(t_2-s))}(x)L(dx,ds)\right)\right]\\
		&=\exp\left(
		\int_{(-\infty,t_1]\times \R}C( \theta_1 p(t_1-s)\ind_{(0,g(t_1-s))}(x)+\theta_2 p(t_2-s)\ind_{(0,g(t_2-s))}(x); L')dx ds
		\right)\\
		& \quad \cdot
		\exp\left(
		\int_{(t_1,t_2]\times \R}C(\theta_2 p(t_2-s)\ind_{(0,g(t_2-s))}(x); L')dx ds
		\right).
	\end{align*}
	I.e.
	\begin{align*}
		&\log(\mathbb{E}(\exp(i \theta_1 Y_{t_1} + i\theta_2 Y_{t_2})))
		\\
		&= \int_{(-\infty,t_1]\times \R}C( \theta_1 p(t_1-s)\ind_{(0,g(t_1-s))}(x)+\theta_2 p(t_2-s)\ind_{(0,g(t_2-s))}(x); L')dx ds
		\\
		&+
		\int_{(t_1,t_2]\times \R}C(\theta_2 p(t_2-s)\ind_{(0,g(t_2-s))}(x); L')dx ds.
	\end{align*}
\end{proof}
We can now easily derive the second-order properties of the periodic trawl process:
\begin{proof}[Proof of Proposition \ref{prop:SecondOrder}] 
	For $t, t_1, t_2\in \R, t_1<t_2$, we have
\begin{align*}
	\E(Y_t)&= \E(L')\int_{-\infty}^{t}p(t-s) g(t-s)ds=\E(L')\int_{0}^{\infty}p(u) g(u)du,\\
		\Var(Y_{t})&= \Var(L')\int_{0}^{\infty}p^2(u)g(u)du,\\
	\Cov(Y_{t_1},Y_{t_2})&=-\left.\frac{\partial^2}{\partial \theta_1 \partial \theta_2}\log(\E(\exp(i\theta_1 Y_{t_1}+i\theta_2 Y_{t_2})))\right|_{\theta_1=\theta_2=0}
	\\
	&= \Var(L')\int_{-\infty}^{t_1}p(t_1-s)p(t_2-s)\min(g(t_1-s),g(t_2-s))ds,\\
	\Cor(Y_{t_1},Y_{t_2})
	&= \frac{\int_{-\infty}^{t_1}p(t_1-s)p(t_2-s)\min(g(t_1-s),g(t_2-s))ds}{\int_{0}^{\infty}p^2(u)g(u)du}.
\end{align*}
Recall that we assume that $g$ is monotonically decreasing, i.e.~if $x\leq y$, then $g(x)\geq g(y)$. 

Since $t_1<t_2$, we have $t_1-s<t_2-s$  for $s<t_1$ and 
\begin{align*}
	\min(g(t_1-s),g(t_2-s))=g(t_2-s),
\end{align*}
hence 
the above expressions simplify to 
\begin{align*}
	\Cov(Y_{t_1},Y_{t_2})
	&=\Var(L')\int_{-\infty}^{t_1}p(t_1-s)p(t_2-s)g(t_2-s)ds\\
	&=\Var(L')\int_{0}^{\infty}p(u)p(t_2-t_1+u)g(t_2-t_1+u)du,\\
	\Cor(Y_{t_1},Y_{t_2})
	&= \frac{\int_{0}^{\infty}p(u)p(t_2-t_1+u)g(t_2-t_1+u)du}{\int_{0}^{\infty}p^2(u)g(u)du}.
\end{align*}
\end{proof}

\begin{proof}[Proof of Proposition \ref{prop:Cor}]
	Recall that 
	\begin{align*}
		\Cor(Y_{0},Y_{t})=
	 \frac{\int_{0}^{\infty}p(u)p(t+u)g(t+u)du}{\int_{0}^{\infty}p^2(u)g(u)du}.
	\end{align*}
We consider a constant  $M>\tau$.  
Since $p$ is periodic with period $\tau$, there exist $\xi_1, \xi_2 \in [0,\tau]$ such that, 
 \begin{align*}
	\int_{0}^{M}p(u)p(t+u)g(t+u)du&=p(\xi_1)p(\xi_1+t)\int_{0}^{M}g(t+u)du,\\
	\int_{0}^{M}(p(u))^2g(u)du&=(p(\xi_2))^2\int_{0}^{M}g(t+u)du,
\end{align*}
by the mean value theorem. 
We note that $p$ is assumed to be continuous, and since it is also periodic, it is bounded. 
Also, the integrability conditions in \eqref{eq:intcondMA} guarantee the existence of the integrals when taking the limit as $M\to \infty$. 
Taking the limit  and setting $c(t)=p(\xi_1)p(\xi_1+t)/(p(\xi_2))^2$ leads the result; since $Cor(Y_0,Y_0)=1$, we deduce that $c(0)=1$. Also, we observe that  $c$ is proportional to the $\tau$-periodic function $p$ and is hence $\tau$-periodic itself.

\end{proof}

\begin{remark}\label{rem:alt-trawl}
As mentioned in Remark \ref{rem:alt-trawl-main}, 
	\cite{BNLSV2014}
 proposed adding a periodic function as a multiplicative factor to $g$ rather than as kernel function as in \eqref{eq:PT}, which results in a process
$(Z_t)_{t\geq 0}$ with 
\begin{align}\label{eq:PT2}
	Z_t 
	&=
	\int_{\R\times \R}\ind_{(0,p(t-s)g(t-s))}(x)\ind_{[0,\infty)}(t-s) L(dx,ds),
\end{align}
compared to our earlier definition of 
$(Y_t)_{t\geq 0}$ with 
\begin{align*}
	Y_t 
	&=
	\int_{\R\times \R}p(t-s)\ind_{(0,g(t-s))}(x)\ind_{[0,\infty)}(t-s) L(dx,ds).
\end{align*}
The autocorrelation function of the process $Z$ is of the form, for $t_1<t_2$,
\begin{align*}
\Cor(Z_{t_1},Z_{t_2})
	&= \frac{\int_{-\infty}^{t_1}\min(p(t_1-s)g(t_1-s), p(t_2-s)g(t_2-s))ds}{\int_{0}^{\infty}p(u)g(u)du},
 \end{align*}
 which is potentially slightly more difficult to deal with than the autocorrelation function of our proposed periodic trawl  process $Y$.
\end{remark}

%%%%%%%%%%%%%%%%%%%%%%%%%%%%%%%%%%%%%%%%%%%%%
%%%%%%%%%%%%%%%%%%%%%%%%%%%%%%%%%%%%%%
\subsection{Proofs of the asymptotic theory}\label{ap:proofsgentheory}
The following proofs extend the ideas presented in the work by   \cite{CohenLindner2013}. Alternatively, we could have deduced the results from the more recent work by \cite{CuratoStelzer2019}. 
\begin{proof}[Proof of Theorem \ref{thm:samplemean}]
The proof is a straightforward extension of the arguments provided in the proof of Theorem 2.1 in \cite{CohenLindner2013}. For the convenience of the reader and to keep this article self-contained, we will present the steps to extend the proof by \cite{CohenLindner2013} to our more general setting of mixed moving average processes driven by  homogeneous L\'{e}vy bases. 

First of all, we continue the function $F_{\Delta}$ periodically on $\R$ by setting
\begin{align*}
F_{\Delta}(x,u)=\sum_{j=-\infty}^{\infty}|f(x, u+j\Delta)|, \quad u \in \R, 
\end{align*}
where $F_{\Delta}(x,u)=F_{\Delta}(x,u+j\Delta)$ for all $j \in \mathbb{Z}$, $u, x \in \R$.

We note that the autocovariance function of $Y$ satisfies
\begin{align*}
|\gamma_f(j\Delta)| \leq \kappa_2\int_{\R \times \R}|f(x, -s)||f(x, j\Delta-s)|dx ds,
\end{align*}
for any $j \in \mathbb{Z}$
and
\begin{align}\begin{split}\label{eq:bound}
\frac{1}{\kappa_2}\sum_{j=-\infty}^{\infty}|\gamma_f(j \Delta)|
&\leq
\sum_{j=-\infty}^{\infty}\int_{\R \times \R}|f(x, -s)||f(x, j\Delta-s)|dx ds\\
&\leq 
\int_{\R \times \R}|f(x, -s)|\sum_{j=-\infty}^{\infty}|f(x, j\Delta-s)|dx ds\\
&=\int_{\R \times \R}|f(x, -s)|F_{\Delta}(x, -s)dx ds
\\
&=\int_{\R \times \R}|f(x, s)|F_{\Delta}(x, s)dx ds
\\
&=\sum_{j=-\infty}^{\infty}\int_{\R \times [0,\Delta]}|f(x, j\Delta +s)|F_{\Delta}(x, s)dx ds
\\
&=\int_{\R \times [0,\Delta]} \sum_{j=-\infty}^{\infty}|f(x, j\Delta +s)|F_{\Delta}(x, s)dx ds\\
&=\int_{\R \times [0,\Delta]} F_{\Delta}^2(x, s)dx ds< \infty.
\end{split}
\end{align}
The above computations can be repeated without the modulus, which implies that
$\sum_{j=-\infty}^{\infty}\gamma_f(j \Delta)=V_{\Delta}$.

To simplify the exposition, we shall now assume that $\mu=0$. We proceed as in \cite{CohenLindner2013}.
Define the function
$f_{m;\Delta}(x,s):=f(x,s)\mathbb{I}_{(-m \Delta, m \Delta)}(s)$,
for $m \in \N, x, s \in \R$, and set
\begin{align*}
Y_{j; \Delta}^m&:=\int_{\R \times \R}f_{m; \Delta}(x,s)L(dx, ds)=
\int_{\R \times (-m \Delta, m \Delta)}f(x, s) L(dx, ds)
\\
&=\int_{\R \times ((-m+j)\Delta, (m+j)\Delta)}f(x, j\Delta-s) L(dx, ds).
\end{align*} %%CHECK SIGNS
Since $L$ is independently scattered, we can deduce that $(Y_{j;\Delta}^{(m)})_{j \in \mathbb{Z}}$ is a $(2m-1)$-dependent sequence, which is also strictly stationary.
Hence, by \citet[Theorem 6.4.2]{BrockwellDavis1987}, we know that
\begin{align*}
\sqrt{n} \; \overline{Y}_{n; \Delta}^{(m)}  = n^{-1/2}\sum_{j=1}^n Y_{j;\Delta}^{(m)}\stackrel{\mathrm{d}}{\to}Z^{(m)}_{\Delta},
\quad \mathrm{as} \; n \to \infty,
\end{align*}
where the random variable $Z^{(m)}_{\Delta}$ satisfies
$Z^{(m)}_{\Delta}\stackrel{\mathrm{d}}{=}\mathrm{N}(0, V^{(m)})$, where
\begin{align*}
V^{(m)}_{\Delta}=\sum_{j=-2m}^{2m}\gamma_{f_m}(j\Delta),
\end{align*}
for 
\begin{align*}
\gamma_{f_m}(j\Delta) &=\Cov(Y_{0;\Delta}^{(m)}, Y_{j;\Delta}^{(m)})=\kappa_2\int_{\R \times \R}f_{m;\Delta}(x, -s)f_{m;\Delta}(x, j\Delta-s)dx ds\\
&=\int_{\R \times ((-m+j)\Delta, (m+j)\Delta)}f(x, -s) f(x, j\Delta-s) dx ds.
\end{align*}
We observe that $\lim_{m\to \infty}\gamma_{f_m}(j\Delta)=\gamma_{f}(j\Delta)$ for all $j \in \mathbb{Z}$; also
\begin{align*}
|\gamma_{f_m}(j\Delta)|\leq \kappa_2  \int_{\R \times \R}|f(x, -s)| |f(x, j\Delta-s)| dx ds,
\end{align*}
and
$\sum_{j=-\infty}^{\infty} \int_{\R \times \R}|f(x, -s)| |f(x, j\Delta-s)|<\infty$ by the computations in  \eqref{eq:bound}.
Hence, Lebesgue's Dominated Convergence Theorem implies that $\lim_{m\to \infty}V^{(m)}_{\Delta}=V_{\Delta}$ and we get that
$Z_{\Delta}^{(m)}\stackrel{\mathrm{d}}{\to} Z_{\Delta}$, where
$Z \stackrel{\mathrm{d}}{=} \mathrm{N}(0, V_{\Delta})$.

It remains to control the difference $n^{1/2}(\overline{Y}_{n; \Delta}-\overline{Y}_{n; \Delta}^{(m)})$. We argue as follows. Using similar arguments as above, we note that
$\lim_{m\to \infty}\sum_{j=-\infty}^{\infty}\gamma_{f-f_{m;\Delta}}(j\Delta) =0$. Hence, we have that
\begin{align*}
&\lim_{m\to \infty} \lim_{n\to \infty} \Var(n^{1/2}(\overline{Y}_{n; \Delta}-\overline{Y}_{n; \Delta}^{(m)}))
\\
&=\lim_{m\to \infty} \lim_{n\to \infty} n\Var\left(n^{-1}\sum_{j=1}^n\int_{\R \times \R}(f(x, j\Delta-s)-f_{m;\Delta}(x, j\Delta-s))L(dx, ds)\right)\\
&\stackrel{(\star)}{=}\lim_{m\to \infty} \sum_{j=-\infty}^{\infty} \gamma_{f-f_{m;\Delta}}(j\Delta)=0,
\end{align*}
where the equality $(\star)$ follows from \citet[Theorem 7.1.1]{BrockwellDavis1987}.
Chebychef's inequality allows us to conclude that, for any $\epsilon >0$, 
\begin{align*}
\lim_{m\to \infty} \limsup_{n\to \infty} \mathbb{P}(n^{1/2}|\overline{Y}_{n; \Delta}-\overline{Y}_{n; \Delta}^{(m)}|>\epsilon)=0. 
\end{align*}
As stated in 
\cite{CohenLindner2013}, the final step of the proof consists of an application of a Slutsky-type theorem as presented in \citet[Proposition 6.3.9]{BrockwellDavis1987}.
\end{proof}

%%%%%%%%%%%%%%%%%%%%%%%%%%%%%%%%%%
%%%%%%%%%%%%%%%%%%%%%%%%%%%%%%%%%%%%
\begin{proof}[Proof of Lemma \ref{lem:fourthmoment}]
For $t_1, t_2, t_3, t_4 \in \R$, we have, for any $a_1, a_2, a_3, a_4 \in \R$,
the following expression for the joint characteristic function
\begin{align*}
&\psi((a_1,a_2,a_3,a_4);(Y_{t_1},Y_{t_2},Y_{t_3},Y_{t_4})):=
\mathbb{E}(\exp(i(a_1Y_{t_1}+a_2Y_{t_2}+a_3Y_{t_3}+a_4Y_{t_4}))\\
&=\mathbb{E}\left[\exp\left(i \int_{\R\times \R}\left(\sum_{j=1}^4a_jf(x, t_j-s)\right)L(ds, dx)\right) \right]\\
&=
\exp\left[ 
\int_{\R \times \R} C\left(\sum_{j=1}^4a_jf(x, t_j-s); L' \right) dx ds
\right],
\end{align*}
where $C(\cdot; L')$ denotes the cumulant function of the L\'{e}vy seed $L'$, which we will present next.

Suppose $L'$ has characteristic triplet $(c, A, \nu)$ w.r.t.~the truncation function $\tau(y)=\mathbb{I}_{[-1,1]}(y)$. I.e.~we have the following representation for its characteristic function, for any $\theta \in \R$,
\begin{align*}
\mathbb{E}(\exp(i \theta L'))=\exp\left(i c \theta -\frac{1}{2}A \theta^2 + \int_{\R }(e^{iy \theta}-1-i\theta y \tau(y))\nu(dy)\right).
\end{align*}
We recall that $\mathbb{E}(L')=c +\int_{\R}y(1-\tau(y))\nu(dy)$. Since we are assuming that $\mathbb{E}(L')=0$, we get that $c=-\int_{\R}y(1-\tau(y))\nu(dy)=-\int_{\R}y\mathbb{I}_{[-1,1]^c}(y)\nu(dy)$ and, hence, 
\begin{align*}
\mathbb{E}(\exp(i \theta L'))=\exp\left( -\frac{1}{2}A \theta^2 + \int_{\R }(e^{iy \theta}-1-i\theta y )\nu(dy)\right).
\end{align*}
I.e.~the corresponding cumulant function is given by
\begin{align*}
C(\theta; L')= -\frac{1}{2}A \theta^2 + \int_{\R }(e^{iy \theta}-1-i\theta y )\nu(dy).
\end{align*}
Moreover,
\begin{align*}
&C((a_1,a_2,a_3,a_4);(Y_{t_1},Y_{t_2},Y_{t_3},Y_{t_4}))
:=\int_{\R \times \R} C\left(\sum_{j=1}^4a_jf(x, t_j-s); L' \right) dx ds\\
&=
-\frac{1}{2}A 
\int_{\R \times \R}
\left(\sum_{j=1}^4a_jf(x, t_j-s)\right)^2
dx ds
\\
& \quad +\int_{\R \times \R} \int_{\R }\left(e^{iy \sum_{j=1}^4a_jf(x, t_j-s)}-1-i\sum_{j=1}^4a_jf(x, t_j-s) y \right)\nu(dy)
dx ds\\
&=
-\frac{1}{2}A 
\sum_{j,k=1}^4 a_j a_k \int_{\R \times \R} f(x, t_j-s) f(x, t_k-s)
dx ds
\\
& \quad +\int_{\R \times \R} \int_{\R }\left(e^{iy \sum_{j=1}^4a_jf(x, t_j-s)}-1-i\sum_{j=1}^4a_jf(x, t_j-s) y \right)\nu(dy)
dx ds\\
&=
-\frac{1}{2}A 
\sum_{j=1}^4 a_j^2 \int_{\R \times \R} f^2(x, t_j-s) 
dx ds
-\frac{1}{2}A 
\sum_{j,k=1, j\not = k}^4 a_j a_k \int_{\R \times \R} f(x, t_j-s) f(x, t_k-s)
dx ds
\\
& \quad +\int_{\R \times \R} \int_{\R }\left(e^{iy \sum_{j=1}^4a_jf(x, t_j-s)}-1-i\sum_{j=1}^4a_jf(x, t_j-s) y \right)\nu(dy)
dx ds.
\end{align*}

Next, we compute the fourth moments, where we recall that 
\begin{align*}
\left.\frac{\partial^4}{\partial a_1 \partial a_2 \partial a_3 \partial a_4}\psi((a_1,a_2,a_3,a_4);(Y_{t_1},Y_{t_2},Y_{t_3},Y_{t_4}))\right|_{a_1=a_2=a_3=a_4=0}= \mathbb{E}(Y_{t_1}Y_{t_2}Y_{t_3}Y_{t_4}).
\end{align*}
We now abbreviate the functions to $\psi$ and $C$ without stating their arguments and a subscript denotes the corresponding partial derivative, e.g.~$C_{a_1}=\frac{\partial}{\partial a_1}C((a_1,a_2,a_3,a_4);(Y_{t_1},Y_{t_2},Y_{t_3},Y_{t_4})$ and similarly for higher order partial derivatives.
Since $\psi = \exp(C)$, we have
\begin{align*}
\psi_{a1} &= \psi C_{a_1},\\
\psi_{a_1,a_2}
&= \psi [ C_{a_1, a_2}+ C_{a_1} C_{a_2}],\\
\psi_{a_1,a_2,a_3}
&= \psi [ (C_{a_1, a_2}+ C_{a_1} C_{a_2}) C_{a_3}
+C_{a_1, a_2, a_3} +C_{a_1, a_3}C_{a_2}+C_{a_1}C_{a_2, a_3}
]\\
&= \psi [ C_{a_1} C_{a_2}C_{a_3} 
+C_{a_1}C_{a_2, a_3} 
+C_{a_2} C_{a_1, a_3}
+C_{a_3} C_{a_1, a_2} +  
  +C_{a_1, a_2, a_3}
],\\
\psi_{a_1,a_2,a_3,a_4}
&= \psi [  C_{a_1} C_{a_2}C_{a_3} C_{a_4} 
+C_{a_1}C_{a_2, a_3} C_{a_4} 
+C_{a_2} C_{a_1, a_3}C_{a_4} 
+C_{a_3} C_{a_1, a_2} C_{a_4}   
  +C_{a_1, a_2, a_3}C_{a_4} 
\\
&
+
 C_{a_1, a_4} C_{a_2}C_{a_3} 
 + C_{a_1} C_{a_2, a_4}C_{a_3} 
  + C_{a_1} C_{a_2}C_{a_3, a_4} 
  +
  C_{a_1, a_4}C_{a_2, a_3}
  +
  C_{a_1}C_{a_2, a_3, a_4}
  \\
  &
  +
  C_{a_2, a_4} C_{a_1, a_3}
  +
  C_{a_2} C_{a_1, a_3, a_4}
  +
  C_{a_3,a_4} C_{a_1, a_2}
  +
  C_{a_3} C_{a_1, a_2, a_4}
  +C_{a_1, a_2, a_3, a_4}
].
\end{align*}
Here we have
\begin{align*}
C_{a_i}|_{a_1=a_2=a_3=a_4=0}&=0, \quad \mathrm{for}\; i=1, 2, 3, 4,\\
C_{a_i,a_j}|_{a_1=a_2=a_3=a_4=0}
&=-\left(A+ \int_{\R} y^2 \nu(dy)\right) \int_{\R \times \R} f(x, t_i-s) f(x, t_j-s)
dx ds, \quad \mathrm{for}\, i, j = 1, 2, 3, 4,\\
C_{a_1,a_2,a_3,a_4}&=\int_{\R\times \R}\prod_{j=1}^4 f(x, t_j-s)dxds \int_{\R}y^4\nu(dy).
\end{align*}
The above results imply that 
\begin{align*}
&\mathbb{E}(Y_{t_1}Y_{t_2}Y_{t_3}Y_{t_4})\\
&= \left(A+ \int_{\R} y^2 \nu(dy)\right)^2
\\
&\left(
\int_{\R \times \R} f(x, t_1-s) f(x, t_2-s)dx ds 
\int_{\R \times \R} f(x, t_3-s) f(x, t_4-s)dx ds
\right .\\
&
+
\int_{\R \times \R} f(x, t_1-s) f(x, t_3-s)dx ds 
\int_{\R \times \R} f(x, t_2-s) f(x, t_4-s)dx ds \\
&\left.
+\int_{\R \times \R} f(x, t_1-s) f(x, t_4-s)dx ds
\int_{\R \times \R} f(x, t_2-s) f(x, t_3-s)dx ds
\right)\\
& + \int_{\R} y^4 \nu(dy)
\int_{\R \times \R} f(x, t_1-s) f(x, t_2-s)
f(x, t_3-s) f(x, t_4-s) dx ds.
\end{align*}
We note that $\kappa_4:=\int_{\R}y^4\nu(dy)=(\eta-3)\kappa_2^2$ and $\kappa_2=A+\int_{\R}y^2\nu(dy)$

We can further simplify the above formula as follows:
\begin{align*}
&\mathbb{E}(Y_{t_1}Y_{t_2}Y_{t_3}Y_{t_4})\\
&= \left(A+ \int_{\R} y^2 \nu(dy)\right)^2
\\
&\left(
\int_{\R \times \R} f(x, t_1-t_2+s) f(x, s)dx ds 
\int_{\R \times \R} f(x, t_3-t_4+s) f(x, s)dx ds
\right .\\
&
+
\int_{\R \times \R} f(x, t_1-t_3+s) f(x, s)dx ds 
\int_{\R \times \R} f(x, t_2-t_4+s) f(x, s)dx ds \\
&\left.
+\int_{\R \times \R} f(x, t_1-t_4+s) f(x, s)dx ds
\int_{\R \times \R} f(x, t_2-t_3+s) f(x, s)dx ds
\right)\\
& + \int_{\R} y^4 \nu(dy)
\int_{\R \times \R} f(x, t_1-t_3+s) f(x, t_2-t_3)
f(x, s) f(x, t_4-t_3+s) dx ds\\
&= \gamma(t_1-t_2)\gamma(t_3-t_4) 
+
\gamma(t_1-t_3) \gamma(t_2-t_4)
+\gamma(t_1-t_4)\gamma(t_2-t_3) \\
& + \kappa_4
\int_{\R \times \R}f(x, t_1-t_3+s) f(x, t_2-t_3)
f(x, s) f(x, t_4-t_3+s) dx ds.
\end{align*}

\end{proof}

%%%%%%%%%%%%%%%%%%%%%%%%%%%%%%%%%%%%%%%%%%%%%%%%
%%%%%%%%%%%%%%%%%%%%%%%%%%%%%%%%%%%%%%%%%%%%%%%%%
\begin{proof}[Proof of Proposition \ref{prop:covlimit}]
We first expand the covariance of the sample autocovariances as follows
\begin{align*}
\Cov(\widehat{\gamma}_{n;\Delta}^*(\Delta p), \widehat{\gamma}_{n,\Delta}^*(\Delta q))
=\mathbb{E}(\widehat{\gamma}_{n;\Delta}^*(\Delta p) \widehat{\gamma}_{n;\Delta}^*(\Delta q)) - \mathbb{E}(\widehat{\gamma}_{n;\Delta}^*(\Delta p))
\mathbb{E}(\widehat{\gamma}_{n;\Delta}^*(\Delta q)),
\end{align*}
where
\begin{align*}
\mathbb{E}(\widehat{\gamma}_{n;\Delta}^*(\Delta p))
&=\frac{1}{n}\sum_{j=1}^{n}\mathbb{E}(Y_{j\Delta}Y_{(j+p)\Delta})= \frac{1}{n}\sum_{j=1}^{n}\gamma(p\Delta)=\gamma(p\Delta),\\
\mathbb{E}(\widehat{\gamma}_{n;\Delta}^*(\Delta q))
&=\frac{1}{n}\sum_{k=1}^{n}\mathbb{E}(Y_{k\Delta}Y_{(k+q)\Delta})= \frac{1}{n}\sum_{k=1}^{n}\gamma(q\Delta)=\gamma(q\Delta).
\end{align*}
Also,
\begin{align*}
&\mathbb{E}(\widehat{\gamma}_{n;\Delta}^*(\Delta p) \widehat{\gamma}_{n;\Delta}^*(\Delta q))
=\frac{1}{n^2}\sum_{j=1}^{n}\sum_{k=1}^{n}\mathbb{E}(
Y_{j\Delta}Y_{(j+p)\Delta} Y_{k\Delta}Y_{(k+q)\Delta}
),
\end{align*}
where
\begin{align*}
&\mathbb{E}(
Y_{j\Delta}Y_{(j+p)\Delta} Y_{k\Delta}Y_{(k+q)\Delta}
)\\
&=\gamma(p\Delta)\gamma(q\Delta) 
+
\gamma((k-j)\Delta) \gamma((j+p-k-q)\Delta)
+\gamma((j-k-q)\Delta)\gamma((j+p-k)\Delta) \\
& \quad + \kappa_4
\int_{\R \times \R} f(x, (j-k)\Delta+s) f(x, (j-k+p)\Delta+s)
f(x, s) f(x, q\Delta+s) dx ds\\
&=\gamma(p\Delta)\gamma(q\Delta) 
+
\gamma((j-k)\Delta) \gamma((j-k+p-q)\Delta)
+\gamma((j-k-q)\Delta)\gamma((j-k+p)\Delta) \\
& \quad + \kappa_4
\int_{\R \times \R}  f(x, (j-k)\Delta+s) f(x, (j-k+p)\Delta+s)
f(x, s) f(x, q\Delta+s) dx ds\\
&\stackrel{l=j-k}{=}\gamma(p\Delta)\gamma(q\Delta) 
+
\gamma(l\Delta) \gamma((l+p-q)\Delta)
+\gamma((l-q)\Delta)\gamma((l+p)\Delta) \\
& \quad + \kappa_4
\int_{\R \times \R}  f(x, l\Delta+s) f(x, (l+p)\Delta+s)
f(x, s) f(x, q\Delta+s) dx ds.
\end{align*}
Now we subtract $\gamma(p\Delta)\gamma(q\Delta)$, we set $l=j-k$, interchange the order of summation and use the stationarity to obtain
\begin{align*}
&\Cov(\widehat{\gamma}_{n;\Delta}^*(\Delta p), \widehat{\gamma}_{n;\Delta}^*(\Delta q))\\
&=\frac{1}{n^2}\sum_{j=1}^{n}\sum_{k=1}^{n}\mathbb{E}(
Y_{j\Delta}Y_{(j+p)\Delta} Y_{k\Delta}Y_{(k+q)\Delta}
)
-\gamma(p\Delta)\gamma(q\Delta)\\
&
=\frac{1}{n^2}\sum_{j=1}^{n}\sum_{k=1}^{n}\left(
\gamma((j-k)\Delta) \gamma((j-k+p-q)\Delta)
+\gamma((j-k-q)\Delta)\gamma((j-k+p)\Delta) \right.\\
& \quad + \left.\kappa_4
\int_{\R \times \R}  f(x, (j-k)\Delta+s) f(x, (j-k+p)\Delta+s)
f(x, s) f(x, q\Delta+s) dx ds\right)\\
&
=\frac{1}{n^2}\sum_{|l|<n}\sum_{k=1}^{n-|l|}\left(
\gamma(l\Delta) \gamma((l+p-q)\Delta)
+\gamma((l-q)\Delta)\gamma((l+p)\Delta)\right. \\
& \quad + \left. \kappa_4
\int_{\R \times \R}  f(x, l\Delta+s) f(x, (l+p)\Delta+s)
f(x, s) f(x, q\Delta+s) dx ds\right)\\
&
=\frac{1}{n^2}\sum_{|l|<n}(n-|l|)T_{l,p,q;\Delta}
=\frac{1}{n}\sum_{|l|<n}\left(1-\frac{|l|}{n}\right)T_{l,p,q;\Delta},
\end{align*}
where 
\begin{align*}
    T_{l,p,q;\Delta}&:=
\gamma(l\Delta) \gamma((l+p-q)\Delta)
+\gamma((l-q)\Delta)\gamma((l+p)\Delta) \\
& \quad +  \kappa_4
\int_{\R \times \R}  f(x, l\Delta+s) f(x, (l+p)\Delta+s)
f(x, s) f(x, q\Delta+s) dx ds.
\end{align*}
Hence, we have
\begin{align*}
\lim_{n\to \infty}n\Cov(\widehat{\gamma}_{n;\Delta}^*(\Delta p), \widehat{\gamma}_{n;\Delta}^*(\Delta q))
=\lim_{n\to \infty}\sum_{|l|<n}\left(1-\frac{|l|}{n}\right)T_{l,p,q;\Delta}
=\sum_{l=-\infty}^{\infty}T_{l,p,q;\Delta},
\end{align*}
where 
\begin{align*}
\sum_{l=-\infty}^{\infty}T_{l,p,q;\Delta}
&=
\sum_{l=-\infty}^{\infty}[\gamma(l\Delta) \gamma((l+p-q)\Delta)
+\gamma((l-q)\Delta)\gamma((l+p)\Delta)]
\\
&\quad +
\kappa_4
\int_{\R \times \R}\sum_{l=-\infty}^{\infty}
  f(x, l\Delta+s) f(x, (l+p)\Delta+s)
f(x, s) f(x, q\Delta+s) dx ds,
\end{align*}
by the Dominated Convergence Theorem since \eqref{eq:sumgamma2-cov} holds. More precisely, let us justify why $\sum_{l=-\infty}^{\infty}|T_{l,p,q;\Delta}|<\infty$.
The finiteness of 
$\sum_{l=-\infty}^{\infty}|\gamma(l\Delta) \gamma((l+p-q)\Delta)
+\gamma((l-q)\Delta)\gamma((l+p)\Delta)|$ follows from \eqref{eq:sumgamma2-cov}.
For the second term,  for $q \in \mathbb{Z}$, define
\begin{align*}
\left(\widetilde{G}_{q;\Delta}:\mathbb{R}\times [0, \Delta] \to \R, (x, u) \mapsto \widetilde{G}_{q;\Delta}(x,u)=\sum_{j=-\infty}^{\infty}|f(x, u+j\Delta)||f(x, u+(j+q)\Delta)|\right),
\end{align*}
 which is in $L^2(\R \times [0, \Delta])$ 
due to \eqref{as:sumf2b}.  We  consider the periodic continuation of $\widetilde{G}_{q;\Delta}$ and set 
\begin{align*}
\left(\widetilde{G}_{q;\Delta}:\mathbb{R}\times \R \to \R, (x, u) \mapsto \widetilde{G}_{q;\Delta}(x,u)=\sum_{j=-\infty}^{\infty}|f(x, u+j\Delta)||f(x, u+(j+q)\Delta)|\right).
\end{align*}
Since $\widetilde{G}_{q;\Delta}$ is periodic and, restricted to $\mathbb{R}\times [0, \Delta]$ square-integrable, we have
\begin{align*}
&\int_{\R \times \R} \sum_{l=-\infty}^{\infty}
  |f(x, l\Delta+s) f(x, (l+p)\Delta+s)|
|f(x, s) f(x, q\Delta+s)| dx ds\\
&=\int_{\R \times \R}\widetilde{G}_p(s)
|f(x, s) f(x, q\Delta+s)| dx ds
=\sum_{j=-\infty}^{\infty}\int_{\R \times [j\Delta, (j+1)\Delta]}\widetilde{G}_p(s)
|f(x, s) f(x, q\Delta+s)| dx ds\\
&=\sum_{j=-\infty}^{\infty}\int_{\R \times [0, \Delta]}\widetilde{G}_p(s+j\Delta)
|f(x, s+j\Delta) f(x, (q+j)\Delta+s)| dx ds\\
&=\sum_{j=-\infty}^{\infty}\int_{\R \times [0, \Delta]}\widetilde{G}_p(s)
|f(x, s+j\Delta) f(x, (q+j)\Delta+s)| dx ds\\
&=\int_{\R \times [0, \Delta]}\widetilde{G}_p(s)
 \sum_{j=-\infty}^{\infty}|f(x, s+j\Delta) f(x, (q+j)\Delta+s)| dx ds\\
 &=\int_{\R \times [0, \Delta]}\widetilde{G}_p(s)
 \widetilde{G}_q(s) dx ds<\infty.
\end{align*}
Equation \eqref{eq:sumT} follows from the same calculations as above without the modulus sign in the definition of $\widetilde{G}$.

\end{proof}
%%%%%%%%%%%%%%%%%%%%%%%%%%%%%%%%%%%%%%%%
%%%%%%%%%%%%%%%%%%%%%%%%%%%%%%%%%%%%%%%%%%
\begin{proof}[Proof of Theorem \ref{thm:acflimits}]
\begin{enumerate}
    \item %Part 1

For a function $f$ with compact support, the result can be deduced as in \citet[Proposition 7.3.2]{BrockwellDavis1987}. The general case can be handled as follows, where we adapt the proof of   \citet[Theorem 3.5]{CohenLindner2013} to our more general setting. 
As in the proof for the sample mean, define the function
$f_{m;\Delta}(x,s):=f(x,s)\mathbb{I}_{(-m \Delta, m \Delta)}(s)$,
for $m \in \N, x, s \in \R$, and set
\begin{align*}
Y_{j; \Delta}^m&:=\int_{\R \times \R}f_{m; \Delta}(x,s)L(dx, ds)
=\int_{\R \times ((-m+j)\Delta, (m+j)\Delta)}f(x, j\Delta-s) L(dx, ds).
\end{align*} %%CHECK SIGNS
We denote by $\gamma_m$ the autocovariance function of the process $(Y_{j; \Delta}^m)_{j\in \mathbb{Z}}$. We set
\begin{align*}
\gamma_{n;\Delta;m}^*(p\Delta)=\sum_{j=1}^n Y_{j; \Delta}^mY_{j+p; \Delta}^m, \quad p=0, \ldots, h.
\end{align*}
Then, we have
\begin{align*}
        \sqrt{n}(\gamma_{n;\Delta;m}^*(0)-\gamma_m(0), \ldots, \gamma_{n;\Delta;m}^*(h\Delta)-\gamma_m(h\Delta))^{\top} \stackrel{\mathrm{d}}{\to}Z_{\Delta;m}\sim \mathrm{N}(0,V_{\Delta;m}), \quad n \to \infty,
            \end{align*}
            where the asymptotic covariance matrix is given by $V_{\Delta;m}=(v_{pq; \Delta;m})_{p,q=0,\ldots,h}\in \R^{h+1,h+1}$ with $v_{pq;\Delta;m}$ defined as 
\begin{align*}
    v_{pq;\Delta;m}&:=(\eta-3)\kappa_2^2\int_{\R\times[0,\Delta]}G_{p;\Delta;m}(x,u)G_{q;\Delta;m}(x,u)dxdu \\
&+\sum_{l=-\infty}^{\infty}[\gamma_m(l\Delta)\gamma_m((l+p-q)\Delta)
+
\gamma_m((l-q)\Delta)\gamma_m((l+p)\Delta),
] \\
G_{q;\Delta;m}(x,u)&:=\sum_{j=-\infty}^{\infty}f_m(x, u+j\Delta)f_m(x, u+(j+q)\Delta); \quad u \in [0, \Delta].
\end{align*}

We would like to show that $\lim_{m\to \infty}V_{\Delta;m}=V_{\Delta}$.        
For this, we note that 
\begin{align*}
   G_{q;\Delta;m}(x,u)&=\sum_{j=-\infty}^{\infty}f_m(x, u+j\Delta)f_m(x, u+(j+q)\Delta)\\
   &
  \to  G_{q;\Delta}(x,u)=\sum_{j=-\infty}^{\infty}f(x, u+j\Delta)f(x, u+(j+q)\Delta),
\end{align*}
 uniformly in $u \in [0, \Delta]$, as $m\to \infty$, by Lebesgue's Dominated Convergence Theorem, since the function
 $(x,u)\mapsto \sum_{j=-\infty}^{\infty}|f(x, u+j\Delta)||f(x, u+(j+q)\Delta)|$ is in $L^2(\R \times [0, \Delta])$ by 
\eqref{as:sumf2b} 
 and is therefore almost surely finite.
 Moreover, we note that
 \begin{align*}
|G_{q;\Delta;m}(x,u)|&\leq\sum_{j=-\infty}^{\infty}|f(x, u+j\Delta)||f(x, u+(j+q)\Delta)|, 
 \end{align*}
uniformly in $u$ and $m$. Hence, an application of the Dominated Convergence Theorem leads to $G_{q;\Delta;m}\to G_{q;\Delta}$ in $L^2(\R \times [0, \Delta])$ as $m \to \infty$.
We also note that 
  \begin{align*}
|\gamma_{m}(j\Delta)|&\leq \int_{\R\times \R}|f(x, u)||f(x, u+j\Delta)|dxdu, \quad \forall m \in \mathbb{N}, \forall j \in \mathbb{Z}.
 \end{align*}
 Moreover, $\lim_{m\to \infty}\gamma_m(j\Delta)=\gamma(j\Delta)$ for all $j \in \mathbb{Z}$. 
 Assumption \eqref{as:finsumsquint} together with the Dominated Convergence Theorem allows us to conclude that $(\gamma_m(j\Delta))_{j \in \mathbb{Z}}$ converges in $L^2(\mathbb{Z})$ to $(\gamma(j\Delta))_{j \in \mathbb{Z}}$. Combining this result with our earlier finding of the convergence of $G_{q;\Delta;m}$ implies that $\lim_{m\to \infty}V_{\Delta;m}=V_{\Delta}$ and 
 \begin{align*}
Z_{\Delta;m}\stackrel{\mathrm{d}}{\to} Z_{\Delta}, \quad m \to \infty, 
 \end{align*}
 where $Z_{\Delta}\stackrel{\mathrm{d}}{=}\mathrm{N}(0, V_{\Delta})$.

 Now, using the same arguments as in the proof of \citet[Equation (7.3.9)]{BrockwellDavis1987}, we can show that 
 \begin{align*}
\lim_{m\to \infty}\limsup_{n\to \infty}\mathbb{P}(n^{1/2}|\gamma_{n;\Delta;m}^*(q\Delta)-\gamma_m(q\Delta)-\gamma^*(q \Delta)+\gamma(q\Delta)|>\epsilon)=0,  
 \end{align*}
 for all $\epsilon >0, q\in \{0, \ldots, h\}$.
 An application of  a variant of Slutsky's theorem, see \citet[Proposition 6.3.9]{BrockwellDavis1987} completes the proof.
\item %Part 2
  This part can be proven in similar way as the proof of \citet[Proposition 7.3.4]{BrockwellDavis1987}. Also, as in the proof of \citet[Theorem 3.5 b)]{CohenLindner2013}, we observe that $\sqrt{n} \overline{Y}_{n;\Delta}$ converges to a Gaussian random variable as $n \to \infty$ due to Theorem  \ref{thm:samplemean} and $\overline{Y}_{n;\Delta}$ converges to 0 in probability as $n \to \infty$ (since we assume here that $\mu=0).$
\item %Part 3
For the final part of the theorem, we can argue as in the proof of \citet[Theorem 7.2.1]{BrockwellDavis1987}, where the $w_{pq;\Delta}$ are obtained via the Bartlett formula
\begin{align*}
w_{pq;\Delta}=(v_{pq;\Delta}-\rho(p\Delta)v_{0q;\Delta}
-\rho(q\Delta)v_{p0;\Delta}
+\rho(p\Delta)\rho(q\Delta)v_{00;\Delta})/\gamma^2(0).
\end{align*}
We can simplify the above formula and write
\begin{align*}
    w_{pq;\Delta}=w_{pq;\Delta}^{(1)}+w_{pq;\Delta}^{(2)},
\end{align*}
where
\begin{align*}
  w_{pq;\Delta}^{(1)}&:=\frac{(\eta-3)\kappa_2^2}{\gamma^2(0)}  \int_{\R \times [0,\Delta]}
  (G_{p;\Delta}(x,u)-G_{0;\Delta}(x,u)\rho(p\Delta))\\
  &\quad \cdot
  (G_{q;\Delta}(x,u)-G_{0;\Delta}(x,u)\rho(q\Delta))dxdu,
\end{align*}
and
\begin{align*}
 w_{pq;\Delta}^{(2)}&:=
 \sum_{l=-\infty}^{\infty}
 \left[
 \rho(l\Delta)\rho((l+p-q)\Delta)+\rho((l-q)\Delta)\rho((l+p)\Delta) \right.\\
 &
 -2\rho(l\Delta)\rho((l-q)\Delta)\rho(p\Delta)
 -2\rho(l\Delta)\rho((l+p)\Delta)\rho(q\Delta)\\
 &\left. 
 +2\rho(p\Delta)\rho(q\Delta)\rho^2(l\Delta)\right].
\end{align*}
Note that 
\begin{align*}
   \sum_{l=-\infty}^{\infty}
  \rho(l\Delta)\rho((l+p-q)\Delta) 
  &= \sum_{l=-\infty}^{\infty}
  \rho((l+q)\Delta)\rho((l+p)\Delta),\\
  \sum_{l=-\infty}^{\infty}
  \rho(l\Delta)\rho((l-q)\Delta)\rho(p\Delta)
  &=
  \sum_{l=-\infty}^{\infty}
  \rho((l+q)\Delta)\rho(l\Delta)\rho(p\Delta).
\end{align*}
Hence,
    \begin{align*}
 w_{pq;\Delta}^{(2)}&=
 \sum_{l=-\infty}^{\infty}
 \left[
 \rho((l+q)\Delta)\rho((l+p)\Delta)+\rho((l-q)\Delta)\rho((l+p)\Delta) \right.\\
 &
 -2\rho((l+q)\Delta)\rho(l\Delta)\rho(p\Delta)
 -2\rho(l\Delta)\rho((l+p)\Delta)\rho(q\Delta)\\
 &\left. 
 +2\rho(p\Delta)\rho(q\Delta)\rho^2(l\Delta)\right].
\end{align*}
Hence,
\begin{align*}
  w_{pq;\Delta}&=
  \frac{(\eta-3)\kappa_2^2}{\gamma^2(0)}  \int_{\R \times [0,\Delta]}
  (G_{p;\Delta}(x,u)-G_{0;\Delta}(x,u)\rho(p\Delta))\\
  &\quad \cdot
  (G_{q;\Delta}(x,u)-G_{0;\Delta}(x,u)\rho(q\Delta))dxdu\\
  &+\sum_{l=-\infty}^{\infty}
 \left[
 \rho((l+q)\Delta)\rho((l+p)\Delta)+\rho((l-q)\Delta)\rho((l+p)\Delta) \right.\\
 &
 -2\rho((l+q)\Delta)\rho(l\Delta)\rho(p\Delta)
 -2\rho(l\Delta)\rho((l+p)\Delta)\rho(q\Delta)\\
 &\left. 
 +2\rho(p\Delta)\rho(q\Delta)\rho^2(l\Delta)\right].
\end{align*}
\end{enumerate}
\end{proof}
%%%%%%%%%%%%%%%%%%%%%%%%%%%%%%%%%%%%%%
%%%%%%%%%%%%%%%%%%%%%%%%%%%%%%%%%%%%%%%%

%%%%%%%%%%%%%%%%%%%%%%%%%%%%%%%%%%%%%%%%%%
%%%%%%%%%%%%%%%%%%%%%%%%%%%%%%%%%%%%%%%%%%
\subsection{Examples}
In this subsection, we show how the asymptotic variances appearing in the asymptotic theory for the sample mean can be computed for trawl processes with either an exponential or a supGamma trawl function. 

Note that, in the case when $p\equiv 1$, i.e.~for the (non-periodic) trawl process, we get
\begin{align*}
\gamma(h)=\Cov(Y_t, Y_{t+h})=\int_{|h|}^{\infty}g(u)du,
\end{align*}
for $t, h \in \R$.

\subsubsection{Exponential trawl} Consider the case of an exponential trawl function with 
$g(x)=\exp(-\lambda x)$, for $x \geq 0$.
The autocovariance function is given by 
\begin{align*}
\gamma(t)=\Cov(Y_t, Y_0)=\frac{1}{\lambda}e^{-\lambda |t|},
\end{align*}
for $t\in \R$,
and the 
autocorrelation function is given by 
\begin{align*}
\rho(t)=\Cor(Y_t, Y_0)=e^{-\lambda |t|},
\end{align*}
for $t\in \R$.

For the sample mean, we have the following result. 
Suppose that $\mathbb{E}(L')=0, \kappa_2=\Var(L')<\infty, \mu \in \mathbb{R}$ and $\Delta >0$.
Then 
\begin{align*}
\left(F_{\Delta}:\mathbb{R}\times [0, \Delta] \to [0, \infty], (x, u) \mapsto F_{\Delta}(x,u)=\sum_{j=-\infty}^{\infty}|f(x, u+j\Delta)|\right) \in L^2(\R \times [0, \Delta])
\end{align*}
since
\begin{align*}
F_{\Delta}(x,u)=\sum_{j=-\infty}^{\infty}|f(x, u+j\Delta)|
=\sum_{j=-\infty}^{\infty}\indicator_{(0,  g(u+j\Delta)}(x),
\end{align*}
and we have that 
\begin{align*}
\int_{\R \times [0, \Delta]}|F_{\Delta}(x,u)|^2dx du
&=\int_{\R \times [0, \Delta]}
\sum_{j=-\infty}^{\infty}\sum_{k=-\infty}^{\infty}\indicator_{(0, g(u+j\Delta)}(x)
\indicator_{(0, g(u+k\Delta)}(x)dx du
\\
&=\sum_{j=-\infty}^{\infty}\sum_{k=-\infty}^{\infty}\int_{\R \times [0, \Delta]}
\indicator_{(0, g(u+j\Delta)}(x)
\indicator_{(0, g(u+k\Delta)}(x)dx du\\
&=\sum_{j=0}^{\infty}\sum_{k=0}^{\infty}\int_{ [0, \Delta]}
\min\{g(u+j\Delta),g(u+k\Delta)\} du\\
&=\sum_{j=0}^{\infty}\int_{ [0, \Delta]}
g(u+j\Delta) du
+2 \sum_{j=0}^{\infty}\sum_{k=j+1}^{\infty}\int_{ [0, \Delta]}
g(u+k\Delta) du
\end{align*}
where we applied Tonelli's theorem.

Then $\sum_{j=-\infty}^{\infty} |\gamma(\Delta j)| < \infty$,
\begin{align}\label{as:sumgamma}
V_{\Delta}:=\sum_{j=-\infty}^{\infty}\gamma( \Delta j) = \kappa_2 \int_{\R \times [0, \Delta]} \left(\sum_{j=1}^{\infty}f(x, u+j\Delta)\right)^2 dx du, 
\end{align}
and the sample mean of $Y_{\Delta i}$, for $i=1, \ldots, n$, is asymptotically Gaussian as $n \to \infty$, i.e.
\begin{align*}
\sqrt{n}\left(\overline{Y}_{n; \Delta} - \mu \right)
\stackrel{\mathrm{d}}{\to} \mathrm{N}\left(0, V_{\Delta}\right), \quad \mathrm{as} \, n \to \infty.
\end{align*}

For the case of an exponential trawl function, we get
\begin{align*}
V_{\Delta}&=\sum_{j=-\infty}^{\infty}\gamma(\Delta j)
=\sum_{j=-\infty}^{\infty}\frac{1}{\lambda}e^{-\lambda |j|\Delta}
=\frac{1}{\lambda}\left(1+2\sum_{j=1}^{\infty}e^{-\lambda j \Delta}\right)
=\frac{1}{\lambda}\left(1+2\frac{e^{-\lambda  \Delta}}{1-e^{-\lambda  \Delta}}\right)\\
&=\frac{1+e^{-\lambda  \Delta}}{\lambda(1-e^{-\lambda  \Delta})}=\frac{1+e^{\lambda  \Delta}}{\lambda(e^{\lambda  \Delta}-1)}.
\end{align*}

\subsubsection{SupGamma trawl}
In the case when
$g(x)=(1+x/\alpha)^{-H}$, $\alpha>0, H>2$, i.e.~we require a short-memory setting, $x \geq 0$, we have
\begin{align*}
\gamma(h)=\int_{|h|}^{\infty}g(x)dx=\frac{\alpha}{H-1}\left(1+\frac{|h|}{\alpha}\right)^{1-H}.
\end{align*}
Then
\begin{align*}
V_{\Delta}&=\sum_{j=-\infty}^{\infty}\gamma(j \Delta)
=\frac{\alpha}{H-1}\sum_{j=-\infty}^{\infty}
\left(1+\frac{|j|\Delta}{\alpha}\right)^{1-H}
\\
&=\frac{\alpha}{H-1}\left(\frac{\Delta}{\alpha}\right)^{1-H} \sum_{j=-\infty}^{\infty}
\left(\frac{\alpha}{\Delta}+|j|\right)^{1-H}
\\
&=\frac{\alpha}{H-1}\left(\frac{\Delta}{\alpha}\right)^{1-H}
\left(\frac{\alpha}{\Delta} \right)^{1-H} \left[ 
2 \left(\frac{\alpha}{\Delta} \right)^{H-1}\zeta(H-1, \alpha/\Delta)-1\right]\\
&=\frac{\alpha}{H-1} \left[ 
2 \left(\frac{\alpha}{\Delta} \right)^{H-1}\zeta(H-1, \alpha/\Delta)-1\right],
\end{align*}
where $\zeta$ denotes the Hurwitz Zeta function defined by
$ \zeta(s, a)=\sum_{k=0}^{\infty}\frac{1}{(k+a)^s}$,
for $\mathrm{Re}(s)>1$.

\subsection{Verifying the assumptions of Theorem \ref{thm:acflimits} for selected periodic trawl processes}\label{sec:asscheck}
For the applications discussed in Section \ref{sec:inference}, we need to verify the condition \eqref{as:sumf2b}
from Proposition \ref{prop:covlimit} 
and Assumption \eqref{as:finsumsquint} from 
Theorem \ref{thm:acflimits}
    assuming that the corresponding moment assumptions for the L\'{e}vy seed hold. 

For both conditions, it is sufficient to check that a (non-periodic) trawl process satisfies the stated conditions since the periodic function is bounded. Hence, in the following, we shall set $p\equiv 1$.

\subsubsection{Verifying Condition \eqref{as:sumf2b}
from Proposition \ref{prop:covlimit} }
We need to check that \begin{align*}
\left(\mathbb{R}\times [0, \Delta] \to \R, (x, u) \mapsto \sum_{j=-\infty}^{\infty}f^2(x, u+j\Delta)\right) \in L^2(\R \times [0, \Delta]). 
\end{align*}
    This condition holds for trawl processes if, for $(x, u)\in \mathbb{R}\times [0, \Delta]$, 
\begin{align*}
 \sum_{j=-\infty}^{\infty}f^2(x, u+j\Delta) 
 =\sum_{j=-\infty}^{\infty}f(x, u+j\Delta)
 =\sum_{j=-\infty}^{\infty}\indicator_{(0, g(u+j\Delta))}(x) \indicator_{[0, \infty)}(u+j\Delta)
 =F_{\Delta}(x,u)\in L^2(\R \times [0, \Delta]).
\end{align*}
    This is equivalent to checking that 
    \begin{align}\begin{split}\label{fin}
       & \int_{\R \times [0, \Delta]}|F_{\Delta}(x,u)|^2dxdu
       \\
       &= \int_{\R \times [0, \Delta]}
        \sum_{j=-\infty}^{\infty}\sum_{k=-\infty}^{\infty}\indicator_{(0, g(u+j\Delta))}(x) \indicator_{[0, \infty)}(u+j\Delta)
        \indicator_{(0, g(u+k\Delta))}(x) \indicator_{[0, \infty)}(u+k\Delta)
        dxdu\\
        &\propto \sum_{j=-\infty}^{\infty}\gamma(j \Delta) < \infty.\end{split}
    \end{align}
    This condition is satisfied both for an exponential trawl function and also for a supGamma trawl function with short memory. In the latter case, we have that $\gamma(x) \propto (1+|x|/\alpha)^{1-H}$ for $\alpha>0, H>2$. Then, the finiteness of \eqref{fin} follows using the   $\zeta$-function representation.

    \subsubsection{Verifying  Assumption \eqref{as:finsumsquint} from 
Theorem \ref{thm:acflimits} }
We need to verify 
\begin{align*}
    \sum_{j=-\infty}^{\infty}\left(\int_{\R \times \R } |f(x, u)||f(x, u+j\Delta)| dxdu\right)^2<\infty.
    \end{align*}
    Using very similar computations as before, we find that the above condition is equivalent to 
    \begin{align*}
    &\sum_{j=-\infty}^{\infty}\left(\int_{\R \times \R } |f(x, u)||f(x, u+j\Delta)| dxdu\right)^2\\
    &=\sum_{j=-\infty}^{\infty}\left(\int_{\R \times \R } 
    \indicator_{(0, g(u)}(x) \indicator_{[0, \infty)}(u)
    \indicator_{(0, g(u+j\Delta))}(x) \indicator_{[0, \infty)}(u+j\Delta)
     dxdu\right)^2
    \propto\sum_{j=-\infty}^{\infty}\gamma^2(j\Delta)<\infty,
    \end{align*}
    which is  satisfied  by  the exponential trawl function and the supGamma trawl functions with $H>3/2$, which includes some long-memory settings.

\end{appendix}

\section*{Acknowledgement}
I would like to thank Paul Doukhan for suggesting a study of periodic trawl processes and for helpful discussions, as well as Michele Nguyen for commenting on an earlier version of this article. 
\bibliographystyle{agsm}
\bibliography{trawlbib}

\end{document}